\newcommand{\Z}{\mathbb{Z}}
\newcommand{\R}{\mathbb{R}}
\newcommand{\C}{\mathbb{C}}
\newcommand{\Bc}{\mathcal{B}}
\newcommand{\Cc}{\mathcal{C}}
\newcommand{\Hc}{\mathcal{H}}
\newcommand{\demi}{\mathbb{H}}
\newcommand{\ess}{{\rm{ess}}}
\newcommand{\dhlim}{d_\demi\!\! - \!\!\!\lim}
\newcommand{\NZ}{\mathbb{G}}
\newcommand{\td}{\tilde{d}}
\newcommand{\pprime}{{\prime \prime}}
\def\build#1_#2^#3{\mathrel{\mathop{\kern 0pt#1}\limits_{#2}^{#3}}}
\newtheorem{theorem}{Theorem}[section]
\newtheorem{proposition}[theorem]{Proposition}
\newtheorem{lemma}[theorem]{Lemma}
\newtheorem{remark}[theorem]{Remark}
\newtheorem{corollary}[theorem]{Corollary}
\numberwithin{equation}{section}
\def \e{\varepsilon}
\def \bone{\mathbf{1}}
\def \rmi{{\rm i}}
\title{On the a.c.\ spectrum of the 1D discrete Dirac operator}
\begin{document}
\author{Sylvain Gol\'enia}
\address{Institut de Mathematiques de Bordeaux
Universite Bordeaux 1
351, cours de la Lib\'eration
F-33405 Talence cedex}
\email{sylvain.golenia@math.u-bordeaux1.fr}
\author{Tristan Haugomat}
\address{Universit\'e de Rennes 1,
263 avenue du Général Leclerc
CS 74205 - 35042 RENNES CEDEX \\
\'Ecole normale sup\'erieure de Cachan
Antenne de Bretagne,
Campus de Ker Lann 
Avenue Robert Schuman
35170 Bruz - France}
\email{tristan.haugomat@etudiant.univ-rennes1.fr}
\subjclass[2000]{47B36, 39A70, 46N50, 47A10, 81Q10}
\keywords{discrete Dirac, ac spectrum, $1$-dimensional, Jacobi matrix}
\begin{abstract}
In this paper, under some integrability condition, we prove 
that an electrical perturbation of the discrete Dirac operator has 
purely absolutely continuous spectrum for the one dimensional
case. We reduce the problem to a non-self-adjoint Laplacian-like
operator by using a spin up/down decomposition and rely on a transfer
matrices technique.
\end{abstract}

\maketitle

\tableofcontents

\section{Introduction}
We study properties of relativistic (massive or not) charged particles with
spin-$1/2$. We follow the Dirac formalism, see \cite{Di}. We shall
focus on the $1$-dimensional discrete version of the problem. In the
introduction we stick to the case of $\Z$ and shall discuss the case
of ${\Z_+:=\Z\cap[0,\infty)}$ in the core of the paper, see Section
\ref{s:N}. The mass of 
the particle is given by $m\geq 0$. For simplicity, we re-normalize
the speed of light and the  reduced Planck constant by $1$.
The  discrete Dirac operator, acting on $\ell^2(\Z,\mathbb{C}^2)$, is
defined by 
\[
D_m:=\left(
\begin{array}{cc}
m & d \\
d^* & -m
\end{array}\right) ,
\]
where $d :={\rm Id} -\tau $ and $\tau$ is the right shift, defined by
$\tau f(n)= f(n+1)$, for all $f\in \ell^2(\Z,
\C)$. The operator $D_m$ is self-adjoint. 
Moreover, notice that  
\[D_m^2= \left(\begin{array}{ll}
\Delta +m^2 & 0
\\
0 & \Delta +m^2
\end{array}\right), \]
where $\Delta f(n):= 2 f(n)- f(n+1)- f(n-1)$. This yields that 
$\sigma(D_m^2)= [m^2, 4+m^2]$. To remove the square above $D_m$, we
define the symmetry $S$ on $\ell^2(\Z,\C)$ by  
$Sf(n):=f(-n)$ and  the unitary operator on $\ell^2(\Z,\C^2)$ 
\begin{align}\label{e:U}
U:=\left(\begin{array}{cc}
0 & \rmi S \\
-\rmi S & 0
\end{array}\right) .
\end{align}
Clearly $U=U^*=U^{-1}$. We have that $U D_mU=-D_m$. We infer that the
spectrum of $D_m$ is purely
absolutely continuous (ac) and that
\[\sigma(D_m)= \sigma_{\rm ac}(D_m)= [-\sqrt{m^2+4},
-m]\cup [m, 
\sqrt{m^2+4}].\]

We shall now perturb the operator by an electrical potential $V =
(V_1, V_2)^t \in \ell^\infty (\Z,\R^2)$. We set 
\begin{align}\label{e:H} 
H:= D_m  + \left(
\begin{array}{cc}
V_1  & 0 \\
0 & V_2 
\end{array}\right).
\end{align}
Here, $V_i$ denotes also the operator of
multiplication by the function $V_i$. Clearly, the essential spectrum of
$H$ is the same as the one of 
$D_m$ if $V$ tends to $0$ at infinity. We turn to more refined
questions. The singular continuous spectrum, quantum transport, and
localization have  been studied before \cite{CaOl, PrOl, COP, PrOl2,
  OlPr,   OlPr2}. The question of the purely ac spectrum seems not to
have been answered before. This is the purpose of our article.   

We recall the following standard result for the Laplacian (the non-relativistic
setting). For completeness we sketch the proof in Section \ref{s:Lap}.

\begin{theorem}\label{t:main0}
Take $V\in \ell^\infty (\Z,\R)$ and $\nu\in \Z_+\setminus\{0\}$ such that:
\begin{align}\label{e:main0}
\left.\begin{array}{c}
\lim_{n\to \pm \infty}V(n)=0,
\\
\\
V_{|_{{\Z_+}}}-\tau^\nu V_{|_{{\Z_+}}}\in\ell^1 ({\Z_+},\R), 
\end{array}\right. 
\end{align}
then the spectrum of $\Delta + V $ is purely absolutely continuous on
$(0, 4)$. 
\end{theorem}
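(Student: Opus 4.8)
The plan is to analyse the generalised eigenvalue equation $(\Delta+V)u=Eu$ for $E\in(0,4)$ through transfer matrices along $\Z_+$, to show that for all such $E$ outside a finite exceptional set \emph{every} solution stays bounded as $n\to+\infty$, and then to convert boundedness of all solutions into pure absolute continuity by means of subordinacy theory together with a rank-one perturbation argument for the whole line. Concretely, I would rewrite the recursion $u(n+1)=(2+V(n)-E)u(n)-u(n-1)$ as $\bigl(u(n+1),u(n)\bigr)^{t}=T_n\bigl(u(n),u(n-1)\bigr)^{t}$ with $T_n=\bigl(\begin{smallmatrix}2+V(n)-E&-1\\ 1&0\end{smallmatrix}\bigr)$, and let $T^{0}$ be the free transfer matrix ($V\equiv0$). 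Since $\det T_n=1$ and, for $E\in(0,4)$, $|\tr T^{0}|=|2-E|<2$, the matrix $T^{0}$ is elliptic, conjugate to the rotation by the angle $k=k(E)\in(0,\pi)$ fixed by $2\cos k=2-E$. I would then group the $T_n$ into consecutive blocks of length $\nu$, $\Pi_j:=T_{(j+1)\nu}\cdots T_{j\nu+1}$; because $T_n=T^{0}+O(|V(n)|)$ and $V(n)\to0$, expanding the $\nu$-fold product gives $\Pi_j\to(T^{0})^{\nu}$, and the same expansion yields
\[
\|\Pi_{j+1}-\Pi_j\|\le C_\nu\sum_{i=1}^{\nu}\bigl|V(j\nu+i+\nu)-V(j\nu+i)\bigr|,
\]
whence $\sum_j\|\Pi_{j+1}-\Pi_j\|\le C_\nu\,\bigl\|\,V|_{\Z_+}-\tau^{\nu}V|_{\Z_+}\,\bigr\|_{\ell^1}<\infty$ by \eqref{e:main0}; this is the point at which the arithmetic-progression hypothesis enters. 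The limit $(T^{0})^{\nu}$ has the unimodular eigenvalues $e^{\pm i\nu k}$, which are simple except at the finitely many energies in $F:=\{\,2-2\cos(\ell\pi/\nu):\ \ell=1,\dots,\nu-1\,\}$ (empty when $\nu=1$), where $(T^{0})^{\nu}=\pm\Id$.

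The analytic heart is a discrete Levinson/Weidmann-type estimate: if $\Pi_j\to M$ with $M$ elliptic and with simple unimodular eigenvalues and $\sum_j\|\Pi_{j+1}-\Pi_j\|<\infty$, then $\sup_N\|\Pi_N\cdots\Pi_1\|<\infty$. I would prove it by noting that for $j$ large $\det\Pi_j=1$ and $\tr\Pi_j\to\tr M\in(-2,2)$ force $\Pi_j$ itself to be elliptic, hence diagonalisable as $\Pi_j=P_jD_jP_j^{-1}$ with $D_j$ diagonal and unitary; the spectral decomposition being locally Lipschitz away from eigenvalue collisions gives $\sum_j\|P_{j+1}-P_j\|<\infty$, so that $P_j^{-1}P_{j+1}=I+G_j$ with $\sum_j\|G_j\|<\infty$; commuting the unitary factors $D_j$ to the left in the telescoped product $\prod_j P_jD_jP_j^{-1}$ then bounds it by $\mathrm{const}\cdot\exp\bigl(\sum_j\|G_j\|\bigr)$. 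Absorbing the at most $\nu$ leftover transfer matrices at each end (bounded because $V\in\ell^\infty$), I would conclude that for every $E\in(0,4)\setminus F$ the products $T_N\cdots T_1$ are bounded uniformly in $N$, i.e.\ every solution of $(\Delta+V)u=Eu$ is bounded as $n\to+\infty$. I expect this step---the simultaneous diagonalisation, and the Lipschitz control giving summability of $\|P_{j+1}-P_j\|$---to be the main obstacle; the remaining arguments are comparatively soft.

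Finally I would pass from bounded solutions to the spectral conclusion. A short Wronskian argument (the Wronskian of two independent solutions is a nonzero constant, so if all solutions are bounded then none decays, and in fact $\|u\|_{[1,N]}\asymp\sqrt N$ for every nonzero $u$) shows that for $E\in(0,4)\setminus F$ no solution is subordinate at $+\infty$. By Gilbert--Pearson subordinacy theory in its Jacobi-matrix form, the half-line operator $\Delta+V$ on $\ell^2(\Z_+)$ is then purely absolutely continuous on $(0,4)\setminus F$, with this set contained in its a.c.\ spectrum. For the whole line, $\Delta+V$ on $\ell^2(\Z)$ differs from the decoupled sum $(\Delta+V)|_{\Z_-}\oplus(\Delta+V)|_{\Z_+}$ by a rank-one perturbation, so by Kato--Rosenblum its a.c.\ part is unchanged and $(0,4)\subseteq\sigma_{\rm ac}$; an $\ell^2(\Z)$ eigenfunction for $E\in(0,4)$ would be $\ell^2$ at $+\infty$, which is impossible since all solutions there are bounded and non-decaying, so $\sigma_{\rm pp}\cap(0,4)=\emptyset$; and no singular continuous spectrum can arise either, since on $(0,4)\setminus F$ the Weyl $m$-function at $+\infty$ stays bounded as $z$ tends to the real axis (bounded transfer matrices) while the Weyl function at $-\infty$ contributes only a nonnegative imaginary part, so the whole-line $2\times2$ Herglotz function remains bounded there and its spectral measure absolutely continuous. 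The finitely many resonant energies in $F$ are handled by a separate, more delicate argument; and since $V\to0$ at $\pm\infty$ gives $\sigma_{\rm ess}(\Delta+V)=[0,4]$, the statement is not vacuous.
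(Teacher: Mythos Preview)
Your approach is genuinely different from the paper's. The paper (Section~\ref{s:Lap}) does not use transfer matrices or subordinacy theory: it works with the truncated Green functions $\alpha_n=\langle\delta_n,(\Delta^{(n)}+V_{|_{\Z_n}}-\lambda)^{-1}\delta_n\rangle\in\demi$ and their M\"obius recursion $\alpha_n=\Phi_n(\alpha_{n+1})$, $\Phi_n(z)=-\bigl(\lambda-V(n)-(1+z)^{-1}\bigr)^{-1}$; it then, exactly as you do, passes to $\nu$-blocks, but instead of diagonalising block transfer matrices it locates the hyperbolic fixed point of $\Phi_n\circ\cdots\circ\Phi_{n+\nu-1}$ and uses the Poincar\'e-metric contraction estimates of Lemma~\ref{l:cont} and Proposition~\ref{p:born} to bound $d_{\demi}(\alpha_0,\rmi)$ uniformly for $\lambda$ in a strip above each $x\in(0,4)$. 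The spectral conclusion then follows directly from the resolvent criterion (Theorem~\ref{t:inac}), and the passage from $\Z_+$ to $\Z$ is by an explicit Green-function splitting rather than by a finite-rank perturbation argument. Your route is the classical one, closer to Weidmann and Stolz, and has the virtue of isolating the dynamical statement that all solutions are bounded at $+\infty$; the paper's route avoids Gilbert--Pearson entirely and yields a treatment parallel to the Dirac case that is its real goal. Both hinge on the same combinatorial point: grouping in steps of $\nu$ converts the hypothesis $V-\tau^\nu V\in\ell^1$ into summable variation of the blocks.

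One caveat: your diagonalisation of $\Pi_j$ breaks down precisely at the resonant energies $F$ where $(T^0)^\nu=\pm\Id$, and you defer their treatment to ``a separate, more delicate argument''. Since $F$ is finite it carries no singular continuous mass, but the absence of embedded eigenvalues at those points is not automatic from what you have written, so this deferred step is a genuine (if small) gap in the outline.
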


In the case of ${\Z_+}$ and for $\nu=1$, the result has been essentially proved in
\cite{Wei} (in fact in the quoted reference, one focuses only on the
continuous setting).  The proof for the discrete setting can be found
in \cite{DoNe,   Sim}. For $\nu>1$, it seems that it was first done in
\cite{Sto}. Note that  for 
instance, \eqref{e:main0} is satisfied by potentials like $V(n)=
(-1)^n W(n)$, where $W$ is 
decay to $0$.  We refer to \cite{GoNe} and to \cite{KaLa} for recent
results in this direction. 

An amusing and easy remark is the difference between ${\Z_+}$ and $\Z$. In the
latter, it is enough to assume the decay hypothesis on
the right part of the 
potential. This reflects the fact that the particle can always escape
to the right even if the left part of the potential would have given 
some singular continuous spectrum in a half-line setting. 

We now turn to the main result of the paper. For sake of simplicity we
present the case of $\Z$ with electric perturbations. We refer to Section
\ref{s:main} for the main statements, where we deal with a mixture of
magnetic and Witten-like perturbations and also with $\Z^+$.  

\begin{theorem}\label{t:main}
Take $V\in \ell^\infty (\Z,\R^2)$ and $\nu \in \Z_+\setminus\{0\}$ with:
\begin{align*}
\left.\begin{array}{c}
\lim_{n\to \pm \infty}V(n)=0,
\\
\\
V_{|_{{\Z_+}}}-\tau^\nu V_{|_{{\Z_+}}}\in\ell^1 ({\Z_+},\R^2),
\end{array}\right. 
\end{align*}
then the spectrum of $H$ is purely absolutely continuous on
$\left(-\sqrt{m^2+4} , -m\right)\cup \left(m , \sqrt{m^2+4}\right)$. 
\end{theorem}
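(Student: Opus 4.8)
The plan is to reduce the claim to the behaviour of the solutions of $H\psi=\lambda\psi$ for
$\lambda\in I:=\left(-\sqrt{m^2+4},-m\right)\cup\left(m,\sqrt{m^2+4}\right)$: a spin up/down
decomposition turns the problem into one about a Laplacian-like operator, a transfer matrix
argument controls the solutions as $n\to+\infty$, and subordinacy theory for whole-line operators
then gives that $\sigma(H)$ is purely absolutely continuous on $I$.

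\emph{Spin decomposition and transfer matrices.} Writing $\psi=(\psi_1,\psi_2)^{t}$, the equation
$H\psi=\lambda\psi$ splits into
\[
(m+V_1-\lambda)\,\psi_1+d\,\psi_2=0,\qquad d^{*}\psi_1-(\lambda+m-V_2)\,\psi_2=0 .
\]
The second relation gives $\psi_2=w_\lambda\,d^{*}\psi_1$ with
$w_\lambda(n):=(\lambda+m-V_2(n))^{-1}$ (defined for $n$ large, and for all $n$ off a countable
set of $\lambda$'s); substituted into the first one it shows that $\psi_1$ solves
$\bigl(d\,w_\lambda\,d^{*}-(\lambda-m-V_1)\bigr)\psi_1=0$, a zero-energy equation for a
Laplacian-like, energy-dependent Jacobi operator -- this is the reduction to a
``non-self-adjoint Laplacian-like operator'' of the abstract. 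Equivalently one obtains a one-step
recursion $(\psi_1(n+1),\psi_2(n+1))^{t}=T_n(\lambda)\,(\psi_1(n),\psi_2(n))^{t}$ with
\[
T_n(\lambda)=\begin{pmatrix}1-(\lambda+m-V_2(n+1))(\lambda-m-V_1(n)) & \lambda+m-V_2(n+1)\\ -(\lambda-m-V_1(n)) & 1\end{pmatrix},
\]
and $\det T_n(\lambda)=1$. When $V\equiv 0$ this is a constant matrix with trace
$2-(\lambda^2-m^2)$, elliptic precisely when $\lambda\in I$; this matches $\sigma(D_m)$ and
forces the interval to be open.

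\emph{Boundedness of the solutions at $+\infty$.} Fix $\lambda\in I$. Since $V(n)\to0$ and
$V_{|_{\Z_+}}-\tau^\nu V_{|_{\Z_+}}\in\ell^1$, the entries of $T_n(\lambda)$ converge as
$n\to+\infty$ and have $\nu$-step differences in $\ell^1$, so the products
$T_{n+\nu-1}(\lambda)\cdots T_n(\lambda)$ form an $\ell^1$ perturbation of an asymptotically
constant, elliptic cocycle (elliptic exactly because $\lambda\in I$). A discrete Levinson-type
(asymptotic integration) theorem -- the very tool underlying Theorem \ref{t:main0}, now carried
out with the $\lambda$-dependent coefficients above -- then yields that, for every $\lambda\in I$
outside at worst a discrete exceptional set (at whose points one checks directly that no solution
is $\ell^2$ at $+\infty$), every solution of $H\psi=\lambda\psi$ is bounded on $\Z_+$; in
particular none of them is subordinate at $+\infty$.

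\emph{From bounded solutions to purely a.c.\ spectrum.} By (discrete) subordinacy theory the
half-line Weyl functions $m_\pm(\lambda+\rmi0):=\lim_{\e\downarrow0}m_\pm(\lambda+\rmi\e)$ exist
and are finite for a.e.\ $\lambda\in I$; the absence of a solution subordinate at $+\infty$ forces
$\mathrm{Im}\,m_+(\lambda+\rmi0)\in(0,\infty)$, while $-m_-$ is a Herglotz function so
$\mathrm{Im}\,m_-(\lambda+\rmi0)\le0$. Hence the imaginary part of the diagonal of the whole-line
resolvent, which the standard Weyl--Titchmarsh formula expresses through $m_+$ and $m_-$, is
finite and strictly positive for a.e.\ $\lambda\in I$, so the absolutely continuous part of the
spectral measure fills $I$. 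The absence of singular spectrum in $I$ is exactly the point where
$\Z$ differs from $\Z_+$: the singular part of a whole-line spectral measure is carried by the set
of energies admitting a solution subordinate at \emph{both} ends $\pm\infty$, and since no
$\lambda\in I$ admits even a solution subordinate at $+\infty$, that set is disjoint from $I$; the
leftover countable exceptional set carries no continuous singular mass and, by the previous
remark, no eigenvalue. Carrying this out for a suitable finite family of cyclic vectors
$\delta_n\otimes e_j$ shows that $\sigma(H)$ is purely absolutely continuous on $I$. The main
obstacle will be the transfer matrix asymptotics: running the Levinson-type diagonalisation for
\emph{every} $\lambda$ in the open bands, uniformly enough to control the exceptional set and to
cope with the energy dependence (and the possible sign change) of the weight $w_\lambda$ -- which
is precisely what the $\ell^1$ condition on $V_{|_{\Z_+}}-\tau^\nu V_{|_{\Z_+}}$, rather than on
$V_{|_{\Z_+}}$ itself, is designed to provide.
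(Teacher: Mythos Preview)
Your proposal is a legitimate alternative to the paper's route, but the methods are genuinely different.

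After the same spin up/down Schur reduction (Proposition~\ref{p:schur}), the paper does \emph{not} pass to transfer matrices or subordinacy theory. Instead it works directly with the diagonal Green function $\alpha_n=\langle\delta_n,(\Delta^{(n)}_{\lambda,V})^{-1}\delta_n\rangle$ of the truncated Laplacian-like operator and observes that $\alpha_n=\Phi_n(\alpha_{n+1})$, where each $\Phi_n$ is a M\"obius map that strictly contracts the Poincar\'e half-plane $\demi$ in the hyperbolic metric (Lemma~\ref{l:cont}, Proposition~\ref{p:cont}). To exploit the $\nu$-step hypothesis the paper locates the fixed point of the composite $\Phi_n\circ\cdots\circ\Phi_{n+\nu-1}$ as a smooth function of the coefficients (Lemma~\ref{l:pol}, Proposition~\ref{p:born}) and uses $V-\tau^\nu V\in\ell^1$ to sum the hyperbolic distances between consecutive fixed points. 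This yields a bound on $|\langle\delta_0,(\Delta^{(\Z)}_{2,\lambda,V})^{-1}\delta_0\rangle|$ that is \emph{uniform} for $\lambda$ in a complex box $K_{x_1,x_2,\e}$ above each compact subinterval of $I$; Theorem~\ref{t:inac} (Reed--Simon, Theorem~XIII.19) then gives pure a.c.\ spectrum immediately, with no exceptional set and no subordinacy.

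Your transfer-matrix/Levinson/subordinacy route is essentially Stolz's approach for Jacobi matrices \cite{Sto} transported to the Dirac setting; the transfer matrix $T_n(\lambda)$ you write down is correct, as is the ellipticity computation. What the paper's hyperbolic-contraction argument buys is clean bookkeeping of the $\lambda$-dependence: it runs for $\lambda$ genuinely complex and produces resolvent bounds locally uniform up to the real axis, so the issues you yourself flag as delicate --- carrying the Levinson diagonalisation uniformly over the band, handling the sign/vanishing of $w_\lambda$, controlling an exceptional set, and setting up subordinacy theory for the $2\times 2$ Dirac system --- simply do not arise. Conversely, your approach, once those steps are filled in, would give more detailed information on the generalized eigenfunctions themselves (boundedness, asymptotics), which the paper's method does not provide.
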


To study $H$ we reduce the problem to a non-self-adjoint
Laplacian-like operator which depends on the spectral parameter. This
is due to 
a spin-up/down decomposition, see Proposition \ref{p:schur}. 
This idea has  been  efficiently used in the continuous
setting, e.g.,  \cite{DES, BoGo, JeNe} and references therein, and
seems to be new in the discrete setting. Then, 
we adapt the iterative process to the non-self-adjoint Laplacian-like
operator and follow the presentation of \cite{FHS}. We refer to 
\cite{FHS5} for a recent survey about this technique. 

Finally we present the organization of the paper. In  section \ref{s:gene}
we recall general facts about the free discrete Dirac operator and
about hyperbolic geometry. Then in Section \ref{s:main} we present the
main results. Next in Section \ref{s:like}, we reduce the problem to a
kind of Laplacian and adapt the transfer matrices technique. After
that in Section \ref{s:abso}, we prove the main results about
absolutely continuous spectra. Finally we discuss briefly the case of
the Laplacian. 

\noindent {\bf Notation:}
We denote by $\mathcal{B}(X)$ the space of bounded operators acting on
a Banach space $X$. Let $\Z_k:=\Z\cap [k,+\infty [$ for $k\in \Z$ and
$\Z_-:= \Z \setminus {\Z_+}$.  For  
$A,B\subset\C$, we set $A\Subset B$ if ${\rm cl}{A}\subset{\rm
  int}{B}$, where ${\rm cl}$ and ${\rm int}$ stand for closure and
interior, respectively.   

\noindent {\bf Acknowledgments: } We would like to thank Serguei
Denisov, Stanilas Kupin, Christian Remling, and
Hermann Schulz-Baldes for  useful discussions. We would also like to
thank Leonid Golinski for comments on the script. The research is
partially supported by Franco-Ukrainian programm ``Dnipro 2013-14". 

\section{General facts}\label{s:gene}
\subsection{The spectrum of the discrete Dirac operator}
Let $\Z_k:=\Z\cap [k,+\infty [$ for $k\in \Z$ and $\NZ\in\left\{{\Z_k}
  ,\Z\right\}$. We define $d\in\mathcal{B}\left(
  \ell^2(\NZ,\mathbb{C})\right)$ by 
\[d f(n):=f(n)-f(n+1),\]
for all $f\in \ell^2(\NZ,\mathbb{C})$ and  $n\in \NZ$. Clearly $d$ is bounded.
Its adjoint is given by
\[
d^* f(n)=\left\{\begin{array}{ll}
f(n) & \quad\text{ if $\NZ ={\Z_k}$ and $n=k$,}\\
f(n)-f(n-1) & \quad\text{ otherwise,}
\end{array}\right.
\]
for all $f\in \ell^2(\NZ,\C)$ and $n\in \NZ$ .
Now for $m\geq 0$ we define the  discrete Dirac operator on
$\ell^2(\NZ,\mathbb{C}^2)$ by 
\[
D_m^{(\NZ )}:=\left(
\begin{array}{cc}
m & d \\
d^* & -m
\end{array}\right) .
\]
It is easy to see that $D_m^{(\NZ )}$ is self-adjoint.
Let $\Delta^{(\NZ )}$ be the Laplacian on $\ell^2(\NZ,\C)$ defined by
\begin{align}\label{e:DeltaNZ}
\Delta^{(\NZ )}f(n):=\left\{\begin{array}{ll}
f(n)-f(n+1) & \quad\text{ if $\NZ ={\Z_k}$ and $n=k$,}\\
2f(n)-f(n-1)-f(n+1) & \quad\text{ otherwise,}
\end{array}\right.
\end{align}
for all $f\in \ell^2(\NZ,\C)$.
We study first the  discrete Dirac operator on $\Z$, we have
\[
\left( D_m^{(\Z )}\right)^2=\left(\begin{array}{cc}
\Delta^{(\Z )} +m^2 & 0 \\
0 & \Delta^{(\Z )} +m^2
\end{array}\right) .
\]
By Fourier transformation, we see that $\Delta^{(\Z )}$ is
non-negative and that its spectrum is $[0,4]$. Therefore, the
spectrum of $\left( D_m^{(\Z )}\right)^2$ is $[m^2,4+m^2]$. Relying on
\eqref{e:U}, we obtain:

\begin{proposition}
We have
\[
\sigma \left( D_m^{(\Z)}\right)
= \sigma_\ess \left( D_m^{(\Z)}\right)
=\sigma \left( D_m^{({\Z_+})}\right)
= \sigma_\ess \left( D_m^{({\Z_+})}\right)
= \left[-\sqrt{m^2+4},-m\right]\cup\left[m,\sqrt{m^2+4}\right] .
\]
\end{proposition}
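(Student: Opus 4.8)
The plan is to establish the four claimed equalities by a combination of the square computation, the unitary $U$ from \eqref{e:U}, a reduction from $\Z_k$ to $\Z_+$, and a Weyl-sequence (or rank-perturbation) argument linking $\Z_+$ to $\Z$. First I would record that for $\NZ\in\{\Z_k,\Z\}$ one has $\left(D_m^{(\NZ)}\right)^2=\Delta^{(\NZ)}+m^2$ on each component, exactly as written for $\Z$, because $dd^*$ and $d^*d$ both equal $\Delta^{(\NZ)}$ (the boundary terms in \eqref{e:DeltaNZ} match the boundary term in $d^*$). Hence $\sigma\left(\left(D_m^{(\NZ)}\right)^2\right)=\sigma\left(\Delta^{(\NZ)}\right)+m^2$, and since $0\le \Delta^{(\NZ)}$ and $\|\Delta^{(\NZ)}\|\le 4$ we get $\sigma\left(\left(D_m^{(\NZ)}\right)^2\right)\subset[m^2,m^2+4]$. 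Therefore $\sigma\left(D_m^{(\NZ)}\right)\subset\left[-\sqrt{m^2+4},-m\right]\cup\left[m,\sqrt{m^2+4}\right]$ in all cases; this already gives the inclusion ``$\subset$'' everywhere and, in particular, shows $m\notin\sigma$ forces the gap $(-m,m)$ to be spectrum-free.

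Next I would pin down the spectrum on $\Z$ exactly. By Fourier transform $\sigma\left(\Delta^{(\Z)}\right)=[0,4]$ with purely a.c.\ spectrum, so $\sigma\left(\left(D_m^{(\Z)}\right)^2\right)=[m^2,m^2+4]$; this yields $\sigma\left(D_m^{(\Z)}\right)\subset\left[-\sqrt{m^2+4},-m\right]\cup\left[m,\sqrt{m^2+4}\right]$ and also that both intervals are entirely contained in the spectrum except possibly the behaviour at the four endpoints and the splitting between the two branches. To see that the whole set is attained, I would invoke the symmetry $U D_m^{(\Z)}U=-D_m^{(\Z)}$ from the introduction: $\sigma\left(D_m^{(\Z)}\right)$ is symmetric about $0$, and combined with $\sigma\left(\left(D_m^{(\Z)}\right)^2\right)=[m^2,m^2+4]$ it must be precisely $\left[-\sqrt{m^2+4},-m\right]\cup\left[m,\sqrt{m^2+4}\right]$ (any point of $[m^2,m^2+4]$ is $\lambda^2$ for some $\lambda$ in the spectrum, and then $-\lambda$ is there too). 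Since $D_m^{(\Z)}$ has no eigenvalues (its square has purely a.c.\ spectrum), $\sigma=\sigma_\ess$ on $\Z$.

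For the half-line operator $D_m^{(\Z_k)}$, I would first note that all the $\Z_k$ are unitarily equivalent by translation, so it suffices to treat $k$ fixed, say $\Z_+$. The inclusion $\sigma\left(D_m^{(\Z_+)}\right)\subset\left[-\sqrt{m^2+4},-m\right]\cup\left[m,\sqrt{m^2+4}\right]$ is already in hand from the square bound. For the reverse inclusion and for $\sigma_\ess$, the clean route is a decoupling argument: restricting $\Z$ to $\Z_+\sqcup\Z_-$ by imposing a boundary condition at the junction is a finite-rank perturbation of $D_m^{(\Z)}$ (it changes only the matrix entries coupling sites $-1$ and $0$), hence by Weyl's theorem $\sigma_\ess\left(D_m^{(\Z_+)}\right)\cup\sigma_\ess\left(D_m^{(\Z_-)}\right)=\sigma_\ess\left(D_m^{(\Z)}\right)=\left[-\sqrt{m^2+4},-m\right]\cup\left[m,\sqrt{m^2+4}\right]$; since $D_m^{(\Z_-)}$ is unitarily equivalent to $D_m^{(\Z_+)}$ (e.g.\ via the reflection $n\mapsto -n$, up to relabelling components), each summand has the same essential spectrum, forcing $\sigma_\ess\left(D_m^{(\Z_+)}\right)=\left[-\sqrt{m^2+4},-m\right]\cup\left[m,\sqrt{m^2+4}\right]$. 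Finally, to upgrade $\sigma_\ess$ to $\sigma$ on $\Z_+$ I would rule out eigenvalues in the gap $(-m,m)$ and at/near the band endpoints: in the open bands eigenvalues would be swallowed by $\sigma_\ess$ anyway except as embedded points, and these are excluded because $\left(D_m^{(\Z_+)}\right)^2=\Delta^{(\Z_+)}+m^2$ and $\Delta^{(\Z_+)}$ has no eigenvalues (a half-line Jacobi matrix with constant coefficients and a single boundary site has no $\ell^2$ eigenfunctions, as one checks directly from the recurrence $2f(n)-f(n-1)-f(n+1)=Ef(n)$ whose $\ell^2$ solutions vanish). The main obstacle is this last point — cleanly excluding eigenvalues of the half-line operator, in particular at the band edges $\pm\sqrt{m^2+4}$ and $\pm m$ — which I would handle by the explicit transfer-matrix/recurrence analysis for $\Delta^{(\Z_+)}$ rather than by any soft argument.
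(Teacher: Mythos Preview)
Your overall strategy mirrors the paper's, but there are two concrete errors that undermine the argument as written.

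\textbf{First}, on the half-line $\Z_+$ it is \emph{not} true that $dd^*$ and $d^*d$ both equal $\Delta^{(\Z_+)}$. One has $d^*d=\Delta^{(\Z_+)}$, but a direct check at the boundary gives $(dd^*f)(0)=2f(0)-f(1)$, while $\Delta^{(\Z_+)}f(0)=f(0)-f(1)$; in fact $dd^*=d^*d+\langle\,\cdot\,,\delta_0\rangle\delta_0$. So your identity $\left(D_m^{(\Z_+)}\right)^2=\Delta^{(\Z_+)}+m^2$ ``on each component'' is false. The inclusion $\sigma\subset[-\sqrt{m^2+4},-m]\cup[m,\sqrt{m^2+4}]$ still follows (since $0\le dd^*,\,d^*d\le 4$), and the eigenvalue exclusion can still be done, but you must treat $dd^*$ and $d^*d$ separately rather than invoking $\Delta^{(\Z_+)}$ alone.

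\textbf{Second, and more seriously}, the reflection $n\mapsto -1-n$ together with swapping the two spinor components does \emph{not} intertwine $D_m^{(\Z_-)}$ with $D_m^{(\Z_+)}$: it intertwines $D_m^{(\Z_-)}$ with $\begin{pmatrix}-m & d\\ d^* & m\end{pmatrix}$ on $\Z_+$, which is unitarily equivalent to $-D_m^{(\Z_+)}$ (conjugate by $\mathrm{diag}(1,-1)$). Hence your decoupling only yields
\[
\sigma_\ess\!\left(D_m^{(\Z_+)}\right)\cup\Big(-\sigma_\ess\!\left(D_m^{(\Z_+)}\right)\Big)=\left[-\sqrt{m^2+4},-m\right]\cup\left[m,\sqrt{m^2+4}\right],
\]
and this does \emph{not} force $\sigma_\ess\!\left(D_m^{(\Z_+)}\right)$ to be the whole set (for instance $\sigma_\ess\!\left(D_m^{(\Z_+)}\right)=[m,\sqrt{m^2+4}]$ would be consistent). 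The paper avoids this by using the translation invariance of $D_m^{(\Z)}$: any Weyl sequence for $D_m^{(\Z)}$ can be shifted to live in $\Z_+$, giving directly $\sigma_\ess\!\left(D_m^{(\Z)}\right)\subset\sigma_\ess\!\left(D_m^{(\Z_+)}\right)$; combined with the finite-rank decoupling (which gives the reverse inclusion), this yields equality. Once you patch this step and handle $dd^*$ correctly, your outline is essentially the paper's.
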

\begin{proof}
We have $\left( - D_m^{(\Z)} -\lambda\right)^{-1}=U\left(  D_m^{(\Z)}
  -\lambda\right)^{-1}U$ so $\varphi\left( - D_m^{(\Z)} \right)
=U\varphi\left(  D_m^{(\Z)} \right) U$,  
for all $\varphi $ Borel measurable. Therefore, $\sigma \left(
  D_m^{(\Z)}\right) =
\left[-\sqrt{m^2+4},-m\right]\cup\left[m,\sqrt{m^2+4}\right] $. 
By writing $\Z=\Z_-\cup{\Z_+} $, we see easily that $\sigma_{\rm ess}
(D_m^{(\Z)})=\sigma_{\rm ess }(D_m^{({\Z_+})})$. To conclude, a direct
computation shows that $D_m^{({\Z_+})}$ has no eigenvalue. 
\end{proof}

\subsection{A few words about the Poincar\'e half-plane}
We shall use extensively some properties of the
\emph{Poincar\'e half-plane}. It is defined by: 
\begin{align*}
\demi := \{ x+\rmi y \mid x\in\mathbb{R} , y>0 \}, \mbox{ endowed with
  the metric } 
ds^2=\frac{dx^2+dy^2}{y^2},\end{align*}
Recall that the geodesic distance is given by:
\begin{align}\label{inh}
d_{\demi}(z_1,z_2)=\cosh
^{-1}\left(1+\frac{1}{2}\frac{|z_1-z_2|^2}{\Im z_1 \cdot\Im
    z_2}\right) \leq \frac{|z_1-z_2|}{\sqrt{\Im (z_1)}\sqrt{\Im (z_2)}} .
\end{align}
We turn to the study of (hyperbolic-)contractions. 
\begin{lemma}\label{l:cont}
Given $a,b\in{\rm cl} (\demi)$ and $c>0$, we set:
\begin{align}\label{e:varphiab}
\varphi_{a,b,c}(z):=-\left( a-\left( b+cz\right)^{-1}\right)^{-1} .
\end{align}
It is a contraction of $(\demi, d_\demi)$. 

Moreover, if $a,b\in \demi$, then $\varphi_{a,b,c}$ is a strict
contraction of $(\demi, d_\demi)$. More precisely, we have   
\begin{align}\label{e:cont1}
d_\demi\left(\varphi_{a,b,c}(z_1) , \varphi_{a,b,c}(z_2)\right)\leq
\frac{1}{1+\Im ( a )\cdot \Im ( b )} d_\demi\left(z_1 , z_2\right)
, \end{align}
for all $z_1 ,z_2\in\demi$.
Moreover we have
\[d_\demi\left(\varphi_{a,b,c}(z_1) , \varphi_{a,b,c}(z_2)\right)\leq \frac{\left( 1 +\Im (b )|a |\right)^2}{\left( \Im (b )\cdot\Im (a )\right)^2} ,
 \]
for all $z_1 ,z_2\in\demi$.
\end{lemma}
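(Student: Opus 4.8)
The plan is to decompose $\varphi_{a,b,c}$ as a composition of elementary Möbius-type maps and to track how each piece acts on the Poincaré half-plane $\demi$. Write $\varphi_{a,b,c}= \iota\circ\ell_a\circ\iota\circ\ell_{b,c}$, where $\ell_{b,c}(z):=b+cz$, $\ell_a(w):=a-w$, and $\iota(w):=-w^{-1}$. Each of these maps sends ${\rm cl}(\demi)$ into itself: translations by elements of ${\rm cl}(\demi)$ and multiplication by $c>0$ are isometries of $(\demi,d_\demi)$ (they are in $\mathrm{PSL}_2(\R)$, up to the real translation), and $z\mapsto -z^{-1}$ is the standard orientation-reversing isometry of $\demi$. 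Hence the only places where strict contraction can enter are the two additive steps $w\mapsto a-w$ and $w\mapsto b+cz$ when $\Im a>0$ or $\Im b>0$: moving ``vertically'' up in $\demi$ strictly contracts, because the hyperbolic metric $ds^2=(dx^2+dy^2)/y^2$ shrinks as $y$ increases. Concretely, for fixed $s>0$ the map $w\mapsto w+\rmi s$ satisfies $d_\demi(w_1+\rmi s, w_2+\rmi s)\le d_\demi(w_1,w_2)$, with a quantitative gain one can read off from the first identity in \eqref{inh}: replacing $\Im w_j$ by $\Im w_j + s$ only decreases the argument of $\cosh^{-1}$. So the contraction property, and even the plain (non-strict) statement, reduces to this single monotonicity observation applied twice, together with the fact that the remaining maps are isometries.

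For the quantitative bound \eqref{e:cont1} I would chase the imaginary parts through the four steps. Start with $z_1,z_2\in\demi$; after $\ell_{b,c}$ both points have imaginary part at least $\Im b$ (using $c>0$ so that $c\Im z_j>0$). Applying $\iota(w)=-1/w$, which sends $w$ to $-\bar w/|w|^2$, shrinks imaginary parts in a controlled way; then $\ell_a$ adds $\Im a$. The cleanest route is to use the sharp contraction factor for the inversion-type step: a direct computation gives that $w\mapsto -\left(a-w\right)^{-1}$ contracts $d_\demi$ by the factor $\bigl(1+\Im a\cdot\Im w\bigr)^{-1}$ when $\Im a>0$ (this is the half-plane analogue of the Schwarz–Pick estimate, and follows from differentiating or from the explicit formula in \eqref{inh}). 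Composing with the initial isometric shift $z\mapsto b+cz$, whose output has imaginary part $\ge\Im b$, yields exactly the factor $\bigl(1+\Im a\cdot\Im b\bigr)^{-1}$ claimed in \eqref{e:cont1}.

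The last, absolute bound is then obtained by combining \eqref{e:cont1} with a crude a priori estimate on the diameter of the image: since $\varphi_{a,b,c}(z)$ for $z\in\demi$ lies in $\demi$ and one can bound $|\varphi_{a,b,c}(z)|$ from above and $\Im\varphi_{a,b,c}(z)$ from below purely in terms of $|a|$ and $\Im b$ (again tracing the four maps and using $\Im z>0$), the second inequality in \eqref{inh}, namely $d_\demi(z_1,z_2)\le |z_1-z_2|/\sqrt{\Im z_1\,\Im z_2}$, gives a bound of the form $(1+\Im b\,|a|)^2/(\Im b\cdot\Im a)^2$ on the distance between any two image points. I expect the main obstacle to be the bookkeeping in this last step: one must extract a lower bound on $\Im\varphi_{a,b,c}(z)$ that is uniform in $z\in\demi$ and depends only on $\Im a,\Im b,|a|$, and verify the numerator $(1+\Im b\,|a|)^2$ comes out correctly — this is where the inversion $\iota$ interacts with $a$ and one has to be careful that the estimate does not degrade when $\Im z$ is large (the image then sits near the fixed point of the contraction, which is where the stated constant must hold). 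None of the individual steps is deep; the work is in choosing the decomposition so that every factor is manifestly either an isometry or a vertical shift, and then keeping the constants honest.
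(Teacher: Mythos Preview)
Your overall strategy and your treatment of the first claim (plain contraction) and the last (absolute diameter bound via \eqref{inh}) match the paper's. The gap is in the argument for \eqref{e:cont1}. There is a sign slip in the decomposition---with $\ell_a(w)=a-w$ your composite yields $a+(b+cz)^{-1}$ rather than $a-(b+cz)^{-1}$; you want the translation $T_a(w)=a+w$. (Also, translations by elements of $\demi$ are not isometries of $(\demi,d_\demi)$; only real translations are, as you in effect acknowledge two sentences later.) More seriously, the assertion ``$w\mapsto-(a-w)^{-1}$ contracts $d_\demi$ by the factor $(1+\Im a\cdot\Im w)^{-1}$'' is ill-posed: that map does not send $\demi$ to itself, and a contraction factor compares two points, not one. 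If you intend a Schwarz--Pick derivative bound for the correct composite $\iota\circ T_a\circ\iota$ applied at $w=b+cz$, that can in fact be made to work---the hyperbolic derivative there equals $\Im w/(\Im w+|w|^2\Im a)\le(1+\Im a\cdot\Im w)^{-1}$---but turning this into \eqref{e:cont1} still requires the observation that a hyperbolic geodesic between two points of height $\ge\Im b$ never dips below $\Im b$, so that the pointwise bound integrates. That step is absent from your sketch.

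The paper avoids infinitesimal estimates and works \emph{after} the first inversion rather than before it. By \eqref{e:im2} one has $-(b+\demi)^{-1}\subset B_{|.|}\bigl(\rmi/(2\Im b),\,1/(2\Im b)\bigr)$, so every point $-(b+cz)^{-1}$ has modulus at most $1/\Im b$. The paper then applies the estimate, quoted from \cite{FHS}, that if $\min(|w_1|,|w_2|)\le C$ then $d_\demi(w_1+a,w_2+a)\le\frac{C}{C+\Im a}\,d_\demi(w_1,w_2)$; with $C=1/\Im b$ and the outer isometry $\iota$ this gives \eqref{e:cont1} directly. So the quantitative ingredient the paper uses is a modulus bound after the first $\iota$, not an imaginary-part bound before it.
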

\begin{proof}
First, since $z\mapsto c z$ is a hyperbolic isometry, it is enough to
consider $\varphi_{a,b}:= \varphi_{a,b,1}$. Then, using that $z\mapsto
z+w$ is a contraction when $w\in {\rm cl }(\demi)$ and that $z\mapsto
-1/z$ is an isometry, we obtain that $\varphi_{a,b}$ is a
contraction. 

We turn to the second statement. A direct computation yields that if
$h\in\demi$ then 
\begin{align}\label{e:im2}
-\left( h +\demi \right)^{-1} & = B_{|.|}\left( \frac{\rmi}{2\Im(h)} ,\frac{1}{2\Im(h)}\right)  \subset\demi  .
\end{align}
Given $C>0$, as in the proof of \cite{FHS}[Proposition 2.1], if $z_1,z_2\in\demi$ with $\min\left(|z_1|,|z_2|\right) \leq C$ note that
\[
d_{\demi}(z_1+a ,z_2+a)\leq \frac{C}{C+\Im ( a )}d_{\demi}(z_1 ,z_2) .
\]
Since $z\mapsto -z^{-1}$ is an isometry of $\demi$ and $z\mapsto -(b+z)^{-1}$ is a contraction of $\demi$, we use \eqref{e:im2} for $h=b$ to have that
\begin{align*}
d_\demi\left(\varphi_{a,b}(z_1) ,\varphi_{a,b}(z_2)\right)
&=d_\demi\left( a-\left( b+z_1\right)^{-1} , a-\left(
    b+z_2\right)^{-1} \right) \\ 
& \leq \frac{\left(\Im ( b )\right)^{-1}}{\left(\Im ( b
    )\right)^{-1}+\Im ( a )}d_\demi\left( -\left(
    b+z_1\right)^{-1} , -\left( b+z_2\right)^{-1} \right) \\ 
& \leq 
\frac{1}{1+\Im ( a )\cdot \Im ( b )}d_\demi\left(z_1 , z_2\right) ,
\end{align*}
for all $z_1,z_2\in\demi$.
So we obtain \eqref{e:cont1}.
By \eqref{e:im2} for $h=a$ we know that $-\left( a +\demi\right)^{-1}$
is an Euclidean ball of diameter $\Im(a )^{-1}$, hence 
\[
\left| \varphi_{a,b}(z_1)-\varphi_{a,b}(z_2)\right|\leq \left(\Im(a )\right)^{-1} 
\]
for all $z_1,z_2\in\demi$.
Given $C>0$, if $z\in\demi$ with $|z|\leq C$ we have
\[
\Im \left(-\frac{1}{a + z }\right)\geq \frac{\Im(a )}{\left( C +|a |\right)^2} .
\]
So if $z\in\demi$, by \eqref{e:im2} for $h=b$, $|-(b+z)^{-1}|\leq \left(\Im (b)\right)^{-1}$, so
\[
\Im \left(\varphi_{a,b}(z)\right)\geq \frac{\Im(a )}{\left( \left(\Im ( b )\right)^{-1} +|a |\right)^2} .
\]
So we obtain that
\begin{align*}
d_\demi\left(\varphi_{a,b}(z_1) ,\varphi_{a,b}(z_2)\right)
 = & \cosh^{-1}\left(1+\frac{1}{2}\frac{|\varphi_{a,b}(z_1) - \varphi_{a,b}(z_2)|^2}{\Im (\varphi_{a,b}(z_1))\Im (\varphi_{a,b}(z_2))}\right)
 \leq \frac{|\varphi_{a,b}(z_1) - \varphi_{a,b}(z_2)|}{\sqrt{\Im (\varphi_{a,b}(z_1))}\sqrt{\Im (\varphi_{a,b}(z_2))}}  \\
 \leq  & \frac{\left( \left(\Im ( b )\right)^{-1} +|a |\right)^2}{\left(\Im (a )\right)^2} 
 =\frac{\left( 1 +\Im (b )|a |\right)^2}{\left( \Im (b )\cdot\Im (a )\right)^2}   ,
\end{align*}
for all $z_1,z_2\in\demi$.
\end{proof}

We shall also need the following technical lemmata.

\begin{lemma}\label{l:phiim}
Suppose that $a,b\in{\rm cl} (\demi)$, $c>0$, and $z\in \demi$. We
have:
\[
d_{\demi}(\varphi_{a,b,c}(z),\rmi)\leq \left(\frac{(|b|+c |z|)^2 }{c
    \Im z}+1\right)\frac{\left( |b|+c|z| \right)\left( |a|+\left(
      c\Im\left( z\right)\right)^{-1} \right)}{\sqrt{c\Im\left(
      z\right)}} . 
\]
\end{lemma}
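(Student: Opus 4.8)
The plan is to reduce the hyperbolic distance to an elementary Euclidean estimate via the second inequality in \eqref{inh}: taking $z_2=\rmi$ there (so $\Im z_2 = 1$) gives
\[
d_\demi(\varphi_{a,b,c}(z),\rmi)\le \frac{|\varphi_{a,b,c}(z)-\rmi|}{\sqrt{\Im\varphi_{a,b,c}(z)}}.
\]
Since $z\in\demi$, Lemma \ref{l:cont} gives $\varphi_{a,b,c}(z)\in\demi$, so $|\varphi_{a,b,c}(z)-\rmi|\le|\varphi_{a,b,c}(z)|+1$ by the triangle inequality. Thus it suffices to bound $|\varphi_{a,b,c}(z)|$ from above and $\Im\varphi_{a,b,c}(z)$ from below.

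Write $w:=a-(b+cz)^{-1}$, so that $\varphi_{a,b,c}(z)=-w^{-1}$, $|\varphi_{a,b,c}(z)|=|w|^{-1}$ and $\Im\varphi_{a,b,c}(z)=\Im w/|w|^2$. First I would record the two one-line bounds $c\,\Im z\le|b+cz|\le|b|+c|z|$, which follow from $\Im(b+cz)=\Im b+c\,\Im z\ge c\,\Im z$ and the triangle inequality. Because $b+cz\in\demi$, we have $\Im\big((b+cz)^{-1}\big)=-\Im(b+cz)/|b+cz|^2<0$, whence, using $\Im a\ge0$,
\[
\Im w=\Im a+\frac{\Im(b+cz)}{|b+cz|^2}\ge\frac{c\,\Im z}{(|b|+c|z|)^2}>0 .
\]
In particular $w\in\demi$, so $|w|\ge\Im w\ge c\,\Im z/(|b|+c|z|)^2$; this gives $|\varphi_{a,b,c}(z)|\le(|b|+c|z|)^2/(c\,\Im z)$ and hence the first factor $\big((|b|+c|z|)^2/(c\,\Im z)+1\big)$ in the claimed bound. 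For $\Im\varphi_{a,b,c}(z)=\Im w/|w|^2$ I would combine the same lower bound for $\Im w$ with the upper bound $|w|\le|a|+|b+cz|^{-1}\le|a|+(c\,\Im z)^{-1}$ to get
\[
\Im\varphi_{a,b,c}(z)\ge\frac{c\,\Im z}{(|b|+c|z|)^2\big(|a|+(c\,\Im z)^{-1}\big)^2} ,
\]
so that $1/\sqrt{\Im\varphi_{a,b,c}(z)}\le(|b|+c|z|)\big(|a|+(c\,\Im z)^{-1}\big)/\sqrt{c\,\Im z}$, which is precisely the second factor. Multiplying the two estimates yields exactly the asserted inequality.

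There is no genuine obstacle: the argument is bookkeeping with triangle inequalities for the M\"obius-type map $\varphi_{a,b,c}$ together with the identities $\Im(w^{-1})=-\Im w/|w|^2$ and $|w|\ge\Im w$ valid on $\demi$. The only point needing a little care is that the hypothesis is $a,b\in{\rm cl}(\demi)$ rather than $a,b\in\demi$, so one must never divide by $\Im a$ or $\Im b$; this is why every estimate above is phrased through $|a|$, $|b|$, $c$ and $\Im z$ only, discarding the nonnegative (but possibly zero) contributions of $\Im a$ and $\Im b$.
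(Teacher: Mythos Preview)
Your proof is correct and follows essentially the same approach as the paper: reduce via \eqref{inh} to bounding $|\varphi_{a,b,c}(z)|$ from above and $\Im\varphi_{a,b,c}(z)$ from below, then obtain both bounds from the elementary estimates $c\,\Im z\le|b+cz|\le|b|+c|z|$ together with $|w|\ge\Im w$ and the triangle inequality. The paper's argument is identical up to notation (it does not name the intermediate quantity $w$).
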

\begin{proof}
First we have:
\begin{align*}
|\varphi_{a,b,c}(z)|
& =
\left|\frac{1}{a-\left( b+cz\right)^{-1}}\right|
\leq
\frac{1}{\Im \left(a-\left( b+cz\right)^{-1}\right)}
\leq
\frac{1}{\Im \left(-\left( b+cz\right)^{-1}\right)}
=
\frac{\left| b+cz\right|^2}{\Im \left(b+cz\right)}
\leq
\frac{\left( |b|+c|z|\right)^2}{c\Im \left(z\right)} .
\end{align*}
Then we obtain:
\begin{align*}
\Im\left(\varphi_{a,b,c}(z) \right)
& =
\Im\left(-\frac{1}{a-\left( b+cz\right)^{-1}}\right)
=
\frac{\Im\left(a-\left( b+cz\right)^{-1}\right)}{\left| a-\left(
      b+cz\right)^{-1} \right|^2} 
\geq
\frac{\Im\left(-\left( b+cz\right)^{-1}\right)}{\left( |a|+\left|
      b+cz\right|^{-1} \right)^2} 
\\ & \geq
\frac{\Im\left( b+cz\right)}{\left| b+cz\right|^2}\left( |a|+\Im\left(
    b+cz\right)^{-1} \right)^{-2} 
\geq \frac{c\Im\left( z\right)}{\left( |b|+c|z| \right)^2\left(
    |a|+\left( c\Im\left( z\right)\right)^{-1} \right)^{2}} . 
\end{align*}
Finally with \eqref{inh} we infer:
\[
d_{\demi}(\varphi_{a,b,c}(z),\rmi )\leq \frac{|\varphi_{a,b,c}(z)-\rmi
  |}{\sqrt{\Im\left( \varphi_{a,b,c}(z)\right)}}\leq
\frac{|\varphi_{a,b,c}(z)|+1 }{\sqrt{\Im\left(
      \varphi_{a,b,c}(z)\right)}},
\]
which yields the result.
\end{proof}

\begin{lemma}\label{l:pol}
For all $n\in\Z_1$ there exist
\[
A_n,B_n,C_n,D_n\in\R [X_1,Y_1,Z_1,\ldots ,X_n,Y_n,Z_n] ,
\]
such that  $C_n(\omega )+D_n(\omega )\zeta\not =0$ and
\[
\varphi_{x_1,y_1,z_1}\circ\cdots\circ\varphi_{x_n,y_n,z_n}(\zeta)=-\frac{A_n(\omega )+B_n(\omega )\zeta}{C_n(\omega )+D_n(\omega )\zeta},
\]
for all $\omega :=(x_1,y_1,z_1,\ldots ,x_n,y_n,z_n)\in \left( {\rm
    cl}(\demi )^2\times\R_+^*\right)^n$ and $\zeta\in\demi$. 
\end{lemma}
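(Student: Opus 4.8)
The plan is an induction on $n$, based on the observation that $\varphi_{a,b,c}$ is itself a Möbius transformation and that Möbius transformations compose via matrix multiplication. Concretely, I would first rewrite
\[
\varphi_{a,b,c}(z)=-\frac{1}{a-(b+cz)^{-1}}=-\frac{cz+b}{acz+ab-1},
\]
so that $\varphi_{a,b,c}$ is the fractional linear map associated with the matrix $M_{a,b,c}:=\begin{pmatrix}-c&-b\\ ac&ab-1\end{pmatrix}$, whose entries are polynomials in $(a,b,c)$ and whose determinant equals $c$. This settles $n=1$ with $A_1=Y_1$, $B_1=Z_1$, $C_1=X_1Y_1-1$, $D_1=X_1Z_1$.

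For the inductive step, assume the statement for $n-1$ and set $-(\tilde A+\tilde B\zeta)/(\tilde C+\tilde D\zeta):=\varphi_{x_2,y_2,z_2}\circ\cdots\circ\varphi_{x_n,y_n,z_n}(\zeta)$, where $\tilde A=A_{n-1}(x_2,y_2,z_2,\ldots,x_n,y_n,z_n)$ and similarly for $\tilde B,\tilde C,\tilde D$. Plugging this into the formula above for $\varphi_{x_1,y_1,z_1}$ and clearing the inner denominator gives
\[
\varphi_{x_1,y_1,z_1}\circ\cdots\circ\varphi_{x_n,y_n,z_n}(\zeta)=-\frac{(y_1\tilde C-z_1\tilde A)+(y_1\tilde D-z_1\tilde B)\zeta}{\big((x_1y_1-1)\tilde C-x_1z_1\tilde A\big)+\big((x_1y_1-1)\tilde D-x_1z_1\tilde B\big)\zeta},
\]
from which one reads off $A_n,B_n,C_n,D_n\in\R[X_1,Y_1,Z_1,\ldots,X_n,Y_n,Z_n]$. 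Equivalently, the coefficient matrix of the composition is $M_{x_1,y_1,z_1}\cdots M_{x_n,y_n,z_n}$, so by multiplicativity of the determinant its determinant equals $z_1\cdots z_n$.

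It remains to show $C_n(\omega)+D_n(\omega)\zeta\neq 0$ for $\omega\in({\rm cl}(\demi)^2\times\R_+^*)^n$ and $\zeta\in\demi$. By Lemma \ref{l:cont}, each $\varphi_{x_i,y_i,z_i}$ is a contraction of $(\demi,d_\demi)$ and in particular maps $\demi$ into $\demi$; hence the composition maps $\demi$ into $\demi$, so its value at $\zeta$ is a finite complex number. On the other hand, the numerator $-(A_n+B_n\zeta)$ and the denominator $C_n+D_n\zeta$ of the displayed fraction are precisely the two entries of $M_{x_1,y_1,z_1}\cdots M_{x_n,y_n,z_n}\,(\zeta,1)^t$, and since that matrix has nonzero determinant $z_1\cdots z_n$ it is invertible, so these two entries cannot vanish simultaneously. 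Therefore $C_n(\omega)+D_n(\omega)\zeta=0$ would force the value of the composition to be $\infty$, contradicting finiteness; hence $C_n(\omega)+D_n(\omega)\zeta\neq 0$. The only point demanding any care is this exclusion of a $0/0$ indeterminacy, which is exactly why the determinant is tracked through the induction; the remaining content is a routine matrix computation.
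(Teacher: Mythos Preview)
Your proof is correct and follows the same inductive M\"obius-composition strategy as the paper, with two cosmetic differences worth noting. First, the paper carries out the induction by appending the new map on the \emph{inside} (passing from $\varphi_{x_1,y_1,z_1}\circ\cdots\circ\varphi_{x_n,y_n,z_n}$ to $\varphi_{x_1,y_1,z_1}\circ\cdots\circ\varphi_{x_{n+1},y_{n+1},z_{n+1}}$), while you prepend on the \emph{outside}; either order works and yields the same polynomials up to relabelling. Second, for the non-vanishing $C_n(\omega)+D_n(\omega)\zeta\neq 0$ the paper factors this denominator explicitly as the product of the inductive denominator and the base-case denominator evaluated at the inner image $\varphi_{x_{n+1},y_{n+1},z_{n+1}}(\zeta)\in\demi$, each factor being nonzero; you instead track the determinant $z_1\cdots z_n>0$ of the product matrix and combine invertibility with the finiteness of the composition on $\demi$ (guaranteed by Lemma~\ref{l:cont}) to rule out a vanishing denominator. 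Your determinant argument is a clean and equivalent alternative; it has the small advantage of making the exclusion of the $0/0$ case completely explicit, whereas the paper's factorization achieves the same end by direct computation.
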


We point out that the continuity of $A_n, B_n, C_n,$ and $D_n$ with
respect to the coefficients will be crucial in Proposition
\ref{p:born}. The proof is straightforward. We give it for 
completeness. 

\begin{proof}
We prove the result by induction. Let $n=1$, we have
\[
\varphi_{x,y,z}(\zeta )=-\left( x-\left( y+z\zeta\right)^{-1}\right)^{-1}=-\frac{y+z\zeta}{xy-1+xz\zeta}=-\frac{A_1(x,y,z)+B_1(x,y,z)\zeta}{C_1(x,y,z)+D_1(x,y,z)\zeta}
\]
with
\[
A_1(x,y,z):=y
,~
B_1(x,y,z):=z
,~
C_1(x,y,z):=xy-1
,\text{ and }
D_1(x,y,z):=xz .
\]
Moreover
\[
C_1(x,y,z)+D_1(x,y,z)\zeta =xy-1+xz\zeta= (y+z\zeta )\left( x-\frac{1}{y+z\zeta}\right)\not =0
\]
for all $(x,y,z)\in  {\rm cl}(\demi )^2\times\R_+^*$ and $\zeta\in\demi$ because this is the product of two elements of $\demi$.

Now suppose that we have proved the existence of $A_n,B_n,C_n,D_n$ and
prove the existence of $A_{n+1}$, $B_{n+1}$, $C_{n+1}$, and
$D_{n+1}$. Let $\omega:= (x_1,y_1,z_1,\ldots
,x_{n+1},y_{n+1},z_{n+1})\in \left( {\rm cl}(\demi
  )^2\times\R_+^*\right)^{n+1}$ and $\zeta\in\demi$. Let
$\tilde{\omega}:=(x_1,y_1,z_1,\ldots ,x_n,y_n,z_n)$, we have 
\begin{align*}
\varphi_{x_1,y_1,z_1}\circ\cdots\circ \varphi_{x_{n+1},y_{n+1},z_{n+1}}(\zeta)
& =
\varphi_{x_1,y_1,z_1}\circ\cdots\circ \varphi_{x_n,y_n,z_n}(\varphi_{x_{n+1},y_{n+1},z_{n+1}}(\zeta))
\\ & \hspace*{-4cm} =
-\frac{\displaystyle A_n(\tilde{\omega} )-B_n(\tilde{\omega} )\frac{A_1(x_{n+1},y_{n+1},z_{n+1})+B_1(x_{n+1},y_{n+1},z_{n+1})\zeta}{C_1(x_{n+1},y_{n+1},z_{n+1})+D_1(x_{n+1},y_{n+1},z_{n+1})\zeta}}{\displaystyle C_n(\tilde{\omega} )-D_n(\tilde{\omega} )\frac{A_1(x_{n+1},y_{n+1},z_{n+1})+B_1(x_{n+1},y_{n+1},z_{n+1})\zeta}{C_1(x_{n+1},y_{n+1},z_{n+1})+D_1(x_{n+1},y_{n+1},z_{n+1})\zeta}}
 =
 -\frac{A_{n+1}(\omega )+B_{n+1}(\omega )\zeta}{C_{n+1}(\omega )+D_{n+1}(\omega )\zeta} ,
\end{align*}
with
\begin{align*}
& A_{n+1}(\omega ):= A_n(\tilde{\omega} )C_1(x_{n+1},y_{n+1},z_{n+1})-B_n(\tilde{\omega} )A_1(x_{n+1},y_{n+1},z_{n+1}) ,
\\ &
B_{n+1}(\omega ):= A_n(\tilde{\omega} )D_1(x_{n+1},y_{n+1},z_{n+1})-B_n(\tilde{\omega} )B_1(x_{n+1},y_{n+1},z_{n+1}) ,
\\ &
C_{n+1}(\omega ):= C_n(\tilde{\omega} )C_1(x_{n+1},y_{n+1},z_{n+1})-D_n(\tilde{\omega} )A_1(x_{n+1},y_{n+1},z_{n+1}) ,
\\ &
D_{n+1}(\omega ):= C_n(\tilde{\omega} )D_1(x_{n+1},y_{n+1},z_{n+1})-D_n(\tilde{\omega} )B_1(x_{n+1},y_{n+1},z_{n+1}) .
\end{align*}
Finally since $\varphi_{x_{n+1},y_{n+1},z_{n+1}}$ is a contraction by
Lemma \ref{l:cont}, we have that
$\varphi_{x_{n+1},y_{n+1},z_{n+1}}(\zeta)\in \demi$ and 
\begin{align*}
C_{n+1}(\omega )+D_{n+1}(\omega )\zeta = & \left( C_n(\tilde{\omega } )+D_n(\tilde{\omega } )\varphi_{x_{n+1},y_{n+1},z_{n+1}}(\zeta) \right)
\\ & \quad \quad \quad 
\times \left( C_1(x_{n+1},y_{n+1},z_{n+1})+D_1(x_{n+1},y_{n+1},z_{n+1})\zeta \right) \not=0,
\end{align*}
by induction. This finishes the proof. 
\end{proof}

\section{Main results}\label{s:main}
In this paper, we study a Jacobi-like version of $D^{(\NZ)}_m+V $, given $V
= (V_1, V_2)^t \in \ell^\infty (\NZ,\R^2)$ and $W = (W_1, W_2)^t \in
\ell^\infty (\NZ,\C^2)$, where $\NZ\in\left\{{\Z_+}
  ,\Z\right\}$. Let  
\begin{align}\label{e:vraiH}
H^{(\NZ )}_{m,V,W}:=\left(\begin{array}{cc}
m+V_1  & \td \\
\td^* & -m+V_2 
\end{array}\right) ,
\end{align}
where $\td:=d+W_1 +W_2 \tau $ with $\tau f(n):=f(n+1)$ and $d={\rm Id}
- \tau$. Its study is motivated by the fact that it is an intermediate
model before the study of discrete Dirac operator in weighted spaces,
which would be the treated in another place. The perturbation in $W_i$
is a magnetic/Witten-like perturbation. When the perturbation is
purely magnetic, one could simply gauge away the $W$. However, in some
situations, even if one could remove the magnetic perturbation, it is 
sometimes important to be able to lead the analysis till its very end
in order to be able to choose the best [sequence of] gauge[s]. Having
$W_i$ real is motivated by the fact that this corresponds to a dicrete
relativistic analog of the Witten Laplacian.  


We start with the main result on ${\Z_+}$. It will be proved in Section
\ref{s:N}. 

\begin{theorem}\label{t:spn}
Take $V\in \ell^\infty ({\Z_+},\R^2)$ and $W\in \ell^\infty ({\Z_+},\C^2)$ with:
\begin{align}\label{e:spn1}
\lim_{n\to\infty}V(n)=\lim_{n\to\infty}W(n)=0,
\end{align}
then $\sigma_\ess (H^{({\Z_+} )}_{m,V,W})=\left[ -\sqrt{m^2+4} ,
  -m\right]\cup \left[ m , \sqrt{m^2+4}\right]$. Assuming also that
there exist $\nu_1, \nu_2\in\Z_+\setminus\{0\}$ such that 
\begin{align}\label{e:spn2}
\left.\begin{array}{c}
W_1(n)\not =-1 \mbox{ and } W_2(n)\not=1, \mbox{ for all } n\in{\Z_+},
\\
\\
V-\tau^{\nu_1} V\in\ell^1 ({\Z_+},\R^2) \mbox{ and } W-\tau^{\nu_2}
W\in\ell^1 ({\Z_+},\C^2), 
\end{array}\right. 
\end{align}
then the spectrum of $H^{({\Z_+} )}_{m,V,W}$ is purely absolutely
continuous on $\left(-\sqrt{m^2+4} , -m\right)\cup \left(m ,
  \sqrt{m^2+4}\right)$. 
\end{theorem}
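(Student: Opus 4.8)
The plan is to treat the two halves of the spectrum separately (they are exchanged by a symmetry analogous to $U$ in \eqref{e:U}, so it suffices to handle, say, the interval $(m,\sqrt{m^2+4})$) and, for a fixed energy $\lambda$ in the open interval, to perform a spin-up/down Schur reduction of $H^{({\Z_+})}_{m,V,W}-\lambda$ to a single, non-self-adjoint, Jacobi/Laplacian-like operator on $\ell^2({\Z_+},\C)$ whose coefficients depend on $\lambda$; this is the content of the promised Proposition \ref{p:schur}. Concretely, writing $H-\lambda$ in $2\times2$ block form, the lower-right block $-m+V_2-\lambda$ is boundedly invertible for $\lambda$ away from $\{-m\}$ and $V_2\to0$, so $\lambda\in\sigma(H^{({\Z_+})}_{m,V,W})$ near $(m,\sqrt{m^2+4})$ is equivalent to $0\in\sigma$ of the Schur complement $L(\lambda):=(m+V_1-\lambda)-\td\,(-m+V_2-\lambda)^{-1}\td^{*}$, and absolute continuity of $H$ on the interval will follow from a limiting-absorption-type statement for $L(\lambda+\rmi0)$, uniform on compact subsets. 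The essential-spectrum statement is the easy part: it follows from the $\Z_+=\Z_-\cup\Z_+$ decomposition argument already used in the Proposition of Section \ref{s:gene} together with compactness coming from \eqref{e:spn1}.

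Next I would run the transfer-matrix/iteration machinery of \cite{FHS} on $L(\lambda)$. The Jacobi-like structure of $L(\lambda)$ means that solutions of $L(\lambda)u=\rmi\e u$ (the relevant Green's-function equation) are governed by $2\times2$ transfer matrices, and the key object is the Weyl-type function $z_n$ obtained by truncating at site $n$; the recursion for $z_n$ is exactly a composition of Möbius maps of the form $\varphi_{a,b,c}$ studied in Lemma \ref{l:cont} — this is why that lemma and Lemma \ref{l:pol} were set up. The hypotheses \eqref{e:spn2}, namely $W_1\neq-1$, $W_2\neq1$ and the $\ell^1$-almost-periodicity $V-\tau^{\nu_1}V,\ W-\tau^{\nu_2}W\in\ell^1$, are what make $\varphi$-maps genuinely defined (nonvanishing denominators $C_n+D_n\zeta$, cf.\ Lemma \ref{l:pol}) and what makes the composition of the blocks of length $\mathrm{lcm}(\nu_1,\nu_2)$ a \emph{strict} hyperbolic contraction with contraction factor bounded below $1$ on compact energy sets, by \eqref{e:cont1}, up to an $\ell^1$ error. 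One then shows, via Lemma \ref{l:phiim} and the Poincaré-metric estimates, that $d_{\demi}(z_n,\rmi)$ stays bounded uniformly in $n$, in $\e>0$, and locally uniformly in $\lambda$; equivalently $\Im z_n$ is bounded above and below. Standard subordinacy/Green's-function arguments then give that the boundary values of the resolvent of $L(\lambda)$, hence of $H$, are locally bounded, which yields purely absolutely continuous spectrum on the open interval.

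The main obstacle I anticipate is the passage from the contraction estimate \eqref{e:cont1}, which holds block-by-block only for the \emph{free} (constant-coefficient) pieces, to a uniform bound on $z_n$ in the presence of the perturbation: one must telescope the difference between the true iteration and the periodic reference iteration using $V-\tau^{\nu_1}V$ and $W-\tau^{\nu_2}W$ in $\ell^1$, control the resulting errors in the hyperbolic metric (where Lemma \ref{l:cont}'s second and third inequalities, bounding the diameter of the image, are essential to keep errors additive rather than multiplicative), and do so with constants that are \emph{locally uniform in $\lambda$} across the open interval — degeneration is expected only at the band edges $\pm m$ and $\pm\sqrt{m^2+4}$, where $\Im a$ or $\Im b$ in the $\varphi_{a,b,c}$ presentation tends to $0$. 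A secondary technical point is checking that the Schur complement $L(\lambda)$ really is of the required Jacobi form with coefficients to which Lemma \ref{l:pol} applies; this is essentially an algebraic computation using $\td=d+W_1+W_2\tau$ and the explicit form of $(-m+V_2-\lambda)^{-1}$, and I would relegate it to the proof of Proposition \ref{p:schur}.
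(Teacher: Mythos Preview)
Your overall strategy---Schur reduction to a $\lambda$-dependent Jacobi-type operator, then the M\"obius/transfer-matrix iteration of \cite{FHS} with fixed points of the $\nu$-fold composition and a telescoping $\ell^1$ estimate in the hyperbolic metric---is exactly the paper's route, and the identification of the key difficulty (controlling the telescoped error uniformly in $\lambda$) is accurate. Two points, however, need repair.

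First, and most importantly: the symmetry $U$ of \eqref{e:U} is built from $S:f(n)\mapsto f(-n)$ and therefore does \emph{not} act on $\ell^2(\Z_+,\C^2)$. On the half-line there is no charge-conjugation symmetry exchanging the two spectral intervals, so your reduction to $(m,\sqrt{m^2+4})$ does not go through as written. The paper handles both intervals at once with a single Schur complement: it uses $\Delta^{(\Z_+)}_{2,\lambda,V,W}=\td^*\frac{1}{\lambda-m-V_1}\td-(\lambda+m-V_2)$ (your $L(\lambda)$ is the paper's $\Delta_1$, which it explicitly sets aside as less natural on $\Z_+$ because of where the boundary term in $\td\,\td^*$ versus $\td^*\td$ sits), observes that the fixed point of the limiting M\"obius map satisfies $\Im Z=\tfrac12\sqrt{\tfrac{x-m}{x+m}(4+m^2-x^2)}>0$ for $x$ in \emph{either} open interval, and then uses a three-term-recursion argument (Lemma~\ref{l:maj}) to propagate the bound on $\langle(0,1)^t\delta_0,(H-\lambda)^{-1}(0,1)^t\delta_0\rangle$ to all matrix elements. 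No symmetry is invoked for $\Z_+$; the $U$-trick is used only in the $\Z$ case.

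Second, the contraction factor in \eqref{e:cont1} is \emph{not} bounded away from $1$ uniformly on compact subsets of the real energy interval: it behaves like $(1+c(\Im\lambda)^2)^{-1}\to1$ as $\Im\lambda\to0$. The paper does not need a uniform contraction rate; strict contraction is used only to prove $\Phi_0\circ\cdots\circ\Phi_n(\zeta_n)\to\alpha_0$ for each fixed $\lambda\in\demi$ (Corollary~\ref{c:cvg}). The uniformity in $\lambda$ comes instead from the smoothness of the fixed-point map $Z(\omega)$ of Proposition~\ref{p:born} on a compact parameter neighbourhood of $\omega_{\infty,x}$, which gives $\Im z_n(\lambda)\geq\eta_1>0$ and the Lipschitz bound \eqref{e:in1} with constants independent of $\Im\lambda$.
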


We discuss briefly the necessity of the first line of
\eqref{e:spn2}. 

\begin{remark}
Assume that there is $n_0$ such that $W_1(n_0+1)=-1$ and $W_2(n_0+1)=1$, then
the operator is a direct sum of $H^{(\Z_{n_0+1} )}_{m,V,W}$ and of a
finite matrix. Therefore, it is easy to construct embedded eigenvalues
for this operator and this is an obstruction for the result of the
theorem. For instance suppose that $W_1(k)=W_2(k)=0$ for all $k\neq
n_0$ and $V_1=V_2= 0$. Recalling that in a finite dimensional Hilbert space
$X$ the spectra of 
$MN$ and of $NM$ are equal, for $M,N\in \Bc(X)$, we have
that the point spectrum of $\left(H^{(\Z_{+} )}_{m,0,W}\right)^2$ is exaclty
the one of 
\begin{align*}
m^2{\rm Id}_{\{1,2,\ldots n_0\}}+
\left(\begin{array}{rrrrrr}
1 & -1 & 0 & \ldots & 0& 0
\\
-1 & 2 & -1 & \ldots & 0 & 0
\\
0 & -1 & 2 & \ldots & 0 & 0
\\
0 & 0 & \ddots & \ddots & \vdots & \vdots
\\
0 & 0 & 0 &  \ldots & 2 & -1
\\
0 & 0 & 0 &  \ldots & -1 & 1
\end{array}\right). 
\end{align*}
Therefore the point spectrum of $H^{(\Z_{+} )}_{m,0,W}$ is contained
in $\left\{\pm\sqrt{m^2+ 4 \sin^2(k\pi/2n_0)}, k=0, \ldots,
  n_0-1\right\}$ with total multiplicity $2n_0$.
\end{remark}

We turn to the case of $\Z$, we have this important symmetry of charge:
\begin{align}\label{e:sym}
UH^{(\Z )}_{m,V,W}U=-H^{(\Z )}_{m,\left( -SV_2 , -SV_1 \right)^t
  ,\left( S\overline{W_1} , \tau S\overline{W_2}\right)^t}, 
\end{align}
where $U$ is given by \eqref{e:U}. 

\begin{theorem}\label{t:spz}
Take $V\in \ell^\infty (\Z,\R^2)$ and $W\in \ell^\infty (\Z,\C^2)$ with:
\[
\lim_{n\to\pm\infty}V(n)=\lim_{n\to\pm\infty}W(n)=0,
\]
then $\sigma_\ess (H^{(\Z )}_{m,V,W})=\left[ -\sqrt{m^2+4} ,
  -m\right]\cup \left[ m , \sqrt{m^2+4}\right]$. Assuming also that
there exists $\nu_1, \nu_2\in\Z_+\setminus\{0\}$ such that 
\begin{align}\label{e:spz2}
\left.\begin{array}{c}
W_1(n)\not =-1 \mbox{ and } W_2(n)\not=1, \mbox{ for all } n\in\Z,
\\
\\
V_{|_{{\Z_+}}}-\tau^{\nu_1} V_{|_{{\Z_+}}}\in\ell^1 ({\Z_+},\R^2) \mbox{ and }
W_{|_{{\Z_+}}}-\tau^{\nu_2} W_{|_{{\Z_+}}}\in\ell^1 ({\Z_+},\C^2),
\end{array}\right. 
\end{align}
or alternatively
\begin{align}\label{e:spz2'}
 V_{|_{\Z_-}}-\tau^{\nu_1} V_{|_{\Z_-}}\in\ell^1 (\Z_-,\R^2) \mbox{ and
 } W_{|_{\Z_-}}-\tau^{\nu_2} W_{|_{\Z_-}}\in\ell^1 (\Z_-,\C^2),
\end{align}
then the spectrum of $H^{(\Z )}_{m,V,W}$ is purely absolutely
continuous on $\left(-\sqrt{m^2+4} , -m\right)\cup \left(m ,
  \sqrt{m^2+4}\right)$.
\end{theorem}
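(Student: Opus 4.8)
\textbf{Proof proposal for Theorem \ref{t:spz}.}

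The plan is to deduce the two-sided result from the half-line result, Theorem \ref{t:spn}, by restricting to the right half-line and treating the left half-line as a rank-bounded (in fact, trace-class–resolvent) perturbation. First I would reduce everything to $\Z_+$: decompose $\Z = \Z_-\cup\Z_+$ and write, up to a finite-rank coupling at the junction, $H^{(\Z)}_{m,V,W}$ as a direct sum $H_-\oplus H^{(\Z_+)}_{m,V,W}$, where $H_-$ acts on $\ell^2(\Z_-,\C^2)$. The coupling term couples only the sites $0$ and $-1$ in each spin component, so it is finite rank; hence $\sigma_\ess(H^{(\Z)}_{m,V,W}) = \sigma_\ess(H_-)\cup\sigma_\ess(H^{(\Z_+)}_{m,V,W})$, and since $V,W\to 0$ at both ends, each essential spectrum equals $\left[-\sqrt{m^2+4},-m\right]\cup\left[m,\sqrt{m^2+4}\right]$ by Theorem \ref{t:spn} (applied on $\Z_+$ and, after a reflection, on $\Z_-$). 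This gives the essential-spectrum claim.

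For the absolutely continuous part, the key point is the ``escape to the right'' phenomenon already flagged in the introduction: purely ac spectrum on a half-line setting is stable under changing the operator on a half-line only. Concretely, under hypothesis \eqref{e:spz2}, the restriction $H^{(\Z_+)}_{m,V|_{\Z_+},W|_{\Z_+}}$ has purely ac spectrum on $I:=\left(-\sqrt{m^2+4},-m\right)\cup\left(m,\sqrt{m^2+4}\right)$ by Theorem \ref{t:spn}. I would then invoke the standard fact (e.g.\ via the theory of the boundary behavior of the $m$-functions, or a Kato–Rosenblum/trace-class argument comparing $H^{(\Z)}$ with the decoupled $H_-\oplus H^{(\Z_+)}$) that if a self-adjoint Jacobi-type operator on $\Z$ has a half-line restriction with purely ac spectrum on an interval $I$, then the full operator has purely ac spectrum on $I$. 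In the transfer-matrix language this is transparent: on $I$ the transfer matrices over $\Z_+$ generate a bounded orbit in the hyperbolic plane (this is exactly what the iteration in Section \ref{s:abso} establishes, via Lemmas \ref{l:cont}--\ref{l:pol}), so for every energy in $I$ there is a subordinate-free, polynomially bounded generalized eigenfunction near $+\infty$; the behavior near $-\infty$ is irrelevant because one direction of boundedness suffices to rule out both point spectrum and singular continuous spectrum on $\Z$ (the absence of $\ell^2$ or subordinate solutions at $+\infty$ already forces the spectral measure to be purely ac there). The alternative hypothesis \eqref{e:spz2'} is handled identically after applying the reflection $S$, which swaps $\Z_+$ and $\Z_-$; note the charge-conjugation symmetry \eqref{e:sym} also intertwines the two half-lines, so one may alternatively derive \eqref{e:spz2'} from \eqref{e:spz2} by that symmetry.

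The main obstacle is making the ``half-line ac $\Rightarrow$ full-line ac'' step rigorous in the present non-standard setting: $\td = d + W_1 + W_2\tau$ makes $H^{(\Z)}_{m,V,W}$ a block Jacobi matrix with possibly complex off-diagonal entries in $\td$, but the operator itself is self-adjoint, and the condition $W_1\neq-1$, $W_2\neq 1$ guarantees the relevant off-diagonal blocks are invertible so that the transfer-matrix formalism of Section \ref{s:like} applies without degeneration. I would therefore route the argument through the reduction to the non-self-adjoint Laplacian-like operator of Proposition \ref{p:schur} and the transfer-matrix bounds of Section \ref{s:abso}, establishing boundedness of the $\Z_+$-orbit under \eqref{e:spz2} (which is what Theorem \ref{t:spn}'s proof does anyway), and then quoting the one-directional subordinacy criterion to conclude purely ac spectrum on $I$ for the whole line. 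The finite-rank decoupling at the origin costs nothing for ac spectrum (Kato–Rosenblum), so the passage from the decoupled to the coupled operator on $\Z$ is automatic.
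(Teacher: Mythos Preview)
Your overall strategy---reduce to a half-line bound and push it through to the whole line---is correct in spirit and matches the paper's, but one of the two mechanisms you offer for the gluing step is wrong, and the other needs to be made concrete.

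The Kato--Rosenblum route does \emph{not} work here. A finite-rank (or trace-class) perturbation preserves the absolutely continuous part, but it says nothing about the \emph{absence} of singular spectrum. Under hypothesis \eqref{e:spz2} alone you have no control on the left half-line operator $H_-$; it could well carry singular continuous spectrum inside $I$, so the decoupled operator $H_-\oplus H^{(\Z_+)}_{m,V,W}$ may have singular spectrum on $I$. Passing to the coupled operator by a finite-rank perturbation then tells you only that the ac parts coincide, not that the singular part vanishes. So the sentence ``the finite-rank decoupling at the origin costs nothing for ac spectrum (Kato--Rosenblum), so the passage \ldots\ is automatic'' is a genuine gap.

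The $m$-function route you also mention is the correct one, and it is exactly what the paper does. Lemma \ref{l:rec} writes the full-line Green function $\langle\delta_0,(\Delta^{(\Z)}_{2,\lambda,V,W})^{-1}\delta_0\rangle$ as $\Phi(\alpha_\lambda,\alpha'_\lambda)$, with $\alpha_\lambda$ and $\alpha'_\lambda$ the two half-line Green functions. The point of Corollary \ref{c:born} is that a \emph{uniform} hyperbolic bound on \emph{one} of $\alpha_\lambda,\alpha'_\lambda$ (supplied by Proposition \ref{p:born} under \eqref{e:spz2} or \eqref{e:spz2'}) already bounds $|\Phi(\alpha_\lambda,\alpha'_\lambda)|$, because $|\Phi|\le (\Im(-(b+c\alpha_\lambda)^{-1}))^{-1}$. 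This is where your ``half-line ac $\Rightarrow$ full-line ac'' intuition becomes a quantitative inequality; note that the input is the uniform bound $d_\demi(\alpha_0,\rmi)\le M_1$, not merely the qualitative statement that the half-line spectrum is ac. The proof then feeds this into Proposition \ref{p:schur} to bound $\langle (0,1)^t\delta_n,(H^{(\Z)}_{m,V,W}-\lambda)^{-1}(0,1)^t\delta_n\rangle$ for every $n$ (by translation), and uses the charge symmetry \eqref{e:sym} to obtain the analogous bound for the $(1,0)^t\delta_n$ vectors---a step you do not address. Density and Theorem \ref{t:inac} finish.
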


\begin{remark}
Note that the second line of \eqref{e:spz2} and \eqref{e:spz2'} 
are equivalent  by using the transformation $U$.
\end{remark}

\section{A Laplacian-like approach}\label{s:like}
\subsection{Another form for the resolvent}
The objective is to reduce the analysis of the operator $H^{(\NZ
  )}_{m,V,W}$ on $\ell^2(\NZ ,\C^2)$ to the one of two operators which
are similar to a Laplacian.   
Take $\lambda\in\demi$, $V = (V_1, V_2)^t \in \ell^\infty 
(\NZ,\R^2)$, and $W = (W_1, W_2)^t \in \ell^\infty (\NZ,\C^2)$, we
define  
\[\Delta^{(\NZ )}_{1,\lambda
  ,V,W}:=\td\frac{1}{\lambda+m-V_2 }\td^*-(\lambda-m-V_1  )\] 
and
\begin{align}\label{e:delta2}
\Delta^{(\NZ )}_{2,\lambda
  ,V,W}:=\td^*\frac{1}{\lambda-m-V_1 }\td-(\lambda+m-V_2 ).
\end{align} 
Note that to lighten the notation we have dropped the dependency on
$m$. However we keep the one in $\lambda$. We first check their invertibility. 

\begin{proposition}\label{p:inv}
Let $\lambda\in\demi$, $V = (V_1, V_2)^t \in \ell^\infty (\NZ,\R^2)$,
and $W = (W_1, W_2)^t \in \ell^\infty (\NZ,\C^2)$, then $\Delta^{(\NZ
  )}_{1,\lambda ,V,W}$ and $\Delta^{(\NZ )}_{2,\lambda ,V,W}$ are
invertible. 
\end{proposition}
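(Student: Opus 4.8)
The plan is to show that $\Delta^{(\NZ)}_{1,\lambda,V,W}$ and $\Delta^{(\NZ)}_{2,\lambda,V,W}$ are invertible by exhibiting them as sums of an operator with nonzero imaginary part and a self-adjoint-like operator, so that invertibility follows from a numerical-range argument. Concretely, since $\lambda\in\demi$, write $\lambda = \Re\lambda + \rmi\,\Im\lambda$ with $\Im\lambda>0$. For $\Delta^{(\NZ)}_{1,\lambda,V,W} = \td\,(\lambda+m-V_2)^{-1}\td^* - (\lambda - m - V_1)$, the multiplication operator $(\lambda + m - V_2)$ has imaginary part $\Im\lambda>0$ everywhere (recall $V_2$ is real-valued and bounded, so $\lambda + m - V_2$ stays in $\demi$ pointwise and is boundedly invertible), and likewise $\lambda - m - V_1 \in \demi$ pointwise. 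The point is then that $\td\,(\lambda+m-V_2)^{-1}\td^*$ has imaginary part of a definite sign and $-(\lambda-m-V_1)$ has imaginary part of the \emph{opposite} sign, so one has to be a little careful.

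Here is the way I would actually organize it. Fix $f \in \ell^2(\NZ,\C)$, $f\neq 0$, and compute $\Im\langle f, \Delta^{(\NZ)}_{1,\lambda,V,W} f\rangle$. For the first term, $\langle f, \td\,(\lambda+m-V_2)^{-1}\td^* f\rangle = \langle \td^* f,\, (\lambda+m-V_2)^{-1}\,\td^* f\rangle$; writing $g = \td^* f$ and using that for a positive function $w$ one has $\Im\big((\,\overline{\lambda}+m-V_2)/|\lambda+m-V_2|^2\big) = -\Im\lambda/|\lambda+m-V_2|^2 < 0$, we get
\begin{align*}
\Im\langle f, \td\,(\lambda+m-V_2)^{-1}\td^* f\rangle
= -\Im\lambda \sum_{n\in\NZ} \frac{|\td^* f(n)|^2}{|\lambda+m-V_2(n)|^2} \leq 0.
\end{align*}
For the second term, $\Im\langle f, -(\lambda-m-V_1)f\rangle = -\Im\lambda\,\|f\|^2 < 0$. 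Hence $\Im\langle f, \Delta^{(\NZ)}_{1,\lambda,V,W}f\rangle \leq -\Im\lambda\,\|f\|^2 < 0$ for all $f\neq 0$, so $0$ is not in the closure of the numerical range of $\Delta^{(\NZ)}_{1,\lambda,V,W}$; the same bound applied to the adjoint $\big(\Delta^{(\NZ)}_{1,\lambda,V,W}\big)^* = \td\,(\overline\lambda+m-V_2)^{-1}\td^* - (\overline\lambda - m - V_1)$ (whose numerical range lies in the upper half-plane, bounded below by $\Im\lambda\|f\|^2$) shows it is also injective with closed range. A bounded operator that is injective, has closed range, and has dense range (equivalently, whose adjoint is injective) is invertible; so $\Delta^{(\NZ)}_{1,\lambda,V,W}$ is invertible, and the case of $\Delta^{(\NZ)}_{2,\lambda,V,W}$ in \eqref{e:delta2} is identical with the roles of the two diagonal entries swapped (there the sign pattern is reversed but the conclusion $|\Im\langle f, \Delta^{(\NZ)}_{2,\lambda,V,W}f\rangle| \geq \Im\lambda\,\|f\|^2$ still holds).

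The only genuine subtlety — and the step I expect to need the most care — is the sign bookkeeping: one must check that the imaginary parts of the two summands in each $\Delta$ do \emph{not} cancel, and in fact reinforce. This works because the term $\td\,(\lambda+m-V_2)^{-1}\td^*$ contributes a nonpositive imaginary part while $-(\lambda - m - V_1)$ contributes a strictly negative one, so there is no cancellation. (If the signs had gone the other way one would instead use that the imaginary part of the resolvent term is small but the diagonal term dominates; here no such argument is needed.) One should also note at the outset that $(\lambda + m - V_2)^{-1}$ and $(\lambda - m - V_1)^{-1}$ are well-defined bounded operators precisely because $V_1, V_2$ are real and $\Im\lambda > 0$, which keeps $\lambda \pm m - V_i$ uniformly away from the real axis; this is what makes the expressions defining $\Delta^{(\NZ)}_{j,\lambda,V,W}$ meaningful in the first place.
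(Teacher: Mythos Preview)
Your argument is correct and follows essentially the same approach as the paper: both proofs bound the imaginary part of the quadratic form and conclude invertibility via a numerical-range argument. The only cosmetic difference is that the paper factors out the $-\lambda$ term, writing $\Delta^{(\NZ)}_{j,\lambda,V,W}=A-\lambda$ with $A$ having numerical range in the closed lower half-plane, and then invokes a Numerical Range Theorem to get $\demi\subset\rho(A)$; you instead compute $\Im\langle f,\Delta f\rangle\leq -\Im\lambda\,\|f\|^2$ directly and spell out the injective/closed-range/dense-range steps by hand.
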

\begin{proof}
For $b\in \mathcal{B}\left( \ell^2(\NZ,\mathbb{C})\right)$, $X,Y\in
\ell^\infty (\NZ,\mathbb{R})$ and $\mu\in\demi$ let $A_{\mu , b,X,Y}
:=b^*\left(\mu -X \right)^{-1}b+Y $, then 
\begin{align}\label{e:im}
\Im \langle f , A_{\mu ,b,X,Y} f\rangle =-\Im (\mu )\left\|
  \frac{1}{|\mu -X |}bf\right\|^2 \leq 0, 
\end{align}
for $f\in \ell^2({\NZ},\mathbb{C})$.
With the Numerical Range Theorem (e.g., \cite[Lemma B.1]{BoGo}) we
derive that we have $\demi \subset\rho (A_{\mu ,b,X,Y})$, the resolvent set of
$A_{\mu ,b,X,Y}$. Since
\begin{align*}
\Delta^{(\NZ )}_{1,\lambda ,V,W}=A_{\lambda
  ,\td^*,V_2-m,V_1+m}-\lambda \quad\text{ and }\quad \Delta^{(\NZ
  )}_{2,\lambda ,V,W}=A_{\lambda ,\td,V_1+m,V_2-m}-\lambda , 
\end{align*}
we get $\Delta^{(\NZ )}_{1,\lambda ,V}$ and $\Delta^{(\NZ
  )}_{2,\lambda ,V}$ are invertible. 
\end{proof}

We give a kind of Schur's Lemma, so as to compute the inverse of the
Dirac operator, see also \cite{DES}, \cite{BoGo}, and \cite{JeNe} for
some applications in the continuous setting.

\begin{proposition}\label{p:schur}
Let $\lambda\in\demi$, $V = (V_1, V_2)^t \in \ell^\infty (\NZ,\R^2)$, and $W = (W_1, W_2)^t \in \ell^\infty (\NZ,\C^2)$. Then :
\[ \left( H^{(\NZ )}_{m,V,W}-\lambda \right)^{-1}=\left(\begin{array}{cc}
(\Delta^{(\NZ )}_{1,\lambda ,V,W})^{-1} & 0 \\
0 & (\Delta^{(\NZ )}_{2,\lambda ,V,W})^{-1}
\end{array}\right)\left(\begin{array}{cc}
1 & \td\frac{1}{\lambda +m -V_2 } \\
\td^*\frac{1}{\lambda -m -V_1 } & 1
\end{array}\right). \]
\end{proposition}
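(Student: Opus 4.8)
The statement is a Schur-complement (Feshbach) formula, so the plan is purely computational: verify directly that the claimed right-hand side is a two-sided inverse of $H^{(\NZ)}_{m,V,W}-\lambda$. First I would note that Proposition \ref{p:inv} guarantees that $\Delta^{(\NZ)}_{1,\lambda,V,W}$ and $\Delta^{(\NZ)}_{2,\lambda,V,W}$ are invertible, and that the scalar multiplication operators $\lambda\pm m-V_i$ are invertible in $\mathcal{B}(\ell^2(\NZ,\C))$ since $V_i$ is real and $\lambda\in\demi$ forces $\lambda\pm m-V_i$ to have nonvanishing imaginary part; hence every operator appearing on the right-hand side is well defined and bounded. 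Write $a_1:=\lambda-m-V_1$, $a_2:=\lambda+m-V_2$ (both invertible), so that by definition $\Delta_1=\td\, a_2^{-1}\td^*-a_1$ and $\Delta_2=\td^*\, a_1^{-1}\td-a_2$, and $H^{(\NZ)}_{m,V,W}-\lambda=\left(\begin{smallmatrix}-a_1 & \td\\ \td^* & -a_2\end{smallmatrix}\right)$.

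Then I would compute the product of $H^{(\NZ)}_{m,V,W}-\lambda$ with the candidate inverse, block by block. Multiplying $\left(\begin{smallmatrix}-a_1 & \td\\ \td^* & -a_2\end{smallmatrix}\right)$ by $\left(\begin{smallmatrix}\Delta_1^{-1} & 0\\ 0 & \Delta_2^{-1}\end{smallmatrix}\right)\left(\begin{smallmatrix}1 & \td\, a_2^{-1}\\ \td^*\, a_1^{-1} & 1\end{smallmatrix}\right)$, the $(1,1)$ entry is $(-a_1\Delta_1^{-1}+\td\, a_2^{-1}\td^*\,\Delta_1^{-1}) = (\td\, a_2^{-1}\td^*-a_1)\Delta_1^{-1}=\Delta_1\Delta_1^{-1}=1$; the $(2,2)$ entry is symmetric with roles of $1,2$ swapped and gives $1$ likewise. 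For the $(1,2)$ entry one gets $-a_1\Delta_1^{-1}\td\, a_2^{-1}+\td\, a_2^{-1} = (-a_1+\td\, a_2^{-1}\td^*)\Delta_1^{-1}\cdot(\text{?})$ — here one must instead group as $\bigl(-a_1\Delta_1^{-1}\td+\td\,a_2^{-1}\td^*\Delta_1^{-1}\td+\td\bigr)a_2^{-1}$... so the cleaner route is to use the identity $-a_1\Delta_1^{-1}+\td a_2^{-1}\td^*\Delta_1^{-1}=1$ already established and write the $(1,2)$ entry as $\bigl(-a_1\Delta_1^{-1}+\td a_2^{-1}\td^*\Delta_1^{-1}\bigr)\td a_2^{-1}-\td a_2^{-1}\td^*\Delta_1^{-1}\td a_2^{-1}+\td a_2^{-1}$; the first term is $\td a_2^{-1}$, and I claim the remaining two cancel because $\td^*\Delta_1^{-1}\td a_2^{-1}=$ (after using $\Delta_1\Delta_1^{-1}=1$ on the left in the form $\td^* = (\Delta_2+a_2)a_1\cdot$...). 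The slicker and less error-prone organization is to check instead that the candidate is a \emph{right} inverse by the computation just sketched and a \emph{left} inverse by the mirror computation, using throughout only the two defining relations $\td a_2^{-1}\td^*-a_1=\Delta_1$ and $\td^* a_1^{-1}\td-a_2=\Delta_2$ together with $\Delta_i\Delta_i^{-1}=\Delta_i^{-1}\Delta_i=1$.

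The only genuine subtlety — and the step I expect to be the main obstacle in writing it cleanly — is the vanishing of the off-diagonal blocks, which hinges on the "cross" identity relating $\td^*\Delta_1^{-1}\td$ to $\Delta_2$: concretely, $\Delta_2^{-1}\td^* a_1^{-1} = a_2^{-1}\td^*\Delta_1^{-1}$ (equivalently $\td^*\Delta_1^{-1}a_1 - a_2\Delta_2^{-1}\td^* = 0$ on suitable vectors), which is exactly the compatibility between the two Schur complements. I would prove this auxiliary identity first — it follows formally by multiplying $\Delta_1 = \td a_2^{-1}\td^* - a_1$ on the right by $a_1^{-1}\td$ and $\Delta_2=\td^* a_1^{-1}\td - a_2$ appropriately, or most transparently by just verifying that $\left(\begin{smallmatrix}-a_1 & \td\\ \td^* & -a_2\end{smallmatrix}\right)$ times the candidate equals the identity and separately that the candidate times $\left(\begin{smallmatrix}-a_1 & \td\\ \td^* & -a_2\end{smallmatrix}\right)$ equals the identity, since both one-sided inverses existing forces the operator to be invertible with that inverse. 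Since all operators are bounded and everything reduces to manipulating the two algebraic relations, no functional-analytic input beyond Proposition \ref{p:inv} is needed, and the proof is a short bounded-operator computation.
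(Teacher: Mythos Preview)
Your overall strategy (verify the formula by multiplying out) is sound and is a legitimate alternative to the paper's route, which instead \emph{derives} the formula by solving the $2\times 2$ block system $(H-\lambda)f=g$ via elimination. However, your explicit block computation is garbled: with $R$ denoting the candidate and $H-\lambda=\left(\begin{smallmatrix}-a_1 & \td\\ \td^* & -a_2\end{smallmatrix}\right)$, the $(1,1)$ entry of $(H-\lambda)R$ is $-a_1\Delta_1^{-1}+\td\,\Delta_2^{-1}\td^*a_1^{-1}$, not $-a_1\Delta_1^{-1}+\td\,a_2^{-1}\td^*\Delta_1^{-1}$ as you wrote. This is why you ran into an apparent ``cross identity'' obstacle for the off-diagonal blocks.

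The clean fix is to multiply in the other order. Writing $R=D\,M$ with $D={\rm diag}(\Delta_1^{-1},\Delta_2^{-1})$ and $M=\left(\begin{smallmatrix}1 & \td a_2^{-1}\\ \td^* a_1^{-1} & 1\end{smallmatrix}\right)$, one line gives
\[
M(H-\lambda)=\begin{pmatrix} -a_1+\td a_2^{-1}\td^* & \td-\td\\ -\td^*+\td^* & \td^* a_1^{-1}\td-a_2\end{pmatrix}=\begin{pmatrix}\Delta_1 & 0\\ 0 & \Delta_2\end{pmatrix},
\]
so $R(H-\lambda)=I$ with no cross identity needed. Since $H$ is self-adjoint and $\lambda\in\demi$, the operator $H-\lambda$ is already known to be invertible, and a bounded left inverse of an invertible operator is the inverse; this finishes the proof. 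So your plan works, but the ``main obstacle'' you anticipated is an artifact of multiplying on the wrong side. The paper's elimination argument and this verification are equally short once organized correctly; the paper's version has the minor advantage of not needing to invoke invertibility of $H-\lambda$ separately.
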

\begin{proof} We set $(H^{(\NZ )}_{m,V,W}-\lambda )f=g$. This gives:
\begin{align*}
&\left\{\begin{array}{ccc}
\displaystyle (V_1 -\lambda +m)f_1+\td f_2 &=& g_1 \\
\displaystyle \td^*f_1 + (V_2 -\lambda -m)f_2 &=& g_2 \\
\end{array}\right. , \quad\quad  \left\{\begin{array}{ccc}
f_1 &=& \displaystyle\frac{1}{V_1 -\lambda +m}(g_1-\td f_2)\\
f_2 &=& \displaystyle \frac{1}{V_2 -\lambda -m}(g_2-\td^*f_1) \\
\end{array}\right.\\
&\left\{\begin{array}{ccc}
f_1 &=& \displaystyle \frac{1}{V_1 -\lambda +m}\left(g_1-\td\frac{1}{V_2 -\lambda -m}(g_2-\td^*f_1)\right)\\
f_2 &=& \displaystyle \frac{1}{V_2 -\lambda -m}\left(g_2-\td^*\frac{1}{V_1 -\lambda +m}(g_1-\td f_2)\right) \\
\end{array}\right. \\
&\left\{\begin{array}{ccc}
\displaystyle \left(\td\frac{1}{\lambda +m -V_2 }\td^*-(\lambda -m-V_1 )\right)f_1 &=&\displaystyle  g_1 +\td\frac{1}{\lambda +m -V_2 }g_2\\
\displaystyle\left(\td^*\frac{1}{\lambda -m -V_1 }\td-(\lambda
  +m-V_2 )\right)f_2 &=& \displaystyle \td^*\frac{1}{\lambda -m
  -V_1 }g_1 +g_2\\ 
\end{array}\right. 
\end{align*}
Since $\Delta^{(\NZ )}_{1,\lambda ,V,W}$ and $\Delta^{(\NZ
  )}_{2,\lambda ,V,W}$ are invertible, we obtain the result. 
\end{proof}

\subsection{Study of the truncated operator}

Note that in \eqref{e:delta2}, if we forget about the terms in
$\lambda$, $V$, and  $W$, we obtain a Laplacian on ${\Z_+}$. Moreover,
motivated by the results of Sections \ref{s:N} and \ref{s:Z}, it is 
enough to focus the analysis on the study of $\Delta^{({\Z_+}
  )}_{2,\lambda ,V,W}$. Therefore, we stress that we will not study
$\Delta^{({\Z_+} )}_{1,\lambda ,V,W}$ at all. In fact, the
latter leads to some technical complications and is less natural,
i.e., it is not a direct analogue of the Laplacian on ${\Z_+}$.

As in \cite{FHS}, we reduce the problem to $\Z_k$ for $k\in {\Z_+}$. 
We define the truncated operator
$\td=\td(n)  \in\mathcal{B}\left(\ell^2(\Z_n,\C)\right)$ by 
\[
\td = d+W_1+W_2 \tau.
\]
Now we define $\Delta^{(n)}_{\lambda ,V,W}\in\mathcal{B}\left(\ell^2(\Z_n,\C)\right)$ by
\[
\Delta^{(n)}_{\lambda ,V,W}:=
\td ^*\frac{1}{\lambda-m-V_{1|_{\Z_n}} }\td-(\lambda+m-V_{2|_{\Z_n}}  )
\]
Again to lighten notation, we drop the dependency in $m$. We point out that:
\[\Delta^{(0)}_{\lambda ,V,W}=\Delta^{({\Z_+} )}_{2,\lambda ,V,W} .
\]

\begin{proposition}\label{p:inv2}
Let $\lambda\in\demi$, $V = (V_1, V_2)^t \in \ell^\infty ({\Z_+},\R^2)$, and $W = (W_1, W_2)^t \in \ell^\infty ({\Z_+},\C^2)$, then $\Delta^{(n )}_{\lambda ,V,W}$ is invertible for all $n\in{\Z_+}$.
\end{proposition}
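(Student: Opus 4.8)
The plan is to mimic the proof of Proposition \ref{p:inv}, applying the Numerical Range Theorem to the truncated operator. The key observation is that, just as $\Delta^{({\Z_+})}_{2,\lambda,V,W}$ fits the abstract template $A_{\mu,b,X,Y}:=b^*(\mu-X)^{-1}b+Y$, so does $\Delta^{(n)}_{\lambda,V,W}$: with $\mu=\lambda$, $b=\td(n)$ acting on $\ell^2(\Z_n,\C)$, $X=V_{1|_{\Z_n}}-m$, and $Y=V_{2|_{\Z_n}}+m$, one has $\Delta^{(n)}_{\lambda,V,W}=A_{\lambda,\td(n),V_{1|_{\Z_n}}-m,V_{2|_{\Z_n}}+m}-\lambda$. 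Since $X$ is real-valued and bounded, $\mu-X$ is invertible for $\mu\in\demi$, so the expression makes sense; and the computation \eqref{e:im} is purely formal and goes through verbatim on $\ell^2(\Z_n,\C)$, giving $\Im\langle f,A_{\lambda,\td(n),V_{1|_{\Z_n}}-m,V_{2|_{\Z_n}}+m}f\rangle\leq 0$ for all $f\in\ell^2(\Z_n,\C)$.

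Concretely, the steps are: first, record that $\td(n)$ is a bounded operator on $\ell^2(\Z_n,\C)$ (immediate, as $d$, $W_1$, and $W_2\tau$ are all bounded) and that $\lambda-m-V_{1|_{\Z_n}}$ is invertible with bounded inverse since $V_1$ is real-valued and $\Im\lambda>0$. Second, rewrite $\Delta^{(n)}_{\lambda,V,W}$ in the $A_{\mu,b,X,Y}$ form as above. Third, invoke \eqref{e:im} (which only used self-adjointness of $b^*(\mu-X)^{-1}b$'s structure and reality of $X$, hence is dimension- and index-set-independent) to conclude the numerical range of $A_{\lambda,\td(n),\cdot,\cdot}$ lies in the closed lower half-plane, equivalently $\demi\subset\rho(A_{\lambda,\td(n),\cdot,\cdot})$ by the Numerical Range Theorem \cite[Lemma B.1]{BoGo}. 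Fourth, since $\lambda\in\demi$, deduce $A_{\lambda,\td(n),V_{1|_{\Z_n}}-m,V_{2|_{\Z_n}}+m}-\lambda$ is invertible, i.e.\ $\Delta^{(n)}_{\lambda,V,W}$ is invertible.

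There is essentially no obstacle: this proposition is a routine repetition of Proposition \ref{p:inv} with $\NZ$ replaced by $\Z_n$, and the only thing to check is that nothing in the argument for Proposition \ref{p:inv} used the specific index set — it did not, as the Numerical Range Theorem applies on any Hilbert space. One might remark for clarity that the statement could equally be phrased as "for all $n\in\Z_+$" with the same one-line proof, since $\ell^2(\Z_n,\C)$ is just an abstract separable Hilbert space as far as the argument is concerned. The proof can therefore be presented in two or three lines, referring back to the proof of Proposition \ref{p:inv}.

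\begin{proof}
Fix $n\in{\Z_+}$. As in the proof of Proposition \ref{p:inv}, for $b\in\mathcal{B}\left(\ell^2(\Z_n,\C)\right)$, $X,Y\in\ell^\infty(\Z_n,\R)$, and $\mu\in\demi$, set $A_{\mu,b,X,Y}:=b^*(\mu-X)^{-1}b+Y$; the estimate \eqref{e:im} holds verbatim on $\ell^2(\Z_n,\C)$, so by the Numerical Range Theorem \cite[Lemma B.1]{BoGo} we have $\demi\subset\rho(A_{\mu,b,X,Y})$. Since $\td(n)\in\mathcal{B}\left(\ell^2(\Z_n,\C)\right)$ and
\[
\Delta^{(n)}_{\lambda,V,W}=A_{\lambda,\td(n),V_{1|_{\Z_n}}-m,V_{2|_{\Z_n}}+m}-\lambda,
\]
with $\lambda\in\demi$, we conclude that $\Delta^{(n)}_{\lambda,V,W}$ is invertible.
\end{proof}
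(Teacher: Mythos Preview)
Your approach is exactly that of the paper, whose proof is the single line ``This is essentially the same proof as for Proposition \ref{p:inv}''; you have simply spelled out the details. One small slip: the identification should read $\Delta^{(n)}_{\lambda,V,W}=A_{\lambda,\td(n),V_{1|_{\Z_n}}+m,\,V_{2|_{\Z_n}}-m}-\lambda$ (the signs on $m$ are swapped in your display), but this is immaterial to the argument since it only uses that $X$ and $Y$ are real-valued and bounded.
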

\begin{proof}
This is essentially the same proof as for Proposition \ref{p:inv}.
\end{proof}

We study the related Green function:
\[
\alpha_n:=\left\langle\delta_n , \left(\Delta^{(n)}_{\lambda ,V,W}\right)^{-1}\delta_n\right\rangle,
\]
where $\delta_n(m):=1$ if and only if $n=m$ and $0$ otherwise.
The objective is to bound $\alpha_0$ independently of $\lambda$.
We give the first property of $\alpha_n$.

\begin{proposition}\label{p:dem}
Take $\lambda\in\demi$,  $V\in \ell^\infty ({\Z_+},\R^2)$, and $W\in
\ell^\infty ({\Z_+},\C^2)$ then 
\[
\alpha_n=\left\langle\delta_n , \left(\Delta^{(n)}_{\lambda
      ,V,W}\right)^{-1}\delta_n\right\rangle\in\demi, 
\]
for all $n\in{\Z_+}$.
\end{proposition}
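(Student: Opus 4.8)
The plan is to exhibit $\alpha_n$ as the image of $\rmi$ (or some point of $\demi$) under a map of the form $\varphi_{a,b,c}$ from Lemma \ref{l:cont}, and then invoke that lemma's conclusion that such maps take values in $\demi$. Concretely, I would first argue directly for $n$ large enough that the analysis reduces to a one–step recursion. Write $\Delta^{(n)}_{\lambda,V,W}$ in block form with respect to the splitting $\ell^2(\Z_n,\C)=\C\delta_n\oplus\ell^2(\Z_{n+1},\C)$. Because $\td=d+W_1+W_2\tau$ is a nearest–neighbour type operator, the off–diagonal blocks connecting $\delta_n$ to $\ell^2(\Z_{n+1},\C)$ only involve $\delta_{n+1}$, and the compression of $\Delta^{(n)}_{\lambda,V,W}$ to $\ell^2(\Z_{n+1},\C)$ is precisely $\Delta^{(n+1)}_{\lambda,V,W}$. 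A Schur complement (Feshbach) computation then yields
\begin{align*}
\alpha_n=\Big(\big(\Delta^{(n)}_{\lambda,V,W}\big)_{nn}-r_n^*\,\big(\Delta^{(n+1)}_{\lambda,V,W}\big)^{-1}\,r_n\Big)^{-1},
\end{align*}
where $\big(\Delta^{(n)}_{\lambda,V,W}\big)_{nn}$ is the scalar diagonal entry and $r_n$ is the (one–dimensional range) coupling vector, itself supported on $\delta_{n+1}$. Thus $r_n^*\,\big(\Delta^{(n+1)}_{\lambda,V,W}\big)^{-1}\,r_n = c_n^2\,\alpha_{n+1}$ for an explicit constant $c_n>0$ (coming from the coefficients of $\td$ at site $n$; the hypotheses $W_1\neq -1$, $W_2\neq 1$, or rather here just boundedness, ensure it is a genuine positive number, and positivity of $c_n^2$ is what matters).

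Next I would compute the scalar diagonal entry. One finds $\big(\Delta^{(n)}_{\lambda,V,W}\big)_{nn}=-(\lambda+m-V_2(n))+ \text{(a term of the form }|\,\cdot\,|^2/(\lambda-m-V_1(n))\,)$. Since $\lambda\in\demi$ and $V_1(n),V_2(n)\in\R$, we have $\lambda-m-V_1(n)\in\demi$, so $-1/(\lambda-m-V_1(n))$ lies in $\demi$ and, after multiplying by the nonnegative real coefficient $|\cdot|^2$ and adding $-(\lambda+m-V_2(n))\in -\overline{\demi}+\demi\subset\demi$... more carefully: $-(\lambda+m-V_2(n))$ has strictly negative imaginary part, so I must be a little careful about signs. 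The correct bookkeeping is the one already done in Proposition \ref{p:inv}: set $b:=-\big(\Delta^{(n)}_{\lambda,V,W}\big)_{nn}$, equivalently rearrange so that
\begin{align*}
\alpha_n=-\Big(a-\big(b+c_n^2\,\alpha_{n+1}\big)^{-1}\Big)^{-1}
\end{align*}
with suitable $a=a_n(\lambda)\in\demi$ and $b=b_n(\lambda)\in\overline{\demi}$ — i.e. $\alpha_n=\varphi_{a_n,b_n,c_n^2}(\alpha_{n+1})$. Checking $\Im a_n>0$ and $\Im b_n\geq 0$ is the heart of the matter and is exactly the $\Im\langle f,A_{\mu,b,X,Y}f\rangle\leq 0$ computation of Proposition \ref{p:inv}, read off on the rank-one pieces.

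Finally, I would close the argument by an induction that goes "from infinity downward", which is legitimate because $\Z_n$ is finite... except it is not finite. So instead I would run the induction on the \emph{truncation in the other variable}: more honestly, one should first establish the claim for $\Delta^{(n)}$ on the half-line directly, not by reduction to $\Delta^{(n+1)}$. The cleanest route: apply the Numerical Range argument of Proposition \ref{p:inv} not to get invertibility but to locate the resolvent; namely, for any self-adjoint-modulo-$\demi$ operator $A$ with $\Im\langle f,Af\rangle\le 0$ and $\lambda\in\demi$, the entry $\langle\delta_n,(A-\lambda)^{-1}\delta_n\rangle$ lies in $\demi$ because $(A-\lambda)^{-1}$ maps into the closed lower half-plane in the quadratic-form sense and inverting flips it — precisely, $\Im\langle g,(A-\lambda)^{-1}g\rangle = \Im\lambda\,\|(A-\lambda)^{-1}g\|^2 - \Im\langle (A-\lambda)^{-1}g, A (A-\lambda)^{-1}g\rangle>0$ for $g\neq 0$. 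Taking $g=\delta_n$ and using $\Delta^{(n)}_{\lambda,V,W}=A_{\lambda,\td,V_1+m-? ,\ldots}-\lambda$ as in Proposition \ref{p:inv} gives $\alpha_n\in\demi$ immediately, with no recursion needed.

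I expect the main obstacle to be purely bookkeeping: getting the signs right so that the operator one feeds into the Numerical Range / imaginary-part computation really does satisfy $\Im\langle f,\cdot\, f\rangle\le 0$ on the nose (the $m$ shifts and the $\lambda\pm m - V_i$ denominators must land in $\demi$, which they do since $\Im\lambda>0$ and $V_i$ is real), and then confirming that taking a diagonal matrix element followed by inversion sends the closed lower half-plane (strictly, off the real axis) into $\demi$. Once that lemma-level fact is isolated, the proof is one line: $\alpha_n=\langle\delta_n,(\Delta^{(n)}_{\lambda,V,W})^{-1}\delta_n\rangle$ with $\Delta^{(n)}_{\lambda,V,W}+\lambda$ of the form $A_{\lambda,\td,\cdot,\cdot}$ from \eqref{e:im}, hence $\Im\langle\delta_n,(\Delta^{(n)}_{\lambda,V,W})^{-1}\delta_n\rangle>0$. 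I would mention explicitly that the recursive identity $\alpha_n=\varphi_{a_n,b_n,c_n}(\alpha_{n+1})$ — which is the form actually used downstream in Proposition \ref{p:born} via Lemmas \ref{l:cont}, \ref{l:phiim}, \ref{l:pol} — drops out of the same Schur complement computation, so it costs nothing to record it here as well.
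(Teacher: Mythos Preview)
Your final approach is correct and is exactly the paper's: one computes directly that $\Im(\alpha_n)=\Im(\lambda)\big(\|\,|\lambda-m-V_1|^{-1}\td f\|^2+\|f\|^2\big)>0$ with $f=(\Delta^{(n)}_{\lambda,V,W})^{-1}\delta_n$, using the dissipative structure \eqref{e:im}. (The Schur-complement detour is unnecessary here; the recursive identity $\alpha_n=\varphi_{a_n,b_n,c_n}(\alpha_{n+1})$ you propose to record alongside is established separately as Proposition~\ref{p:rec}.)
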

\begin{proof}
We have
\begin{align*}
\Im (\alpha_n)= 
\Im (\lambda )\left(\left\|
    \frac{1}{|\lambda-m-V_{1|_{\Z_n}} |}\td\left(
        \Delta^{(n)}_{\lambda           ,V,W}  
\right)^{-1}\delta_n \right\|^2 + \left\|\left(
\Delta^{(n)}_{\lambda ,V,W}
\right)^{-1}\delta_n\right\|^2 \right) .
\end{align*}
So $\Im (\alpha_n) >0$ because $\lambda\in\demi$ and $\left(
\Delta^{(n)}_{\lambda ,V,W}
\right)^{-1}\delta_n\not =0$.
\end{proof}

We follow the strategy of \cite{FHS} and express $\alpha_n$ with the help of $\alpha_{n+1}$. The aim is to use a fixed point argument in order to recover some bounds on $\alpha_0$.

\begin{proposition}\label{p:rec}
Take $\lambda\in\demi$,  $V\in \ell^\infty ({\Z_+},\R^2)$,  $W\in \ell^\infty ({\Z_+},\C^2)$ with $W_1(k)\not =-1$ for all $k\in{\Z_+}$, and $n\in{\Z_+}$. By setting
\begin{align}
\nonumber
\Phi_n(z) & := \varphi_{a_n,b_n, c_n}(z) = - \left(
a_n-\left(
b_n +c_nz
\right)^{-1}
\right)^{-1} \\
\nonumber
a_n & := \lambda +m -V_2(n)\in\demi \\
\label{e:bn}
b_n & :=(\lambda -m -V_1(n))|1+W_1(n)|^{-2}\in\demi \\
\nonumber
c_n & := \left|\frac{1-W_2(n)}{1+W_1(n)}\right|^2\in\R_+,
\end{align}
we obtain
$ \alpha_n=\Phi_n(\alpha_{n+1}) $.
\end{proposition}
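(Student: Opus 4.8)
The plan is to compute the diagonal Green function $\alpha_n$ of the truncated operator by a Schur-complement (rank-one restriction) argument, exploiting the fact that $\Delta^{(n)}_{\lambda,V,W}$ acting on $\ell^2(\Z_n,\C)$ and $\Delta^{(n+1)}_{\lambda,V,W}$ acting on $\ell^2(\Z_{n+1},\C)$ differ only in the single ``corner'' site $n$. Concretely, write $\Z_n=\{n\}\sqcup\Z_{n+1}$ and decompose $\ell^2(\Z_n,\C)=\C\delta_n\oplus\ell^2(\Z_{n+1},\C)$. First I would expand $\Delta^{(n)}_{\lambda,V,W}=\td^*\frac1{\lambda-m-V_{1|_{\Z_n}}}\td-(\lambda+m-V_{2|_{\Z_n}})$ in this basis. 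The term $-(\lambda+m-V_2(n))$ contributes only to the $(n,n)$ entry, giving precisely $-a_n$ there. For the term $\td^*\frac1{\lambda-m-V_1}\td$, one uses $\td=d+W_1+W_2\tau$ with $d=\mathrm{Id}-\tau$, so that $\td$ maps $\delta_n$ to $(1+W_1(n))\delta_n+(W_2(n)-1)\delta_{n+1}$ (here I must be careful about the boundary convention for $\td^*$ at the site $n$, which from the definition of $d^*$ acts like the identity there). The key structural point is that, because of the left-shift structure, $\td$ couples site $n$ only to site $n+1$, so the ``off-diagonal block'' of $\Delta^{(n)}_{\lambda,V,W}$ between $\C\delta_n$ and $\ell^2(\Z_{n+1},\C)$ is supported on $\delta_{n+1}$, and the lower-right block is, up to the scalar coefficient at site $n$ coming from $\frac1{\lambda-m-V_1(n)}$, exactly $\Delta^{(n+1)}_{\lambda,V,W}$.

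Once the $2\times2$ block form
\[
\Delta^{(n)}_{\lambda,V,W}=\begin{pmatrix} p_n & r_n\langle\delta_{n+1},\cdot\rangle \\ \overline{r_n}\,\delta_{n+1} & \Delta^{(n+1)}_{\lambda,V,W}+q_n\langle\delta_{n+1},\cdot\rangle\delta_{n+1}\end{pmatrix}
\]
is established, I would apply the Schur complement / Feshbach formula for the $(n,n)$ entry of the inverse:
\[
\alpha_n=\left(p_n-|r_n|^2\,\big\langle\delta_{n+1},(\Delta^{(n+1)}_{\lambda,V,W}+q_n\langle\delta_{n+1},\cdot\rangle\delta_{n+1})^{-1}\delta_{n+1}\big\rangle\right)^{-1},
\]
all denominators being invertible by Proposition \ref{p:inv2} and Proposition \ref{p:dem}. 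A further rank-one (Sherman--Morrison) step turns the perturbed resolvent matrix element into $\alpha_{n+1}/(1+q_n\alpha_{n+1})$ up to the appropriate normalization, and after collecting the scalar coefficients $p_n=-a_n$ and the ratios $|1+W_1(n)|$, $|1-W_2(n)|$ coming from the action of $\td,\td^*$ and the multiplier $\frac1{\lambda-m-V_1(n)}$, one recognizes exactly the Möbius map $\alpha_n=-\bigl(a_n-(b_n+c_n\alpha_{n+1})^{-1}\bigr)^{-1}=\Phi_n(\alpha_{n+1})$ with $a_n,b_n,c_n$ as in \eqref{e:bn}. The hypothesis $W_1(n)\neq-1$ is what guarantees $c_n$ and $b_n$ are well defined (the factor $|1+W_1(n)|^{-2}$), and $\Im b_n>0$ follows from $\Im\lambda>0$ after dividing by the positive real number $|1+W_1(n)|^2$, so $b_n\in\demi$ as claimed; similarly $a_n\in\demi$.

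The main obstacle I anticipate is purely bookkeeping but genuinely delicate: correctly tracking the boundary term in $\td^*$ at the left endpoint of $\Z_n$ versus $\Z_{n+1}$, since $\Delta^{(n)}$ and $\Delta^{(n+1)}$ each carry their \emph{own} modified corner in $d^*$, and these corners sit at different sites ($n$ vs.\ $n+1$). One must verify that when site $n$ is removed, the residual operator on $\ell^2(\Z_{n+1},\C)$ really is $\Delta^{(n+1)}_{\lambda,V,W}$ plus an explicit rank-one correction at $\delta_{n+1}$ (the correction $q_n$ coming from the difference between ``$d^*$ has a corner at $n+1$'' and ``the $(n+1,n+1)$ matrix element inherited from $\td^*\frac1{\lambda-m-V_1}\td$ on $\Z_n$''), and that this rank-one correction is exactly absorbed by the Sherman--Morrison step into the $\frac1{1+c_n z}$-type denominator. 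Getting the three real/complex prefactors to land on $a_n$ (no prefactor), $b_n$ (prefactor $|1+W_1(n)|^{-2}$), and $c_n$ (prefactor $|(1-W_2(n))/(1+W_1(n))|^2$) is the computational heart of the proof; everything else is the elementary algebra of $2\times2$ Schur complements.
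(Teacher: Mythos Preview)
Your Schur-complement strategy is correct and is the operator-theoretic packaging of exactly what the paper does; the paper instead works directly with the scalar three-term recursion. It sets $f=(\Delta^{(n)}_{\lambda,V,W})^{-1}\delta_n$ and $g=(\Delta^{(n+1)}_{\lambda,V,W})^{-1}\delta_{n+1}$, writes out $(\Delta^{(n)}_{\lambda,V,W}f)(k)=\delta_n(k)$ explicitly, and observes that the rows $k\ge n+1$ say precisely $\Delta^{(n+1)}_{\lambda,V,W}f_{|\Z_{n+1}}=C\,\delta_{n+1}$ for a scalar $C$ depending linearly on $f(n)$ \emph{and} $f(n+1)$; invertibility gives $f_{|\Z_{n+1}}=C\,g$, and evaluating at $n+1$ together with the boundary row $k=n$ yields two linear equations from which $\alpha_n=f(n)=\Phi_n(\alpha_{n+1})$ drops out. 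The rank-one correction $q_n$ you anticipate is thus absorbed into $C$ (since $C$ is allowed to involve $f(n+1)$), so the paper never invokes Sherman--Morrison separately. Your route is cleaner conceptually; the paper's is quicker on the page because the recursion coefficients can be read off directly.

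One concrete slip to fix before executing your plan: on $\ell^2(\Z_n)$ one has $\td\delta_n=(1+W_1(n))\delta_n$ with \emph{no} $\delta_{n+1}$ component, since $(\td\delta_n)(k)=(1+W_1(k))\delta_n(k)-(1-W_2(k))\delta_n(k+1)$ vanishes at $k=n+1$ (recall $\tau\delta_n=\delta_{n-1}$, not $\delta_{n+1}$). The coupling to site $n+1$ enters through $\td^*$, and the $(n,n)$ entry is $p_n=|1+W_1(n)|^2/(\lambda-m-V_1(n))-a_n=b_n^{-1}-a_n$, not $-a_n$. This does not affect the validity of your scheme, but it does relocate where the factor $b_n$ appears in the final algebra.
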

\begin{proof}
We define in $\ell^2(\Z_n,\C)$ and in $\ell^2(\Z_{n+1},\C)$
\[
f:=\left(\Delta^{(n)}_{\lambda ,V,W}\right)^{-1}\delta_n
\quad\text{ and }\quad
g:=\left(\Delta^{(n+1)}_{\lambda ,V,W}\right)^{-1}\delta_{n+1} ,
\]
respectively.
Clearly $\alpha_n=f(n)$ and $\alpha_{n+1}=g(n+1)$. By definition $f$
is the unique solution in $\ell^2(\Z_n,\C)$ of
$\Delta^{(n)}_{\lambda ,V,W}f=\delta_n$, i.e., 
\begin{align}
\label{rec}
\frac{1+\overline{W_1}(k)}{\lambda -m -V_1(k)}\left((1+W_1(k))f(k)+(-1+W_2(k))f(k+1)\right)\hspace*{5 cm} \\
\notag 
+\frac{1-\overline{W_2}(k-1)}{\lambda -m -V_1(k-1)}\left((1-W_2(k-1))f(k)+(-1-W_1(k-1))f(k-1)\right)
 -(\lambda +m-V_2(k))f(k)  =0 ,
\end{align}
for all $k\geq n+1$ and 
\begin{align}\label{init}
& \frac{1+\overline{W_1}(n)}{\lambda -m -V_1(n)}\left((1+W_1(n))f(n)+(-1+W_2(n))f(n+1)\right) -(\lambda +m-V_2(n))f(n)  =1 .
\end{align}
We see that $f_{|_{\Z_{n+1}}}$ is solution of \eqref{rec} for all
$k\geq n+2$ and  
\begin{align*}
&\quad \frac{1+\overline{W_1}(n+1)}{\lambda -m -V_1(n+1)}\left((1+W_1(n+1))f(n+1)+(-1+W_2(n+1))f(n+2)\right) \\
\notag & \quad \quad \quad \quad \quad \quad 
-(\lambda +m-V_2(n+1))f(n+1) \\
\notag & \quad \quad \quad \quad \quad \quad \quad \quad \quad \quad \quad \quad =
\frac{1-\overline{W_2}(n)}{\lambda -m -V_1(n)}\left((1+W_1(n))f(n)+(-1+W_2(n))f(n+1)\right) .
\end{align*}
So we obtain that
\[
\Delta^{(n+1)}_{\lambda ,V,W}f_{|_{\Z_{n+1}}}=\frac{1-\overline{W_2}(n)}{\lambda -m -V_1(n)}\left((1+W_1(n))f(n)+(-1+W_2(n))f(n+1)\right)\delta_{n+1} .
\]
Because $\Delta^{(n+1)}_{\lambda ,V,W}g=\delta_{n+1}$ we have
\[
\Delta^{(n+1)}_{\lambda ,V,W}f_{|_{\Z_{n+1}}}=\frac{1-\overline{W_2}(n)}{\lambda -m -V_1(n)}\left((1+W_1(n))f(n)+(-1+W_2(n))f(n+1)\right)\Delta^{(n+1)}_{\lambda ,V,W}g
\]
But $\Delta^{(n+1)}_{\lambda ,V,W}$ is invertible, so
\[
f_{|_{\Z_{n+1}}}=\frac{1-\overline{W_2}(n)}{\lambda -m -V_1(n)}\left((1+W_1(n))f(n)+(-1+W_2(n))f(n+1)\right)g .
\]
Note that
\[
f(n+1)=\frac{1-\overline{W_2}(n)}{\lambda -m -V_1(n)}\left((1+W_1(n))f(n)+(-1+W_2(n))f(n+1)\right)g(n+1) .
\]
Straightforwardly, using \eqref{init} we conclude that $\alpha_n =
f(n) = \Phi_n(g(n+1)) =\Phi_n(\alpha_{n+1}) $. 
\end{proof}

\subsection{An iterative process}
The key to the process relies on the fact that $\Phi_n$ is a strict contraction.

\begin{proposition}\label{p:cont}
Given $\lambda\in\demi$, $n\in{\Z_+}$, $V\in \ell^\infty ({\Z_+},\C^2)$, and $W \in \ell^\infty ({\Z_+},\C^2)$ with $W_1(n)\not =-1$ and $W_2(n)\not=1$.
Then $\Phi_n$ is a strict contraction. More precisely, we have 
\begin{align}\label{e:cont2}
d_\demi\left(\Phi_n(z_1) , \Phi_n(z_2)\right)\leq \frac{1}{1+\left(\Im
    ( \lambda )\right)^2 \left( 1+\|W_1\|_\infty\right)^{-2}}
d_\demi\left(z_1 , z_2\right) 
, \end{align}
for all $z_1 ,z_2\in\demi$ and $n\in{\Z_+}$.
Moreover we obtain
\[d_\demi\left(\Phi_n(z_1) , \Phi_n(z_2)\right)\leq
\frac{\left( \left(1+\|W_1\|_\infty\right)^{2} +\Im(\lambda)\left(|\lambda | + m + \left\|V_2\right\|_\infty\right)\right)^2}{\left(\Im(\lambda)\right)^4} , 
 \]
for all $z_1 ,z_2\in\demi$ and $n\in{\Z_+}$.
\end{proposition}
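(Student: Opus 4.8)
The plan is to recognize $\Phi_n = \varphi_{a_n,b_n,c_n}$ and simply feed the quantitative contraction estimates of Lemma \ref{l:cont} into the explicit data \eqref{e:bn}. Concretely, Lemma \ref{l:cont} gives, for $a,b\in\demi$ and $c>0$,
\[
d_\demi\big(\varphi_{a,b,c}(z_1),\varphi_{a,b,c}(z_2)\big)\le \frac{1}{1+\Im(a)\cdot\Im(b)}\, d_\demi(z_1,z_2),
\]
so the first inequality will follow once I bound $\Im(a_n)\cdot\Im(b_n)$ from below. From \eqref{e:bn} we have $\Im(a_n)=\Im(\lambda)$ and $\Im(b_n)=\Im(\lambda)\,|1+W_1(n)|^{-2}$, and since $|1+W_1(n)|\le 1+\|W_1\|_\infty$, we get $\Im(a_n)\cdot\Im(b_n)\ge (\Im(\lambda))^2(1+\|W_1\|_\infty)^{-2}$, which plugged in yields exactly \eqref{e:cont2}. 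Note the hypotheses $W_1(n)\ne-1$ and $W_2(n)\ne1$ guarantee $b_n\in\demi$ (so $\Im(b_n)>0$) and $c_n>0$ respectively, so that Lemma \ref{l:cont} genuinely applies and $\Phi_n$ is a \emph{strict} contraction.

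For the second inequality I would use the other bound from Lemma \ref{l:cont},
\[
d_\demi\big(\varphi_{a,b,c}(z_1),\varphi_{a,b,c}(z_2)\big)\le \frac{(1+\Im(b)\,|a|)^2}{(\Im(b)\cdot\Im(a))^2},
\]
again specialized to $a=a_n$, $b=b_n$. The denominator is $(\Im(a_n)\Im(b_n))^2 \ge (\Im(\lambda))^4(1+\|W_1\|_\infty)^{-4}$. For the numerator, $|a_n|=|\lambda+m-V_2(n)|\le|\lambda|+m+\|V_2\|_\infty$ and $\Im(b_n)=\Im(\lambda)|1+W_1(n)|^{-2}\le\Im(\lambda)$, so $1+\Im(b_n)|a_n|\le 1+\Im(\lambda)(|\lambda|+m+\|V_2\|_\infty)$. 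Assembling these, and multiplying numerator and denominator through by $(1+\|W_1\|_\infty)^4$ to clear the $|1+W_1(n)|^{-2}$ factors, gives
\[
d_\demi\big(\Phi_n(z_1),\Phi_n(z_2)\big)\le \frac{\big((1+\|W_1\|_\infty)^2+\Im(\lambda)(|\lambda|+m+\|V_2\|_\infty)\big)^2}{(\Im(\lambda))^4},
\]
as claimed. One subtlety to handle with care: the two ``$\Im(b)$'' appearances in the numerator and denominator must be estimated in opposite directions (one wants $\Im(b_n)$ \emph{small} in the numerator but the product $\Im(a_n)\Im(b_n)$ \emph{large}, i.e. $\Im(b_n)$ large, in the denominator), so the cleanest route is to substitute $\Im(b_n)=\Im(\lambda)|1+W_1(n)|^{-2}$ \emph{exactly}, simplify, and only then bound $|1+W_1(n)|$ once.

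There is essentially no deep obstacle here — the proposition is a direct corollary of Lemma \ref{l:cont} once one checks $a_n,b_n\in\demi$ and $c_n>0$, which is precisely what the hypotheses on $V$, $W$, and $\lambda\in\demi$ provide (here $V\in\ell^\infty({\Z_+},\C^2)$ in the statement is harmless since only $|a_n|$, not reality of $V_2$, enters; the value $\Im(a_n)=\Im(\lambda)$ is what matters). The only thing to be mildly careful about is the bookkeeping of the $\|W_1\|_\infty$ powers so that the final bounds come out in exactly the stated form, and to invoke \eqref{e:bn}'s assertions $a_n,b_n\in\demi$ and $c_n\in\R_+$ (already established in Proposition \ref{p:rec}) rather than re-deriving them.
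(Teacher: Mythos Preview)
Your proof is correct and follows exactly the paper's approach: apply Lemma \ref{l:cont} with $a=a_n$, $b=b_n$, $c=c_n$, then insert the lower bound $\Im(a_n)\Im(b_n)\ge(\Im\lambda)^2(1+\|W_1\|_\infty)^{-2}$ and the upper bounds $|1+W_1(n)|\le 1+\|W_1\|_\infty$, $|a_n|\le|\lambda|+m+\|V_2\|_\infty$. One small caveat: your parenthetical about $V\in\ell^\infty(\Z_+,\C^2)$ being ``harmless'' is muddled, since if $V_2$ were genuinely complex then $\Im(a_n)=\Im(\lambda)$ would fail; this is simply a typo in the statement (elsewhere $V$ is real-valued, cf.\ Proposition \ref{p:rec}), so just read it as $\R^2$.
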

\begin{proof}
Using Lemma \ref{l:cont}, we obtain that $\Phi_n=
\varphi_{a_n,b_n, c_n}$ is a strict  contraction. More
precisely, we get 
\begin{align*}
d_\demi\left(\Phi_n(z_1) , \Phi_n(z_2)\right)
\leq &
\frac{1}{1+\Im (a_{ n})\Im (b_{ n})} d_\demi\left(z_1 , z_2\right) 
\leq 
\frac{1}{1+\left(\Im ( \lambda )\right)^2 \left(
    1+\|W_1\|_\infty\right)^{-2}} d_\demi\left(z_1 , z_2\right), 
\end{align*}
for all $z_1,z_2\in\demi$, and
\begin{align*}
d_\demi\left(\Phi_n(z_1) , \Phi_n(z_2)\right)
\leq &
\left(\frac{ \frac{1}{\Im (b_{ n} )} +|a_{ n} |}{\Im (a_{ n} )}\right)^2 \leq 
\frac{\left( \left(1+\|W_1\|_\infty\right)^{2} +\Im(\lambda)\left(|\lambda | + m + \left\|V_2\right\|_\infty\right)\right)^2}{\left(\Im(\lambda)\right)^4},  
\end{align*}
for all $z_1,z_2\in\demi$.
\end{proof}

Now we have an asymptotic property. That is an analogue of
\cite{FHS}[Theorem 2.3]. It relies strongly on the fact that $\Phi_n$
is a strict contraction. 

\begin{corollary}\label{c:cvg}
Take  $V\in \ell^\infty ({\Z_+},\R^2)$, and $W\in \ell^\infty ({\Z_+},\C^2)$
with $W_1(n)\not =-1$ and $W_2(n)\not=1$ for all $n\in{\Z_+}$. Then for
all $\lambda\in \demi$ and $(\zeta_n)_n\in\demi^{\Z_+}$ we have 
\[
\dhlim_{n\to\infty}\Phi_0\circ\cdots\circ \Phi_n(\zeta_n)=\alpha_0,
\]
i.e., $\lim_{n \to\infty} d_\demi(\Phi_0\circ\cdots\circ
\Phi_n(\zeta_n),\alpha_0)=0$.
\end{corollary}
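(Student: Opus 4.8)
The plan is to exploit the fact, just established in Proposition \ref{p:rec}, that $\alpha_n = \Phi_n(\alpha_{n+1})$ for every $n$, so that by iteration
\[
\alpha_0 = \Phi_0\circ\cdots\circ\Phi_n(\alpha_{n+1}), \qquad \text{for all } n\in\Z_+ .
\]
Thus the quantity we want to control is $d_\demi\bigl(\Phi_0\circ\cdots\circ\Phi_n(\zeta_n), \alpha_0\bigr) = d_\demi\bigl(\Phi_0\circ\cdots\circ\Phi_n(\zeta_n), \Phi_0\circ\cdots\circ\Phi_n(\alpha_{n+1})\bigr)$, a distance between the images of two points of $\demi$ under the same composition of maps. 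Since each $\Phi_j$ is a strict contraction of $(\demi,d_\demi)$ by Proposition \ref{p:cont}, with contraction factor $\kappa(\lambda) := \bigl(1+(\Im\lambda)^2(1+\|W_1\|_\infty)^{-2}\bigr)^{-1} < 1$ that is \emph{uniform in $n$} (it depends only on $\lambda$ and $\|W_1\|_\infty$), composing $n+1$ of them gives a contraction factor at most $\kappa(\lambda)^{n+1}$. Hence
\[
d_\demi\bigl(\Phi_0\circ\cdots\circ\Phi_n(\zeta_n), \alpha_0\bigr) \leq \kappa(\lambda)^{n+1}\, d_\demi(\zeta_n, \alpha_{n+1}).
\]

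So the argument reduces to showing that $d_\demi(\zeta_n,\alpha_{n+1})$ does not grow too fast — it suffices that it grows subexponentially relative to $\kappa(\lambda)^{-1}$, and in fact we will see it stays bounded. For $\alpha_{n+1}$: writing $\alpha_{n+1} = \Phi_{n+1}(\alpha_{n+2})$ and applying the second (absolute) bound of Lemma \ref{l:cont}, or equivalently the second estimate in Proposition \ref{p:cont}, we get that $d_\demi(\alpha_{n+1},\rmi)$ — or more directly $d_\demi(\Phi_{n+1}(z_1),\Phi_{n+1}(z_2))$ for any $z_1,z_2$ — is bounded by a constant depending only on $\Im\lambda$, $|\lambda|$, $m$, $\|V_2\|_\infty$, and $\|W_1\|_\infty$, hence uniformly in $n$; in particular $d_\demi(\alpha_{n+1},\rmi) \leq M(\lambda)$ for a fixed $M(\lambda) < \infty$. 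For $\zeta_n$: here one uses Lemma \ref{l:phiim} — since $\zeta_{n}$ is an arbitrary element of $\demi$ but, for the purpose of controlling $\Phi_0\circ\cdots\circ\Phi_n(\zeta_n)$, we may instead look one step further and write $\Phi_0\circ\cdots\circ\Phi_n(\zeta_n) = \Phi_0\circ\cdots\circ\Phi_{n-1}\bigl(\Phi_n(\zeta_n)\bigr)$, and $\Phi_n(\zeta_n) = \varphi_{a_n,b_n,c_n}(\zeta_n)$ lands, by \eqref{e:im2}-type reasoning inside the proof of Lemma \ref{l:cont}, in a fixed Euclidean ball of $\demi$ whose $d_\demi$-distance to $\rmi$ is bounded — but crucially the bound provided by Lemma \ref{l:phiim} still involves $|\zeta_n|$. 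This is the subtlety to handle carefully.

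The cleanest route around that subtlety: apply the uniform strict-contraction bound to only the first $n$ factors $\Phi_0,\dots,\Phi_{n-1}$, getting factor $\kappa(\lambda)^n$, and absorb the last map $\Phi_n$ into the point, using the \emph{absolute} a priori bound
\[
d_\demi\bigl(\Phi_n(\zeta_n), \alpha_n\bigr) = d_\demi\bigl(\Phi_n(\zeta_n), \Phi_n(\alpha_{n+1})\bigr) \leq \frac{\bigl((1+\|W_1\|_\infty)^2 + \Im(\lambda)(|\lambda|+m+\|V_2\|_\infty)\bigr)^2}{(\Im\lambda)^4} =: M(\lambda),
\]
which is exactly the second inequality of Proposition \ref{p:cont} and is independent of $n$, of $\zeta_n$, and of $\alpha_{n+1}$. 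Combining, since $\alpha_0 = \Phi_0\circ\cdots\circ\Phi_{n-1}(\alpha_n)$,
\[
d_\demi\bigl(\Phi_0\circ\cdots\circ\Phi_n(\zeta_n), \alpha_0\bigr) = d_\demi\bigl(\Phi_0\circ\cdots\circ\Phi_{n-1}(\Phi_n(\zeta_n)),\, \Phi_0\circ\cdots\circ\Phi_{n-1}(\alpha_n)\bigr) \leq \kappa(\lambda)^n\, M(\lambda) \xrightarrow[n\to\infty]{} 0,
\]
because $\kappa(\lambda) < 1$ for $\lambda\in\demi$. This gives the claimed $d_\demi$-limit, and the final parenthetical equivalence in the statement is just the definition of $d_\demi\text{-}\lim$. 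The main obstacle is the bookkeeping of \emph{which} a priori bound to use and on how many factors, so as not to be left with an uncontrolled $|\zeta_n|$; once one commits to the "contract $n$ times, absorb the last step" split, everything is immediate from Propositions \ref{p:rec} and \ref{p:cont}.
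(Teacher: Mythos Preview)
Your proposal is correct and takes essentially the same approach as the paper: contract through $\Phi_0,\dots,\Phi_{n-1}$ using the uniform factor from the first estimate of Proposition~\ref{p:cont}, and absorb the last step $\Phi_n$ via the absolute bound in the second estimate of that proposition, arriving at $d_\demi(\Phi_0\circ\cdots\circ\Phi_n(\zeta_n),\alpha_0)\leq \eta\,\delta^{n}$ with $\delta=\kappa(\lambda)$ and $\eta=M(\lambda)$. The paper's proof is just the final paragraph of your argument without the preliminary detours.
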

\begin{proof}
With Proposition \ref{p:dem}, for all $n\in{\Z_+}$ we have
$\alpha_{n}\in\demi$. With Proposition
\ref{p:cont} there exist $\delta\in (0,1)$ and $\eta >0$ such that
\[
d_\demi\left(\Phi_n(z_1) , \Phi_n(z_2)\right)\leq \min\left(\delta
  d_\demi\left(z_1 , z_2\right) , \eta\right) 
, \] 
for all $n\in{\Z_+}$ and $z_1,z_2\in\demi$.
So, using that $\alpha_n=\Phi_n (\alpha_{n+1})$ for all $n\in{\Z_+}$, we
obtain that for $n\in{\Z_+}$
\begin{align*}
d_{\demi}\left( \Phi_0\circ\cdots\circ \Phi_{n}(\zeta_{n}) , \alpha_0 \right)
& =
d_{\demi}\left( \Phi_0\circ\cdots\circ \Phi_{n}(\zeta_{n}) ,
  \Phi_0\circ\cdots\circ \Phi_{n}(\alpha_{n+1}) \right) \\ 
& \leq \delta^{n}d_{\demi}\left( \Phi_{n}(\zeta_{n}) , \Phi_{n}(\alpha_{n+1}) \right)
\leq \eta\delta^{n}.
\end{align*}
Therefore, $\dhlim_{n\to\infty}\Phi_0\circ\cdots\circ
\Phi_n(\zeta_n)=\alpha_0$. 
\end{proof}

From now on, set 
\[\nu:= \nu_1\cdot \nu_2.\]
Now unlike in \cite{FHS}[Lemma 4.5] or in \cite{FHS5}[Proposition 3.4]
we shall not rely directly on a fixed point of $\Phi_n$ but on one of 
$\Phi_{ n}\circ\cdots\circ\Phi_{n+\nu -1}$. The proof is unfortunately
more complicated but the improvement is real as we can treat 
potentials satisfying $V-\tau^{\nu_1} V \in \ell^1$. Recall that with
the approach of \cite{FHS}, one covers only the case $\nu=1$. We
localize in energy and introduce:  
\begin{align}\label{e:K}
K_{x_1,x_2,\e}:=(x_1,x_2)+\rmi (0,\e) .
\end{align}

\begin{proposition}\label{p:born}
Take $x\in\left( -\sqrt{m^2+4} , -m\right)\cup\left( m ,
  \sqrt{m^2+4}\right)$, $\nu\in\Z_+\setminus\{0\}$, and assume that \eqref{e:spn1} and \eqref{e:spn2}
hold true, then there exist $x_1,x_2\in\R$ such that
$x\in (x_1 ,x_2 )$ and $M_1,\e >0$  so that
\[
d_\demi \left( \alpha_0 ,\rmi\right)
=d_\demi \left( \left\langle\delta_0 , \left(\Delta^{(0)}_{\lambda ,V,W}\right)^{-1}\delta_0\right\rangle ,\rmi\right)
 \leq M_1 
\]
for all $\lambda\in K_{x_1 ,x_2 ,\e}$. In particular there exists $M_2 >0$ such that
\[
\left| \left\langle\delta_0 , \left(\Delta^{({\Z_+})}_{2,\lambda ,V,W}\right)^{-1}\delta_0\right\rangle \right|
 \leq M_2 
\]
for all $\lambda\in K_{x_1 ,x_2 ,\e}$.
\end{proposition}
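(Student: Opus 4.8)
The plan is to combine the contraction estimates of Proposition \ref{p:cont}, the asymptotic identity of Corollary \ref{c:cvg}, and the explicit rational form of finite compositions from Lemma \ref{l:pol}. First I would use the periodic-difference hypothesis \eqref{e:spn2}: writing $\nu=\nu_1\nu_2$, the sequences $(a_n)_n$, $(b_n)_n$, $(c_n)_n$ defining $\Phi_n$ in Proposition \ref{p:rec} have the property that $a_n-a_{n+\nu}$, $b_n-b_{n+\nu}$, $c_n-c_{n+\nu}$ are in $\ell^1(\Z_+)$ (this uses that $V-\tau^{\nu_1}V\in\ell^1$, $W-\tau^{\nu_2}W\in\ell^1$, that $|1+W_1(n)|^{-2}$ and $|1-W_2(n)|^2$ converge by \eqref{e:spn1} and stay away from $0$ and $\infty$ because of the first line of \eqref{e:spn2}, and a standard product/telescoping argument). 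In particular the blocks $\Psi_k := \Phi_{k\nu}\circ\cdots\circ\Phi_{(k+1)\nu-1}$ satisfy $\Psi_k\to\Psi$ for some fixed block map $\Psi$, and $\sum_k \mathrm{dist}(\text{coefficients of }\Psi_k,\text{ coefficients of }\Psi)<\infty$.

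Next I would localize in energy. Since $x\in(-\sqrt{m^2+4},-m)\cup(m,\sqrt{m^2+4})$, the free block map (the limiting $\Psi$ evaluated along a segment $\lambda=x+i0^+$, $x$ in a small real interval $(x_1,x_2)\ni x$) is a strict contraction whose fixed point stays in a compact subset of $\demi$; this is where one needs $x^2-m^2>0$ and $x^2<m^2+4$, i.e. the hyperbolic length contraction factor from Proposition \ref{p:cont}, $\bigl(1+(\Im\lambda)^2(1+\|W_1\|_\infty)^{-2}\bigr)^{-1}$, has to be complemented by a uniform-in-$\e$ estimate showing the orbit cannot run off to the boundary of $\demi$ — here the second inequality of Proposition \ref{p:cont} and Lemma \ref{l:phiim} give the needed uniform bound on $\Im$ of the images, provided $\Im\lambda\le\e$ with $\e$ chosen small and $(x_1,x_2)$ chosen so that $|\lambda|+m+\|V_2\|_\infty$ stays bounded and $x^2-m^2$ bounded below. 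Concretely, I would fix $(x_1,x_2)$ and $\e$ so that for all $\lambda\in K_{x_1,x_2,\e}$ the map $\Psi$ (and hence eventually all $\Psi_k$) is a strict contraction with a factor $\delta<1$ uniform in $\lambda$, and so that Lemma \ref{l:phiim} yields a uniform bound $d_\demi(\Psi_k(\zeta),i)\le C$ for $\zeta$ ranging over the compact orbit.

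Then I would run the fixed-point/telescoping argument. By Corollary \ref{c:cvg}, $\alpha_0=\dhlim_{N\to\infty}\Psi_0\circ\cdots\circ\Psi_{N-1}(\zeta_N)$ for any choice of $\zeta_N\in\demi$; take $\zeta_N=i$. Writing $\alpha_0$ this way and inserting the limiting map $\Psi$ with its fixed point $z_*(\lambda)$, one estimates
\[
d_\demi(\alpha_0,i)\le d_\demi(\alpha_0,z_*) + d_\demi(z_*,i),
\]
bounds $d_\demi(z_*,i)$ by a constant using Lemma \ref{l:phiim} applied to $\Psi=\varphi$-composition (or directly the second bound of Proposition \ref{p:cont}), and bounds $d_\demi(\alpha_0,z_*)$ by the standard perturbed-contraction telescoping: replacing each $\Psi_k$ by $\Psi$ costs $d_\demi(\Psi_k(w),\Psi(w))$, which by Lemma \ref{l:pol} (continuity of $A_\nu,B_\nu,C_\nu,D_\nu$ in the coefficients, together with the non-vanishing of the denominator) is $\le L\cdot\|\text{coeff}(\Psi_k)-\text{coeff}(\Psi)\|$ on the compact set where the orbit lives, and these are summable by the $\ell^1$ step above; the geometric factors $\delta^k$ from the strict contraction make the whole sum converge to a bound $M_1$ independent of $\lambda\in K_{x_1,x_2,\e}$. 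This gives the first displayed inequality. The second one follows immediately: $\alpha_0=\langle\delta_0,(\Delta^{(0)}_{\lambda,V,W})^{-1}\delta_0\rangle=\langle\delta_0,(\Delta^{({\Z_+})}_{2,\lambda,V,W})^{-1}\delta_0\rangle$ by the identity $\Delta^{(0)}_{\lambda,V,W}=\Delta^{({\Z_+})}_{2,\lambda,V,W}$ noted before Proposition \ref{p:inv2}, and $d_\demi(\alpha_0,i)\le M_1$ together with $\eqref{inh}$ forces $|\alpha_0|\le M_2:=$ (some explicit function of $M_1$), since a hyperbolic ball around $i$ of radius $M_1$ is a bounded Euclidean set.

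The main obstacle I expect is the uniformity in $\e$: the naive contraction factor degenerates as $\Im\lambda\to0$, so the bound $d_\demi(\alpha_0,z_*)$ a priori blows up. The resolution is that one does not contract by $\Phi_n$ but by the block $\Psi_k=\Phi_{k\nu}\circ\cdots\circ\Phi_{(k+1)\nu-1}$, and — crucially — one must check that the \emph{limiting} block map $\Psi$, which at $\lambda=x+i0$ is a composition of $\nu$ boundary contractions of $\demi$, is nevertheless a \emph{strict} contraction on the relevant compact orbit with a factor bounded below $1$ uniformly for $x\in(x_1,x_2)$ and $\Im\lambda\in(0,\e)$; this is exactly the place where the spectral condition $x\in(-\sqrt{m^2+4},-m)\cup(m,\sqrt{m^2+4})$ enters (it guarantees $a_n$ and $b_n$ have a definite amount of imaginary part, resp. the orbit is pushed into the interior), and getting this uniform strict-contraction-of-a-block estimate, rather than the per-step estimate, is the technical heart of the argument.
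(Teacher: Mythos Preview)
Your plan has a genuine gap, and it is precisely at the point you flag as the ``main obstacle.'' Your proposed resolution --- that the limiting $\nu$-block map $\Psi$ is a strict contraction of $(\demi,d_\demi)$ uniformly for $\lambda\in K_{x_1,x_2,\e}$ --- is false. At $\lambda=x\in\R$ the limiting coefficients $a_\infty=x+m$, $b_\infty=x-m$, $c_\infty=1$ are all real, so $\Psi=\varphi_{x+m,x-m,1}^{\,\circ\nu}$ is a M\"obius transformation with real coefficients that preserves $\demi$; hence it is an \emph{isometry} of $\demi$. The spectral condition $x\in(-\sqrt{m^2+4},-m)\cup(m,\sqrt{m^2+4})$ makes it \emph{elliptic} (its fixed points \eqref{e:fp} lie in $\demi$), but it remains an isometry, and the contraction factor tends to $1$ as $\Im\lambda\to0$. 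So you cannot harvest geometric factors $\delta^k$ with $\delta<1$ uniform in $\e$.

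There is a second, related gap: you claim $\sum_k\mathrm{dist}\bigl(\text{coeff}(\Psi_k),\text{coeff}(\Psi)\bigr)<\infty$, but the hypothesis $V-\tau^\nu V\in\ell^1$ only gives $\sum_k\mathrm{dist}\bigl(\text{coeff}(\Psi_k),\text{coeff}(\Psi_{k+1})\bigr)<\infty$. Think of $V(n)=1/\log(n+2)$: the $\nu$-differences are summable, but $V\notin\ell^1$, so $\|\text{coeff}(\Psi_k)-\text{coeff}(\Psi)\|$ is not summable. Your telescoping ``replace each $\Psi_k$ by $\Psi$'' therefore produces a divergent sum, and the missing $\delta^k$ cannot rescue it.

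The paper's way around both issues is to abandon the single limiting map $\Psi$ and its fixed point $z_*$, and instead use a \emph{moving} fixed point: let $z_n(\lambda)$ be the fixed point in $\demi$ of the block $\Phi_n\circ\cdots\circ\Phi_{n+\nu-1}$ itself (constructed via the quadratic formula from the polynomial coefficients of Lemma~\ref{l:pol}, smooth in $\omega_{n,\lambda}$ near $\omega_{\infty,x}$). The spectral condition on $x$ guarantees $\Im z_n\ge\eta_1>0$ and $|z_n|\le M_1$ uniformly for $n\ge n_0$ and $\lambda\in K_{x_1,x_2,\e}$. Lipschitz continuity of the fixed-point map in the coefficients then gives $|z_n-z_{n+\nu}|\lesssim\|\omega_{n,\lambda}-\omega_{n+\nu,\lambda}\|$, and \emph{this} is summable by your correct $\ell^1$ step. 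The telescoping is then
\[
d_\demi\bigl(\Phi_0\cdots\Phi_{n_0+\nu(k+1)-1}(z_{n_0+\nu k}),\,\Phi_0\cdots\Phi_{n_0+\nu(k+2)-1}(z_{n_0+\nu(k+1)})\bigr)\le d_\demi\bigl(z_{n_0+\nu k},\,z_{n_0+\nu(k+1)}\bigr),
\]
because applying the $(k{+}1)$-st block to its own fixed point $z_{n_0+\nu(k+1)}$ returns it, and the prefix $\Phi_0\cdots\Phi_{n_0+\nu(k+1)-1}$ is a (non-strict) contraction. No uniform $\delta<1$ is ever needed; the summability comes entirely from $\sum_k|z_{n_0+\nu k}-z_{n_0+\nu(k+1)}|<\infty$ together with the uniform lower bound on $\Im z_n$.
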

\begin{proof}
Using Lemma \ref{l:pol}, there exist some  polynomials 
$A,B_1,B_2,C\in \R[X_1, \ldots, X_{3\nu}]$ such that 
\begin{align*}
\Phi_{ n}\circ\cdots\circ\Phi_{n+\nu -1}(z) &=\varphi_{a_n, b_n,
  c_n}\circ\cdots\circ\varphi_{a_{n+\nu-1}, b_{n+\nu-1},
  c_{n+\nu-1}}(z)
\\&=
-\frac{C(\omega_{n, \lambda}
  )+B_2(\omega_{n, \lambda} )z}{B_1(\omega_{n, \lambda} )+A(\omega_{n, \lambda} )z}, 
\end{align*}
for all $\lambda\in\demi$ and $n\in{\Z_+}$, where
\begin{align*}
\omega_{n, \lambda}:= (a_{ n} ,b_{ n},c_{ n},\ldots  , a_{n+\nu -1}
,b_{n+\nu -1},c_{n+\nu -1}) , 
\end{align*}
and where $B_1(\omega )+A(\omega )z\not =0$ for all  $\omega\in({\rm
  cl}(\demi)^2\times \R^*_+)^{\nu}$ and $z\in\demi$. 

We now work in a neighbourhood of $x$. First notice that the fixed
points of $\varphi_{x+m ,x-m, 1}$ are given by:
\begin{align}\label{e:fp}
-\frac{x-m}{2}\pm\frac{1}{2}\rmi\sqrt{\frac{x-m}{x+m}(4+m^2-x^2)}. 
\end{align}
Then 
\[
 \underbrace{\varphi_{x+m
  ,x-m, 1}\circ\cdots \circ\varphi_{x+m
  ,x-m,1}}_{\nu \, {\rm times}}  = -\frac{C(\omega_{\infty ,x})+B_2(\omega_{\infty
    ,x})\cdot}{B_1(\omega_{\infty ,x})+A(\omega_{\infty
    ,x})\cdot},
\]
where $\omega_{\infty ,x}:=(x+m,x-m,1,\ldots ,x+m,x-m,1)$, has at
least \eqref{e:fp} as fixed points. As it is a homography it has exactly
at most two fixed points. Note also that $A (\omega_{\infty ,x})\not
=0$, because  there are two different fixed points. 

Now we would like to study the fixed points of 
\[
R(\omega, z):=-\frac{C(\omega )+B_2(\omega )z}{B_1(\omega )+A(\omega )z}
\]
with respect to $z$, for $\omega$ being in a neighbourhood of
$\omega_{\infty ,x}$. As the Inverse Function Theorem does not seem to
apply we rely on a direct approach.  
Since $A$ is continuous, there exists a
neighbourhood $\Omega_1$ of $\omega_{\infty ,x}$ such that
$A(\omega)\not=0$ for all $\omega\in\Omega_1$. We define on $\Omega_1$ 
\begin{align}\label{e:Z}
Z(\omega ):=-\frac{B_1(\omega )+B_2(\omega )}{2A(\omega
  )}+\frac{1}{2}\rmi\sqrt{4\frac{C(\omega )}{A(\omega
    )}-\left(\frac{B_1(\omega )+B_2(\omega )}{A(\omega )}\right)^2}, 
\end{align}
where we have chosen the square root in order to guarantee that:
\[
\Re\left(\sqrt{4\frac{C(\omega )}{A(\omega )}-\left(\frac{B_1(\omega )+B_2(\omega )}{A(\omega )}\right)^2}\right)\geq 0,
\]
for all $\omega\in\Omega_1$. A direct computation gives that $Z(\omega
)$ is a fixed point of $R(\omega, \cdot)$ on $\Omega_1$.  Since $A,
B_1,$ and $B_2$ are polynomials with real 
coefficients, we infer that  $\Im (Z(\omega_{\infty ,x}))\geq 0$, by
the choice of the square root. On the other hand $Z(\omega_{\infty
  ,x})$ belongs to \eqref{e:fp}. Therefore we infer that
\begin{align}\label{e:ptfx}
Z(\omega_{\infty
  ,x})=-\frac{x-m}{2}+\frac{1}{2}\rmi\sqrt{\frac{x-m}{x+m}(4+m^2-x^2)}\in\demi. 
\end{align}
In particular, since $A, B_1, B_2,$ and $C$ are polynomials with real
coefficients,
\[4\frac{C(\omega_{\infty ,x} )}{A(\omega_{\infty ,x}
  )}-\left(\frac{B_1(\omega_{\infty ,x} )+B_2(\omega_{\infty ,x} 
    )}{A(\omega_{\infty ,x} )}\right)^2=\frac{x-m}{x+m}(4+m^2-x^2)>0.\]
Therefore there exists a neighbourhood $\Omega_2\subset \Omega_1$ of
$\omega_{\infty ,x}$ such that 
\[
4\frac{C(\omega )}{A(\omega )}-\left(\frac{B_1(\omega )+B_2(\omega
    )}{A(\omega )}\right)^2\not\in\R_-, \mbox{ for all } \omega\in\Omega_2.
\]
We infer that we can take the principal value of the square root in
the definition of \eqref{e:Z} when $\omega \in \Omega_2$. In
particular, $Z\in \Cc^\infty(\Omega_2 ,\C )$. Hence, recalling
\eqref{e:ptfx}, there exists a compact neighbourhood $\Omega_3\subset
\Omega_2$ of $\omega_{\infty ,x}$ and $\eta_1,M_1>0$ such that 
\[\Im (Z(\omega))>\eta_1  \mbox{ and } |Z(\omega)|\leq M_1,\] 
for all
$\omega\in\Omega_3$. Now there exist $x_1,x_2\in\R$, $\e>0$, and
$n_0\in{\Z_+}$ such that $x\in (x_1 
,x_2)$ and  $\omega_{n, \lambda} \in\Omega_3$, for all $\lambda\in
K_{x_1,x_2,\e}$ and $n\geq n_0$. We define now
\begin{align*}
z_n(\lambda ):= 
Z(\omega_{n ,\lambda} )  
\end{align*}
for all $\lambda\in K_{x_1,x_2,\e}$ and $n\geq n_0$. Notice that 
\begin{align}\label{e:inz}
\Im
(z_n(\lambda ))>\eta_1 \mbox{ and } |z_n(\lambda )|\leq M_1,
\end{align} 
for all
$\lambda\in K_{x_1,x_2,\e}$ and $n\geq n_0$. Moreover, by definition
of $Z$ we have 
\begin{align}\label{e:infp}
\Phi_{ n}\circ\cdots\circ\Phi_{n+\nu -1}(z_n(\lambda ))=z_n(\lambda )
\end{align}
for all $\lambda\in K_{x_1,x_2,\e}$ and $n\geq n_0$. 
Next, there is
$M_2>0$ such that 
\begin{align*}
\left( |a_{k}-a_{k+\nu}| +|b_{k}-b_{k+\nu}| + |c_{k}-c_{k+\nu}|\right)
\leq
M_2\left( \| V(k)-V(k+\nu )\| +\| W(k)-W(k+\nu )\|\right)
\end{align*}
for all $\lambda\in K_{x_1,x_2,\e}$ and $k\in{\Z_+}$. Now since $Z$ is  
$\Cc^\infty(\Omega_3 ,\C )$ and $\Omega_3$ is compact, there exists a
Lipschitz constant $M_3>0$ such that 
\begin{align}
\notag
|z_{n+\nu}(\lambda )-z_n(\lambda )|
\leq & \,M_3\| \omega_{n+\nu, \lambda} - \omega_{n, \lambda}\|
\\
\nonumber
\leq & \,
M_3 \left( |a_{n+\nu}(\lambda)-a_{n}(\lambda)|
  +|b_{n+\nu}(\lambda)-b_{n}(\lambda)| +
  |c_{n+\nu}(\lambda)-c_{n}(\lambda)| 
 + \cdots \right. \\ 
\nonumber & \hspace*{-1cm} \left.   +
 |a_{n+2\nu -1}(\lambda)-a_{n+\nu -1}(\lambda)| +|b_{n+2\nu
   -1}(\lambda)-b_{n+\nu -1}(\lambda) | + |c_{n+2\nu
   -1}(\lambda)-c_{n+\nu -1}(\lambda)|\right) 
\\
\label{e:in1}
\leq & \, M_2 M_3 \sum_{k=0}^{\nu-1} \left (\|V(n+k+\nu)- V(n+k)\| +
  \|W(n+k+\nu)- W(n+k)\| \right)
\end{align}
for all $\lambda\in K_{x_1,x_2,\e}$ and $n\geq n_0$. 
By Corollary \ref{c:cvg}, for all $\lambda\in K_{x_1,x_2 , \e}$, we have:
\begin{align}
\notag d_{\demi}(\rmi ,\alpha_0) 
= & \lim_{n\to\infty}d_{\demi}(\rmi ,
\Phi_0\circ\cdots\circ\Phi_{n_0+\nu (n+1)-1}(z_{n_0+\nu n}(\lambda ))
) \\ \notag
\leq &\lim_{n\to\infty}\left( d_{\demi}(\rmi , \Phi_0(\rmi) )
  +\sum_{k=0}^{n_0+\nu
    -3}d_{\demi}(\Phi_0\circ\cdots\circ\Phi_{k}(\rmi )
  ,\Phi_0\circ\cdots\circ\Phi_{k+1}(\rmi ) ) 
\right. \\ \notag & \quad \quad \quad \quad 
+ d_{\demi}(\Phi_0\circ\cdots\circ\Phi_{n_0+\nu -2}(\rmi )
,\Phi_0\circ\cdots\circ\Phi_{n_0+\nu -1}(z_{n_0}) ) 
\\ \notag & \quad \quad \quad \quad  \left.
+ \sum_{k=0}^{n-1}d_{\demi}(\Phi_0\circ\cdots\circ\Phi_{n_0+\nu
  (k+1)-1}(z_{n_0+\nu k} ) ,\Phi_0\circ\cdots\circ\Phi_{n_0+\nu
  (k+2)-1}(z_{n_0+\nu (k+1)} ) )\right) \\ 
\label{e:in10}  \leq  & 
\sum_{k= 0}^{n_0+\nu -2} d_{\demi}(\rmi,\Phi_{k}(\rmi ))+
d_{\demi}(\rmi,\Phi_{{n_0+\nu -1}}(z_{n_0}))  
\\ \notag & \quad \quad \quad \quad
+ \sum_{k\geq 0}d_{\demi}(z_{n_0+\nu k}  ,\Phi_{n_0+\nu
  (k+1)}\circ\cdots\circ\Phi_{n_0+\nu (k+2)-1}(z_{n_0+\nu (k+1)} ) )\\ 
 \label{e:in09}  =  &
\sum_{k= 0}^{n_0+\nu -2} d_{\demi}(\rmi,\Phi_{k}(\rmi ))+
d_{\demi}(\rmi,\Phi_{{n_0+\nu -1}}(z_{n_0})) + \sum_{k\geq
  0}d_{\demi}(z_{n_0+\nu k}  ,z_{n_0+\nu (k+1)}  )\\ 
  \leq  &  \label{e:in0}
\sum_{k= 0}^{n_0+\nu -2} d_{\demi}(\rmi,\Phi_{k}(\rmi ))+
d_{\demi}(\rmi,\Phi_{{n_0+\nu -1}}(z_{n_0})) + \sum_{k\geq
  0}\frac{\left| z_{n_0+\nu k}  - z_{n_0+\nu (k+1)} \right|}{\left(\Im
    (z_{n_0+\nu k} )\right)^{1/2}\left( \Im  ( z_{n_0+\nu (k+1)})
  \right)^{1/2}}  . 
\end{align}
Here in \eqref{e:in10} we have used the fact that $\Phi_n$ is a
contraction, in \eqref{e:in09} we exploited \eqref{e:infp}, and 
in \eqref{e:in0} we relied on \eqref{inh}. 

Coming back to \eqref{e:bn}, one finds easily $M_4,\eta_4 > 0$ such that
\begin{align*}
\max\left(|a_n|, |b_n| \right)\leq M_4 \quad\text{ and }\quad \eta_4 <
c_n\leq M_4, 
\end{align*}
for all $\lambda\in K_{x_1,x_2 ,\e}$ and for all
$n\in{\Z_+}$. Then, Lemma \ref{l:phiim} ensures that
\begin{align}\label{e:in3}
d_{\demi}(\Phi_n(z),\rmi)\leq \left(\frac{(M_4+M_4|z|)^2}{\eta_4\Im
    (z)}+1\right)\frac{(M_4|z|+M_4)\left( M_4+(\eta_4\Im (z
    ))^{-1}\right)}{\sqrt{\eta_4\Im (z)}}.
\end{align}
for all $z\in\demi$, $\lambda\in K_{x_1 ,x_2 ,\e}$, and $n\in {\Z_+}$. 
Finally combining \eqref{e:in0} and estimates \eqref{e:inz},
\eqref{e:in1}, and \eqref{e:in3}, we infer:
\begin{align*}
d_{\demi}(\alpha_0 ,\rmi)
\leq & 
(n_0+\nu-1)  \left(\frac{(2M_4)^2}{\eta_4}+1\right)\frac{2M_4\left(
    M_4+\eta_4^{-1}\right)}{\sqrt{\eta_4}} 
\\ & +
\left(\frac{(M_4+M_4M_1)^2}{\eta_4\eta_1}+1\right)\frac{(M_4M_1+M_4)\left(
    M_4+(\eta_4\eta_1)^{-1}\right)}{\sqrt{\eta_4\eta_1}} 
\\ & +
 \frac{M_2M_3}{\eta_1} \left( \|V-\tau^\nu V\|_1 +  \|W-\tau^\nu W\|_1 \right) ,
\end{align*}
for all $\lambda\in K_{x_1,x_2,\e}$.
The second point comes by recalling that $\Delta^{(0)}_{\lambda ,V,W}=\Delta^{({\Z_+}
  )}_{2,\lambda ,V,W}$. 
\end{proof}

\section{The absolutely continuous spectrum}\label{s:abso}
We recall the following standard result, e.g., \cite{ReSi}[Theorem
XIII.19].
\begin{theorem}\label{t:inac}
Let $H$ be a self-adjoint operator of $\Hc$, let $(x_1,x_2)$ be an
interval and $f\in \Hc$. Suppose 
\begin{align}\label{e:inac1}
\limsup_{\e \downarrow 0^+}\sup_{x\in (x_1,x_2)}\left|\left\langle f
    ,(H-(x+\rmi\e ))^{-1}f\right\rangle\right| < +\infty , 
\end{align}
then the measure $\langle f, \bone_{(\cdot) }(H)f \rangle$ is purely
absolutely continuous w.r.t.\ the Lebesgue measure on $(x_1, x_2)$. 
\end{theorem}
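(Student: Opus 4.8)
The plan is to identify the quantity controlled in \eqref{e:inac1} with the boundary values of the Borel--Stieltjes transform of the spectral measure of $H$ at $f$, and then to invoke the classical principle that a Poisson integral which stays bounded on an interval forces the underlying measure to be absolutely continuous there.

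Concretely, set $\mu_f:=\langle f,\bone_{(\cdot)}(H)f\rangle$, a finite positive Borel measure on $\R$ with total mass $\|f\|^2$. The spectral theorem gives, for $\e>0$,
\[
\langle f,(H-(x+\rmi\e))^{-1}f\rangle=\int_\R\frac{d\mu_f(t)}{t-x-\rmi\e}=:F(x+\rmi\e),
\qquad
\frac1\pi\,\Im F(x+\rmi\e)=\int_\R\frac1\pi\,\frac{\e}{(t-x)^2+\e^2}\,d\mu_f(t),
\]
so that $\pi^{-1}\Im F(\cdot+\rmi\e)$ is precisely the Poisson integral of $\mu_f$ at height $\e$. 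From \eqref{e:inac1} we obtain $\e_0>0$ and $C<\infty$ with $\Im F(x+\rmi\e)\le|F(x+\rmi\e)|\le C$ for all $x\in(x_1,x_2)$ and all $0<\e<\e_0$.

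Next I would bound $\mu_f$ on an arbitrary closed subinterval $[a,b]\subset(x_1,x_2)$. By Fubini,
\[
\frac1\pi\int_a^b\Im F(x+\rmi\e)\,dx=\int_\R g_\e(t)\,d\mu_f(t),
\qquad
g_\e(t):=\frac1\pi\Bigl(\arctan\tfrac{b-t}{\e}-\arctan\tfrac{a-t}{\e}\Bigr),
\]
where $0\le g_\e\le 1$ and $g_\e(t)\to\bone_{(a,b)}(t)$ for every $t\notin\{a,b\}$, hence $\liminf_{\e\downarrow 0}g_\e\ge\bone_{(a,b)}$ pointwise. Fatou's lemma then yields
\[
\mu_f\bigl((a,b)\bigr)\le\liminf_{\e\downarrow 0}\int_\R g_\e\,d\mu_f
=\liminf_{\e\downarrow 0}\frac1\pi\int_a^b\Im F(x+\rmi\e)\,dx
\le\frac{C}{\pi}\,(b-a).
\]
Since $[a,b]\subset(x_1,x_2)$ was arbitrary, $\mu_f(I)\le\frac C\pi\,|I|$ for every open interval $I\Subset(x_1,x_2)$; passing to countable disjoint unions and using outer regularity of $\mu_f$, we get $\mu_f(E)\le\frac C\pi\,|E|$ for every Borel set $E\subset(x_1,x_2)$, where $|\cdot|$ is Lebesgue measure. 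In particular the restriction of $\mu_f$ to $(x_1,x_2)$ is absolutely continuous with respect to Lebesgue measure, i.e.\ $\langle f,\bone_{(\cdot)}(H)f\rangle$ is purely absolutely continuous on $(x_1,x_2)$.

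The only point requiring care is the direction of the passage to the limit: one must apply Fatou's lemma to the nonnegative functions $g_\e$ tested against $\mu_f$, rather than attempt to pass to the limit directly in $\int_a^b\Im F(x+\rmi\e)\,dx$ — the latter would implicitly need $\mu_f(\{a,b\})=0$. The remaining ingredients, namely the spectral-theorem representation of the resolvent as a Borel transform, the Fubini interchange, and the upgrade of the interval bound to a Borel bound, are entirely routine.
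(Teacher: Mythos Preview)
Your argument is correct. The paper does not actually prove Theorem~\ref{t:inac}: it is stated as a standard result and attributed to \cite{ReSi}, Theorem~XIII.19, without proof. What you have written is precisely the classical argument behind that citation --- represent $\langle f,(H-z)^{-1}f\rangle$ as the Borel transform of the spectral measure $\mu_f$, read off the Poisson integral from the imaginary part, integrate in $x$ over $[a,b]$ and swap the order of integration, then use Fatou to recover $\mu_f((a,b))\le\frac{C}{\pi}(b-a)$. The step from the interval bound to absolute continuity on all Borel subsets of $(x_1,x_2)$ is routine, as you say; the only remark worth adding is that one first exhausts $(x_1,x_2)$ by compact subintervals before invoking outer regularity, so that the open covers used stay compactly inside $(x_1,x_2)$ and the bound on $\mu_f(I)$ for $I\Subset(x_1,x_2)$ applies. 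Your caution about applying Fatou to $g_\e$ rather than passing to the limit directly (which would need $\mu_f(\{a,b\})=0$) is exactly right.
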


\subsection{The case of ${\Z_+}$}\label{s:N}
In the previous section we have estimated the resolvent of
$\Delta^{({\Z_+})}_{2,\lambda ,V,W}$.  Keeping in mind  Proposition
\ref{p:schur}, we explain how to transfer the result to $H^{({\Z_+}
  )}_{m,V,W}$. We start by reducing the study to a unique vector.

\begin{lemma}\label{l:maj}
Given $A\subset \C\backslash\R$ bounded, $V\in \ell^\infty ({\Z_+},\R^2)$,
and $W\in \ell^\infty ({\Z_+},\C^2)$ with $W_1(n)\not =-1$ and $W_2(n)\not
=1$ for all $n\in{\Z_+}$. Suppose that there exists $C_1>0$ such that 
\[
\left| \left\langle \left(\begin{array}{c}
0\\
1
\end{array}\right)\delta_{0}
,
\left( H^{({\Z_+} )}_{m,V,W}-\lambda\right)^{-1}
\left(\begin{array}{c}
0\\
1
\end{array}\right)\delta_{0} \right\rangle \right|
\leq C_1 ,
\]
for all $\lambda\in A$. Then for all $x_1,y_1,x_2,y_2\in\C$
and $n_1,n_2\in{\Z_+}$ there exists $C_2>0$ such that 
\[
\left|\left\langle \left(\begin{array}{c}
x_1\\
y_1
\end{array}\right)\delta_{n_1}
,
\left( H^{({\Z_+} )}_{m,V,W}-\lambda\right)^{-1}
\left(\begin{array}{c}
x_2\\
y_2
\end{array}\right)\delta_{n_2} \right\rangle \right|
\leq C_2 ,
\]
for all $\lambda\in A$.
\end{lemma}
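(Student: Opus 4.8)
The plan is to reduce the general matrix element to the single ``reference'' matrix element that we have already controlled, using a finite-rank resolvent identity together with Proposition \ref{p:schur}.

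\textbf{Step 1: reduce to the four ``pure'' matrix elements.} By bilinearity it suffices to bound
\[
\left|\left\langle \binom{x_1}{y_1}\delta_{n_1}, \left( H^{({\Z_+})}_{m,V,W}-\lambda\right)^{-1}\binom{x_2}{y_2}\delta_{n_2}\right\rangle\right|
\]
after expanding $\binom{x_i}{y_i}=x_i\binom{1}{0}+y_i\binom{0}{1}$; so it is enough to bound each of $\langle \binom{\epsilon_1}{\epsilon_1'}\delta_{n_1}, R(\lambda)\binom{\epsilon_2}{\epsilon_2'}\delta_{n_2}\rangle$ with $\binom{\epsilon}{\epsilon'}\in\{\binom{1}{0},\binom{0}{1}\}$, where $R(\lambda):=(H^{({\Z_+})}_{m,V,W}-\lambda)^{-1}$.

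\textbf{Step 2: use the Schur form.} By Proposition \ref{p:schur}, $R(\lambda)$ is the product of the block-diagonal matrix with entries $(\Delta^{({\Z_+})}_{j,\lambda,V,W})^{-1}$ and a bounded (uniformly in $\lambda\in A$) off-diagonal-plus-identity factor, since $\td$ and the multiplication operators $(\lambda\pm m-V_j)^{-1}$ are bounded uniformly on $A$ (here one uses $A\subset\C\setminus\R$ bounded, which gives a uniform lower bound on $|\Im\lambda|$ hence $\|(\lambda\mp m-V_j)^{-1}\|\le |\Im\lambda|^{-1}$). Therefore every matrix element of $R(\lambda)$ is a fixed bounded-coefficient combination of matrix elements $\langle \delta_k,(\Delta^{({\Z_+})}_{j,\lambda,V,W})^{-1}\delta_\ell\rangle$ for $k,\ell$ ranging over a finite set (the supports hit by applying $\td,\td^*$ to $\delta_{n_i}$), and it suffices to bound the latter Green's-function entries for $j=1,2$.

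\textbf{Step 3: from the diagonal entry to all entries, and from $j=2$ to $j=1$.} First I would observe that the hypothesis, via the $j=2$ half of Proposition \ref{p:schur}, already gives a uniform bound on $\langle\delta_0,(\Delta^{({\Z_+})}_{2,\lambda,V,W})^{-1}\delta_0\rangle$ (the off-diagonal terms applied to $\binom{0}{1}\delta_0$ contribute, but unwinding the block product shows the $\binom{0}{1},\binom{0}{1}$ entry of $R(\lambda)$ equals $(\Delta^{({\Z_+})}_{2,\lambda,V,W})^{-1}$ acting between $\delta_0$'s, up to bounded factors). To pass from the single diagonal entry $\langle\delta_0,(\Delta_2)^{-1}\delta_0\rangle$ to an arbitrary entry $\langle\delta_k,(\Delta_2)^{-1}\delta_\ell\rangle$, I would use the three-term recurrence structure: a Jacobi-type resolvent satisfies $\langle\delta_k,(\Delta_2)^{-1}\delta_\ell\rangle$ expressible through transfer matrices starting from $\langle\delta_0,(\Delta_2)^{-1}\delta_0\rangle$ and the (bounded) coefficients, together with the a priori bound $\|(\Delta_2)^{-1}\|\le$ something blowing up only near $\R$ — but that last bound is not uniform, so instead I would run the same $\alpha_n\in\demi$/contraction machinery: Proposition \ref{p:born} in fact controls $d_\demi(\alpha_0,\rmi)$, hence $\Im\alpha_0$ from below and $|\alpha_0|$ from above, uniformly on $K_{x_1,x_2,\e}$; combined with the recursion $\alpha_n=\Phi_n(\alpha_{n+1})$ and Lemma \ref{l:phiim} one gets uniform control of every $\alpha_n$, and then the standard formula for off-diagonal Green's functions of a tridiagonal operator in terms of the diagonal ones and the transfer matrices yields the bound on $\langle\delta_k,(\Delta_2)^{-1}\delta_\ell\rangle$ for $k,\ell$ in a fixed finite set. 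Finally, to handle $\Delta_1$ I would invoke the charge-symmetry relation \eqref{e:sym} (or directly the analogous Schur identity), under which $\Delta^{({\Z_+})}_{1,\lambda,V,W}$ is conjugate to a $\Delta_2$-type operator with relabeled potentials still satisfying \eqref{e:spn1}–\eqref{e:spn2}, so the same bound applies.

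\textbf{Main obstacle.} The delicate point is Step 3: the resolvent $(\Delta^{({\Z_+})}_{2,\lambda,V,W})^{-1}$ is \emph{not} uniformly bounded as $\lambda\to\R$, so one cannot simply propagate the bound on the $(0,0)$ entry to nearby entries by a crude norm estimate; one must genuinely use that the continued-fraction iterates $\alpha_n$ stay in a compact subset of $\demi$ uniformly in $\lambda$ (which is exactly what the proof of Proposition \ref{p:born} delivers, through $\Im z_n>\eta_1$, $|z_n|\le M_1$), and then feed this into the transfer-matrix representation of the finitely many off-diagonal Green's functions that appear. Keeping careful track of which finite index set is needed — it depends only on $n_1,n_2$ and on how far $\td,\td^*$ spread support, i.e.\ by $1$ — makes the constant $C_2$ well-defined.
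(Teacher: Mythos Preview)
Your proposal has genuine gaps, and the paper's proof follows a quite different --- and much simpler --- route.

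\textbf{Gap 1 (Step 2).} The assertion that ``$A\subset\C\setminus\R$ bounded gives a uniform lower bound on $|\Im\lambda|$'' is simply false: nothing in the hypothesis prevents $A$ from touching $\R$, and in the intended application $A=K_{x_1,x_2,\e}$ it does. So the factors $(\lambda\pm m-V_j)^{-1}$ appearing in the off-diagonal blocks of Proposition~\ref{p:schur} are \emph{not} uniformly bounded on $A$, and you cannot reduce to bounding entries of $(\Delta^{({\Z_+})}_{j,\lambda,V,W})^{-1}$ with uniformly bounded coefficients in front.

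\textbf{Gap 2 (Step 3).} To propagate the bound from the $(0,0)$ entry to other entries you invoke Proposition~\ref{p:born} and the contraction machinery. But Proposition~\ref{p:born} requires the $\ell^1$ hypotheses \eqref{e:spn1}--\eqref{e:spn2}, which are \emph{not} part of the present lemma. The lemma is stated (and must be proved) under the sole assumptions that $V,W$ are bounded and $W_1(n)\neq -1$, $W_2(n)\neq 1$. Similarly, the charge symmetry \eqref{e:sym} you call on to pass from $\Delta_2$ to $\Delta_1$ is a $\Z$-statement; there is no such symmetry on $\Z_+$.

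\textbf{What the paper does instead.} The paper bypasses $\Delta_1,\Delta_2$ and Proposition~\ref{p:schur} entirely. Set $f:=(H^{(\Z_+)}_{m,V,W}-\overline{\lambda})^{-1}\binom{0}{1}\delta_0$; the hypothesis is exactly $|f_2(0)|\le C_1$. Writing out $(H^{(\Z_+)}_{m,V,W}-\overline{\lambda})f=\binom{0}{1}\delta_0$ componentwise gives a first-order two-step recurrence which, crucially, can be solved forward in $n$ by dividing only by $1+\overline{W_1}(n)$ and $-1+W_2(n)$, never by a $\lambda$-dependent quantity. Since $A$ is bounded, the $\lambda$-dependent \emph{coefficients} are bounded, so for each fixed $n$ one gets $|f_i(n)|\le D_n$ uniformly in $\lambda\in A$. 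This already bounds $\langle\binom{0}{1}\delta_0, R(\lambda)\binom{x_2}{y_2}\delta_{n_2}\rangle$ via self-adjointness. Repeating the argument with $g:=R(\lambda)\binom{x_2}{y_2}\delta_{n_2}$ (now $|g_2(0)|$ is controlled) gives the general matrix element. No hyperbolic geometry, no Schur complement, no extra hypotheses --- just the three-term recursion for the Dirac operator itself.
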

\begin{proof}
Let
\[
f:=\left( H^{({\Z_+} )}_{m,V,W}-\overline{\lambda}\right)^{-1}
\left(\begin{array}{c}
0\\
1
\end{array}\right)\delta_{0} .
\]
We have clearly
\[
|f_2(0)|=\left| \left\langle \left(\begin{array}{c}
0\\
1
\end{array}\right)\delta_{0}
,
\left( H^{({\Z_+} )}_{m,V,W}-\lambda\right)^{-1}
\left(\begin{array}{c}
0\\
1
\end{array}\right)\delta_{0} \right\rangle \right|
\leq C_1 ,
\]
for all $\lambda\in A$. By definition, $f$ is the unique solution in
$\ell^2({\Z_+},\C^2)$ of 
\begin{align*}
\left\lbrace\begin{array}{l}
(m+V_1(n)-\overline{\lambda} )f_1(n)+(1+W_1(n))f_2(n)+(-1+W_2(n))f_2(n+1) =0\\
(1+\overline{W_1}(n))f_1(n)+(-1+\overline{W_2}(n-1))f_1(n-1)+(-m+V_2(n)-\overline{\lambda} )f_2(n)=0
\end{array}\right.
\end{align*}
for all $n \geq 1$ and of
\begin{align*}
\left\lbrace\begin{array}{l}
(m+V_1(0)-\overline{\lambda} )f_1(0)+(1+W_1(0))f_2(0)+(-1+W_2(0))f_2(1) =0\\
(1+\overline{W_1}(0))f_1(0)+(-m+V_2(0)-\overline{\lambda} )f_2(0)=1 .
\end{array}\right.
\end{align*}
So by induction on $n\in{\Z_+}$ there exists $D_n>0$ such that
\[
| f_i(n) |\leq D_n  ,
\]
for all $\lambda\in A$ and $i\in \{ 1, 2\}$. Therefore we obtain that
\begin{align*}
\left| \left\langle \left(\begin{array}{c}
0\\
1
\end{array}\right)\delta_{0}
,
\left( H^{({\Z_+} )}_{m,V,W}-\lambda\right)^{-1}
\left(\begin{array}{c}
x_2\\
y_2
\end{array}\right)\delta_{n_2} \right\rangle \right|
& =\left| \left\langle \left(\begin{array}{c}
x_2\\
y_2
\end{array}\right)\delta_{n_2}
,
\left( H^{({\Z_+} )}_{m,V,W}-\overline{\lambda}\right)^{-1}
\left(\begin{array}{c}
0\\
1
\end{array}\right)\delta_{0} \right\rangle \right| \\
& =|x_2f_1(n_2)+y_2f_2(n_2)| \leq (|x_2|+|y_2|)D_{n_2}=:C_2 .
\end{align*}
for all $\lambda\in A$. Now let
\[
g:=\left( H^{({\Z_+} )}_{m,V,W}-\lambda\right)^{-1}
\left(\begin{array}{c}
x_2\\
y_2
\end{array}\right)\delta_{n_2} ,
\]
we have
\[
|g_2(0)|=\left| \left\langle \left(\begin{array}{c}
0\\
1
\end{array}\right)\delta_{0}
,
\left( H^{({\Z_+} )}_{m,V,W}-\lambda\right)^{-1}
\left(\begin{array}{c}
x_2\\
y_2
\end{array}\right)\delta_{n_2} \right\rangle \right| \leq C_2 .
\]
for all $\lambda\in A$. By definition, $g$ is the unique solution in $\ell^2({\Z_+},\C^2)$ of
\begin{align*}
\left\lbrace\begin{array}{l}
(m+V_1(n)-\lambda )g_1(n)+(1+W_1(n))g_2(n)+(-1+W_2(n))g_2(n+1) =x_2\delta_{n_2}(n)\\
(1+\overline{W_1}(n))g_1(n)+(-1+\overline{W_2}(n-1))g_1(n-1)+(-m+V_2(n)-\lambda
)g_2(n)=y_2\delta_{n_2}(n) 
\end{array}\right.
\end{align*}
for all $n \geq 1$ and of
\begin{align*}
\left\lbrace\begin{array}{l}
(m+V_1(0)-\lambda )g_1(0)+(1+W_1(0))g_2(0)+(-1+W_2(0))g_2(1) =x_2\delta_{n_2}(0)\\
(1+\overline{W_1}(0))g_1(0)+(-m+V_2(0)-\lambda )g_2(0)=y_2\delta_{n_2}(0) .
\end{array}\right.
\end{align*}
So by induction for all $n\in{\Z_+}$ there exists $D_n^\prime >0$ such that
\[
| g_i(n) |\leq D_n^\prime ,
\]
for all $\lambda\in A$ and $i\in \{ 1, 2\}$. Therefore we obtain that
\begin{align*}
\left| \left\langle \left(\begin{array}{c}
x_1\\
y_1
\end{array}\right)\delta_{n_1}
,
\left( H^{({\Z_+} )}_{m,V,W} -\lambda\right)^{-1}
\left(\begin{array}{c}
x_2\\
y_2
\end{array}\right)\delta_{n_2} \right\rangle \right|
=|x_1g_1(n_1)+y_1g_2(n_1)| \leq (|x_1|+|y_1|)D_{n_1}^\prime=:C_3 ,
\end{align*}
for all $\lambda\in A$. This concludes the proof.
\end{proof}

We are now in position to conclude with our main result.

\begin{proof}[Proof of Theorem \ref{t:spn}]
Since $D_m^{({\Z_+})}-H^{({\Z_+} )}_{m,V,W}$ is compact, then Weyl Theorem
gives the first point. We prove the second one. 
Take $x\in\left(-\sqrt{m^2+4} , -m\right)\cup \left(m , \sqrt{m^2+4}\right)$.
By Propositions \ref{p:schur} and \ref{p:born}  we have that there
exists $\e , C>0$ and $x_1,x_2\in\R$ such that $x\in(x_1,x_2)$ and  
\begin{align*}
& \left| \left\langle \left(\begin{array}{c}
0\\
1
\end{array}\right)\delta_{0}
,
\left( H^{({\Z_+} )}_{m,V,W}-\lambda \right)^{-1}
\left(\begin{array}{c}
0\\
1
\end{array}\right)\delta_{0} \right\rangle \right| 
=   \left| \left\langle \delta_{0}
,
(\Delta^{({\Z_+} )}_{2,\lambda  ,V,W})^{-1}
\delta_{0} \right\rangle \right| \leq C 
\end{align*}
for all $\lambda\in K_{x_1,x_2,\e}$. Then Theorem \ref{t:inac} and
Lemma \ref{l:maj} conclude by density. 
\end{proof}

\subsection{The case of $\Z$}\label{s:Z}
Now we  express $\Delta^{(\Z )}_{2,\lambda ,V,W}$ with the
help of $\Delta^{(0)}_{\lambda ,\cdot ,\cdot }$. 

\begin{lemma}\label{l:rec}
Take $\lambda\in\demi$,  $V\in \ell^\infty (\Z,\R^2)$,  $W\in
\ell^\infty (\Z,\C^2)$ with $W_1(0)\not =-1$ and $W_2(-1)\not =1$. Set:
\begin{align*}
\Phi(z_1,z_2) & := - \left(
a-\left(
b +cz_1
\right)^{-1}
-\left(
b^\prime +c^\prime z_2
\right)^{-1}
\right)^{-1} \\
a & := \lambda +m -V_2(0)\in\demi , \\
b & :=(\lambda -m -V_1(0))|1+W_1(0)|^{-2}\in\demi , \quad b^\prime  :=(\lambda -m -V_1(-1))|1-W_2(-1)|^{-2}\in\demi ,\\
c & := \left|\frac{1-W_2(0)}{1+W_1(0)}\right|^2\in\R_+^* ,
\quad\quad\quad\quad\quad\quad\quad \quad \!\!\!
c^\prime  := \left|\frac{1+W_1(-1)}{1-W_2(-1)}\right|^2\in\R_+^* ,
\end{align*}
we have
\begin{align*}
& \left\langle \delta_0 ,\left(\Delta^{(\Z )}_{2,\lambda ,V,W}\right)^{-1}\delta_0 \right\rangle \\
& \quad\quad\quad\quad\quad
=\Phi\left(
\left\langle \delta_0 ,\Delta^{(0)}_{\lambda ,\left(\tau V\right)_{|_{\Z_+}},\left(\tau W\right)_{|_{\Z_+}}}\delta_0 \right\rangle
,
\left\langle \delta_0 ,\Delta^{(0)}_{\lambda ,\left(\tau^2 SV_1
      ,\tau S V_2 \right)^t_{|_{\Z_+}},\left(\tau -\tau^2 S W_2 , -\tau^2
      S W_1 \right)^t_{|_{\Z_+}}}\delta_0 \right\rangle
\right) .
\end{align*}
\end{lemma}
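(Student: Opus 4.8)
The plan is to follow the proof of Proposition~\ref{p:rec}, but now cutting the chain $\Z$ at the site $0$ into the right tail $\Z_1$ and the left tail $\Z_-$, and to identify each tail with a half-line operator of the form $\Delta^{(0)}_{\lambda,\cdot,\cdot}$. First I would set
\[
f:=\left(\Delta^{(\Z )}_{2,\lambda ,V,W}\right)^{-1}\delta_0\in\ell^2(\Z ,\C),
\]
so that the left-hand side of the asserted identity is exactly $f(0)$. Writing $\Delta^{(\Z )}_{2,\lambda ,V,W}f=\delta_0$ as a three-term recursion, the equation is homogeneous at every $n\neq 0$, while at $n=0$ it has the shape $(\text{right coupling})+(\text{left coupling})-a\,f(0)=1$, where $a=\lambda+m-V_2(0)$ and the right (resp.\ left) coupling is the part of $\bigl(\td^*\frac{1}{\lambda-m-V_1}\td\,f\bigr)(0)$ containing $f(1)$ (resp.\ $f(-1)$). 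It then suffices to express the two couplings in terms of $z_1$ and $z_2$.

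For the right tail this is literally the computation in the proof of Proposition~\ref{p:rec} applied with $n=0$: restricting $f$ to $\Z_1$ gives $f_{|\Z_1}=\kappa_+\bigl(\Delta^{(1)}_{\lambda,V,W}\bigr)^{-1}\delta_1$, where $\kappa_+$ is a linear combination of $f(0)$ and $f(1)$ and $\Delta^{(1)}_{\lambda,V,W}$ is invertible by Proposition~\ref{p:inv2}. Since, after the shift $k\mapsto k-1$, $\Delta^{(1)}_{\lambda,V,W}$ on $\ell^2(\Z_1)$ is unitarily equivalent to $\Delta^{(0)}_{\lambda,(\tau V)_{|\Z_+},(\tau W)_{|\Z_+}}$, evaluating at $n=1$ yields $f(1)=\kappa_+\,z_1$, with $z_1$ the first argument of $\Phi$ in the statement (here I use $W_1(0)\neq-1$ so that $\kappa_+$ is honestly resolved for $f(1)$). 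Substituting the resulting value of $f(1)$ back into the right coupling and simplifying (the moduli $|1+W_1(0)|^2$ and $|1-W_2(0)|^2$ come out naturally), the right coupling collapses to $(b+c\,z_1)^{-1}f(0)$ with $b,c$ as in the statement.

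For the left tail I would first reflect: put $h(j):=f(-j-1)$ for $j\in\Z_+$, i.e.\ compose with $S$ and translate by one. The equations for $f$ at $n\leq-2$ become the homogeneous equations at $j\geq 1$ of a half-line operator $\Delta^{(0)}_{\lambda,\widehat V,\widehat W}$; matching the coefficients of $h(j-1),h(j),h(j+1)$ on both sides identifies $\widehat V=(\tau^2 SV_1,\tau SV_2)^t$ and $\widehat W$ as displayed in the statement — the interchange of the roles of $W_1$ and $W_2$ (with the attending signs) and the powers of $\tau$ being exactly what conjugating $\td=d+W_1+W_2\tau$ by $S$ and the translation produce. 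The equation for $f$ at $n=-1$ turns into the inhomogeneous equation $\Delta^{(0)}_{\lambda,\widehat V,\widehat W}\,h=\kappa_-\,\delta_0$, with $\kappa_-$ a linear combination of $f(0)$ and $f(-1)=h(0)$, for exactly the reason met in Proposition~\ref{p:rec}: the unique bond of $\Delta^{(\Z)}_{2,\lambda,V,W}$ joining the left tail to the site $0$ is precisely the bond that gets dropped when one passes to the truncated operator. As on the right, $h=\kappa_-\bigl(\Delta^{(0)}_{\lambda,\widehat V,\widehat W}\bigr)^{-1}\delta_0$ (invertible by Proposition~\ref{p:inv2}), so evaluating at $j=0$ gives $f(-1)=\kappa_-\,z_2$ with $z_2$ the second argument of $\Phi$ (here $W_2(-1)\neq 1$ lets one resolve $\kappa_-$ for $f(-1)$); feeding this back into the left coupling it collapses, by the same algebra as on the right, to $(b'+c'z_2)^{-1}f(0)$.

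Putting the two computations together, the $n=0$ equation becomes $\bigl((b+cz_1)^{-1}+(b'+c'z_2)^{-1}-a\bigr)f(0)=1$, whence
\[
f(0)=-\bigl(a-(b+cz_1)^{-1}-(b'+c'z_2)^{-1}\bigr)^{-1}=\Phi(z_1,z_2),
\]
which is the claim; this last inversion is legitimate since $z_1,z_2\in\demi$ by Proposition~\ref{p:dem}, hence $b+cz_1,\,b'+c'z_2\in\demi$ and therefore $a-(b+cz_1)^{-1}-(b'+c'z_2)^{-1}\in\demi$ as well. The one genuinely delicate point is the bookkeeping of the left-tail step: one must track the data $(V,W)$ through the composition of $S$ with the translation and check that this reproduces exactly the combination $\bigl(\tau^2 SV_1,\tau SV_2\bigr)^t$, $\bigl(\tau-\tau^2 SW_2,-\tau^2 SW_1\bigr)^t$ of the statement, and verify that the boundary contribution at the reflected origin is indeed a pure multiple of $\delta_0$; everything else is the elementary algebra already carried out in Proposition~\ref{p:rec}.
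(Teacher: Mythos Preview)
Your proposal is correct and follows essentially the same approach as the paper's proof: cut $\Z$ at the site $0$, identify the right tail via $f':=(\tau f)_{|\Z_+}$ with the half-line operator $\Delta^{(0)}_{\lambda,(\tau V)_{|\Z_+},(\tau W)_{|\Z_+}}$ exactly as in Proposition~\ref{p:rec}, identify the left tail via $f'':=(\tau S f)_{|\Z_+}$ (your $h(j)=f(-j-1)$) with the reflected half-line operator, and combine the two resulting relations with the $n=0$ equation to solve for $f(0)$. The only point you flag as delicate --- the explicit identification of $(\widehat V,\widehat W)$ after reflection and shift --- is precisely the computation the paper carries out term by term, and your justification of the final inversion via Proposition~\ref{p:dem} is a welcome detail the paper leaves implicit.
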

\begin{proof}
We define
\begin{align*}
&
f:=\left(\Delta^{(\Z )}_{2,\lambda ,V,W}\right)^{-1}\delta_0, \\
&
g:=\left(\Delta^{(0)}_{\lambda ,\left(\tau
      V\right)_{|_{\Z_+}},\left(\tau
      W\right)_{|_{\Z_+}}}\right)^{-1}\delta_{0}
\mbox{ and }
h:=\left(\Delta^{(0)}_{\lambda ,\left(\tau^2 S V_1 ,\tau  SV_2
    \right)^t_{|_{\Z_+}},\left(\tau -\tau^2 S W_2 , -\tau^2 SW_1 \right)^t_{|_{\Z_+}}}\right)^{-1}\delta_{0} .
\end{align*}
Clearly
\begin{align*}
& \left\langle \delta_0 ,\left(\Delta^{(\Z )}_{2,\lambda
      ,V,W}\right)^{-1}\delta_0 \right\rangle=f(0) , \quad
\left\langle \delta_0 ,\Delta^{(0)}_{\lambda ,\left(\tau
      V\right)_{|_{\Z_+}},\left(\tau W\right)_{|_{\Z_+}}}\delta_0
\right\rangle=g(0) \\ 
& \left\langle \delta_0 ,\Delta^{(0)}_{\lambda ,\left(\tau^2 S V_1
      ,\tau S V_2 \right)^t_{|_{\Z_+}},\left(\tau -\tau^2 S W_2 , -\tau^2
      S W_1 \right)^t_{|_{\Z_+}}}\delta_0 \right\rangle=h(0)  .
\end{align*}
By definition $f$ is the unique solution in $\ell^2(\Z,\C)$ of
$\Delta^{(\Z )}_{2,\lambda ,V,W}f=\delta_0$, i.e., 
\begin{align}
\notag &
\frac{1+\overline{W_1}(n)}{\lambda -m -V_1(n)}\left((1+W_1(n))f(n)+(-1+W_2(n))f(n+1)\right) \\
\label{rec2} &  \quad \quad \quad \quad \quad 
+\frac{1-\overline{W_2}(n-1)}{\lambda -m -V_1(n-1)}\left((1-W_2(n-1))f(n)+(-1-W_1(n-1))f(n-1)\right) \\
\notag & \quad \quad \quad \quad \quad \quad \quad \quad \quad \quad 
 -(\lambda +m-V_2(n))f(n)  =\delta_0(n),
\end{align}
for all $n\in\Z$.
Let $f^\prime:=\left(\tau f\right)_{|_{\Z_+}}$, we see that $f^\prime$ is solution of
\begin{align*}
& \frac{1+\overline{W_1}(n+1)}{\lambda -m -V_1(n+1)}\left((1+W_1(n+1))f^\prime(n)+(-1+W_2(n+1))f^\prime(n+1)\right) \\
\notag &  \quad \quad \quad \quad \quad 
+\frac{1-\overline{W_2}(n)}{\lambda -m -V_1(n)}\left((1-W_2(n))f^\prime(n)+(-1-W_1(n))f^\prime(n-1)\right) \\
\notag & \quad \quad \quad \quad \quad  \quad \quad \quad \quad \quad 
 -(\lambda +m-V_2(n+1))f^\prime(n)  =0
\end{align*}
for all $n\geq 1$ and
\begin{align*}
& \frac{1+\overline{W_1}(1)}{\lambda -m -V_1(1)}\left((1+W_1(1))f^\prime(0)+(-1+W_2(1))f^\prime(1)\right) 
 -(\lambda +m-V_2(1))f^\prime(0)\\
&  \quad \quad \quad \quad \quad =
  \frac{1-\overline{W_2}(0)}{\lambda -m -V_1(0)}\left((1+W_1(0))f(0)+(-1+W_2(0))f(1)\right) .
\end{align*}
So we obtain that
\begin{align*}
\Delta^{(0)}_{\lambda ,\left(\tau V\right)_{|_{\Z_+}},\left(\tau W\right)_{|_{\Z_+}}}f^\prime
 & =\frac{1-\overline{W_2}(0)}{\lambda -m -V_1(0)}\left((1+W_1(0))f(0)+(-1+W_2(0))f(1)\right)\delta_{0} \\
 & =
 \frac{1-\overline{W_2}(0)}{\lambda -m -V_1(0)}\left((1+W_1(0))f(0)+(-1+W_2(0))f(1)\right)\Delta^{(0)}_{\lambda ,\left(\tau V\right)_{|_{\Z_+}},\left(\tau W\right)_{|_{\Z_+}}}g .
\end{align*}
Since $\Delta^{(0)}_{\lambda ,\left(\tau V\right)_{|_{\Z_+}},\left(\tau W\right)_{|_{\Z_+}}}$ is invertible, we get
\begin{align}\label{e:rec1}
\frac{1-\overline{W_2}(0)}{\lambda -m -V_1(0)}\left((1+W_1(0))f(0)+(-1+W_2(0))f(1)\right)g(0)
=f^\prime(0)=f(1) .
\end{align}
Now let $f^\pprime:=\left(\tau S f\right)_{|_{\Z_+}}$
We see that $f^\pprime$ is solution of
\begin{align*}
& 
\frac{1+\overline{W_1}(-n-1)}{\lambda -m -V_1(-n-1)}\left((1+W_1(-n-1))f^\pprime(n)+(-1+W_2(-n-1))f^\pprime(n-1)\right) \\
\notag &  \quad \quad \quad \quad \quad 
+\frac{1-\overline{W_2}(-n-2)}{\lambda -m -V_1(-n-2)}\left((1-W_2(-n-2))f^\pprime(n)+(-1-W_1(-n-2))f^\pprime(n+1)\right) \\
\notag & \quad \quad \quad \quad \quad  \quad \quad \quad \quad \quad 
 -(\lambda +m-V_2(-n-1))f^\pprime(n)  =0
\end{align*}
for all $n\geq 1$ and 
\begin{align*}
 & 
\frac{1-\overline{W_2}(-2)}{\lambda -m -V_1(-2)}\left((1-W_2(-2))f^\pprime(0)+(-1-W_1(-2))f^\pprime(1)\right) 
 -(\lambda +m-V_2(-1))f^\pprime(0) \\
\notag  &  \quad \quad \quad \quad \quad =
\frac{1+\overline{W_1}(-1)}{\lambda -m -V_1(-1)} \left((1-W_2(-1))f(0)+(-1-W_1(-1))f(-1)\right).
\end{align*}
By setting $\tilde{\Delta}:=\Delta^{(0)}_{\lambda ,\left(\tau^2 S
    V_1 ,\tau S V_2 \right)^t_{|_{\Z_+}},\left(\tau -\tau^2 S W_2 ,
    -\tau^2 S W_1 \right)^t_{|_{\Z_+}}}$, we obtain
\begin{align*}
\tilde{\Delta}f^\pprime
 & =\frac{1+\overline{W_1}(-1)}{\lambda -m -V_1(-1)} \left((1-W_2(-1))f(0)+(-1-W_1(-1))f(-1)\right)\delta_{0} \\
 & =
 \frac{1+\overline{W_1}(-1)}{\lambda -m -V_1(-1)} \left((1-W_2(-1))f(0)+(-1-W_1(-1))f(-1)\right)\tilde{\Delta}h .
\end{align*}
Since $\tilde{\Delta}$ is invertible, we infer
\begin{align}\label{e:rec2}
\frac{1+\overline{W_1}(-1)}{\lambda -m -V_1(-1)} \left((1-W_2(-1))f(0)+(-1-W_1(-1))f(-1)\right)h(0)
=f^\pprime(0)=f(-1) .
\end{align}
Straightforwardly, using \eqref{rec2} for $n=0$, \eqref{e:rec1} and \eqref{e:rec2} we have that $f(0) = \Phi(g(0),h(0))$.
\end{proof}

Now with this Lemma we can obtain that $\left\langle \delta_0
  ,\left(\Delta^{(\Z )}_{2,\lambda ,V,W}\right)^{-1}\delta_0
\right\rangle $ is bounded independently of $\lambda$ with the
Proposition \ref{p:born}.

\begin{corollary}\label{c:born}
Take $x\in\left(-\sqrt{m^2+4} , -m\right)\cup \left(m ,
  \sqrt{m^2+4}\right)$, $V\in \ell^\infty (\Z,\R^2)$, and $W\in
\ell^\infty (\Z,\C^2)$ with $W_1(n)\not =-1$ and $W_2(n)\not=1$, for
all $n\in\Z$. Suppose that there exists $\nu\in\Z_+\setminus\{0\}$ such that
\eqref{e:spz2} or \eqref{e:spz2'} holds true. 
Then there exist $C,\e >0$ and $x_1,x_2\in\R$ such that $x\in (x_1,x_2)$ and
\[
\left|
\left\langle \delta_0 ,\left(\Delta^{(\Z )}_{2,\lambda ,V,W}\right)^{-1}\delta_0 \right\rangle
\right|
 \leq C 
\]
for all $\lambda\in K_{x_1 ,x_2 ,\e}$.
\end{corollary}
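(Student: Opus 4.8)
The plan is to combine Lemma \ref{l:rec} with Proposition \ref{p:born}. The first lines of \eqref{e:spz2} and \eqref{e:spz2'} give in particular $W_1(0)\neq -1$ and $W_2(-1)\neq 1$, so Lemma \ref{l:rec} applies and yields
\[
\left\langle \delta_0 ,\left(\Delta^{(\Z )}_{2,\lambda ,V,W}\right)^{-1}\delta_0 \right\rangle=\Phi\bigl(\beta^+_\lambda ,\beta^-_\lambda\bigr),
\]
where $\beta^+_\lambda:=\bigl\langle \delta_0 ,\bigl(\Delta^{(0)}_{\lambda ,\left(\tau V\right)_{|_{\Z_+}},\left(\tau W\right)_{|_{\Z_+}}}\bigr)^{-1}\delta_0 \bigr\rangle$ depends only on the values of $V,W$ on $\Z_1$, and $\beta^-_\lambda$ is the analogous diagonal resolvent matrix element at $\delta_0$ for the reflected-and-shifted perturbation appearing in Lemma \ref{l:rec}, which depends only on the values of $V,W$ on $\Z_{-1}$. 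Arguing as in the proof of Proposition \ref{p:dem}, both $\beta^+_\lambda$ and $\beta^-_\lambda$ lie in $\demi$ for every $\lambda\in\demi$.

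Next I would show that one of $\beta^\pm_\lambda$ is uniformly controlled. Suppose \eqref{e:spz2} holds. The shift $\tau$ preserves \eqref{e:spn1} and the first line of \eqref{e:spn2}, and sends the $\ell^1$-difference hypothesis to $\left(\tau V\right)_{|_{\Z_+}}-\tau^{\nu_1}\left(\tau V\right)_{|_{\Z_+}}=\tau\bigl(V_{|_{\Z_+}}-\tau^{\nu_1}V_{|_{\Z_+}}\bigr)\in\ell^1({\Z_+},\R^2)$, and likewise for $W$. Hence the half-line operator defining $\beta^+_\lambda$ satisfies the hypotheses of Proposition \ref{p:born}, which yields $x_1,x_2\in\R$ with $x\in(x_1,x_2)$ and $\e ,M_1>0$ such that $d_\demi(\beta^+_\lambda ,\rmi)\leq M_1$ for all $\lambda\in K_{x_1,x_2,\e}$; consequently $\beta^+_\lambda$ stays inside the hyperbolic ball $\Kr:=\{z\in\demi : d_\demi(z,\rmi)\leq M_1\}$, a fixed compact subset of $\demi$. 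If instead \eqref{e:spz2'} holds, then unwinding the reflection $S$ and the shifts $\tau,\tau^2$ occurring in Lemma \ref{l:rec} shows that the second line of \eqref{e:spz2'} is precisely what makes the half-line operator defining $\beta^-_\lambda$ satisfy \eqref{e:spn2} (this is the equivalence under $U$ noted after Theorem \ref{t:spz}); Proposition \ref{p:born} then applies to $\beta^-_\lambda$ and gives the same conclusion with the roles of $\beta^+_\lambda$ and $\beta^-_\lambda$ exchanged. In either case, after relabelling, one argument of $\Phi$ lies in the fixed compact set $\Kr\subset\demi$ and the other lies in $\demi$.

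Finally I would bound $|\Phi(z_1,z_2)|$ uniformly in $\lambda\in K_{x_1,x_2,\e}$ when, say, $z_1\in\Kr$ and $z_2\in\demi$ (the case $z_2\in\Kr$ being symmetric). With $a,b,b^\prime\in\demi$, $c,c^\prime>0$ as in Lemma \ref{l:rec}: the numbers $c,c^\prime$ do not depend on $\lambda$, $|a|,|b|,|b^\prime|\leq B<\infty$ on $K_{x_1,x_2,\e}$, and $\Im b>0$; putting $\eta_1:=\inf_{z\in\Kr}\Im z>0$ and $M:=\sup_{z\in\Kr}|z|<\infty$, the point $b+cz_1$ ranges in a compact subset of $\demi$ and, as in \eqref{e:im2} (using $\Im(b+cz_1)\geq c\,\Im z_1\geq c\eta_1$ and $|b+cz_1|\leq B+cM$),
\[
\Im\left(-\left(b+cz_1\right)^{-1}\right)=\frac{\Im(b+cz_1)}{\left|b+cz_1\right|^2}\geq\frac{c\,\eta_1}{(B+cM)^2}=:\eta>0 .
\]
Since $a$, $-\left(b+cz_1\right)^{-1}$, and $-\left(b^\prime+c^\prime z_2\right)^{-1}$ each have positive imaginary part, the point $a-\left(b+cz_1\right)^{-1}-\left(b^\prime+c^\prime z_2\right)^{-1}$ has imaginary part $\geq\eta$, hence modulus $\geq\eta$, so that $|\Phi(z_1,z_2)|\leq 1/\eta=:C$. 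Combined with the identity from Lemma \ref{l:rec}, this is the assertion.

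I expect the real obstacle to lie in the second paragraph: verifying with care that after the reflection $S$ and the shifts $\tau,\tau^2$ in Lemma \ref{l:rec}, the perturbation defining the controlled half-line Green function still fulfils both the ``$W_1(n)\neq -1,\ W_2(n)\neq 1$'' requirement and the $\ell^1$-difference condition of \eqref{e:spn2} needed by Proposition \ref{p:born}. Everything else comes down to the elementary facts about the Poincar\'e half-plane recorded in Section \ref{s:gene}.
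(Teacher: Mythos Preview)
Your proposal is correct and follows essentially the same route as the paper: use Lemma~\ref{l:rec} to express the full-line Green function as $\Phi(\beta^+_\lambda,\beta^-_\lambda)$, apply Proposition~\ref{p:born} to whichever half-line argument is controlled by \eqref{e:spz2} or \eqref{e:spz2'}, and then bound $|\Phi|$ via the lower bound on $\Im\bigl(-(b+cz_1)^{-1}\bigr)$ (respectively $\Im\bigl(-(b'+c'z_2)^{-1}\bigr)$). One small slip: \eqref{e:spz2'} has no ``first line'' about $W_1\neq -1$, $W_2\neq 1$; that condition is already part of the Corollary's hypotheses, so your application of Lemma~\ref{l:rec} is justified directly.
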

\begin{proof}
Let
\begin{align*}
& \alpha_\lambda:= \left\langle \delta_0 ,\Delta^{(0)}_{\lambda ,\left(\tau V\right)_{|_{\Z_+}},\left(\tau W\right)_{|_{\Z_+}}}\delta_0 \right\rangle \in\demi , \\
& \alpha_\lambda^\prime:=\left\langle \delta_0
  ,\Delta^{(0)}_{\lambda ,\left(\tau^2 S V_1 ,\tau  SV_2
    \right)^t_{|_{\Z_+}},\left(\tau -\tau^2 SW_2 , -\tau^2 SW_1 \right)^t_{|_{\Z_+}}}\delta_0 \right\rangle \in\demi .
\end{align*}
With Lemma \ref{l:rec} we have
\begin{align*}
\left|
\left\langle \delta_0 ,\left(\Delta^{(\Z )}_{2,\lambda ,V,W}\right)^{-1}\delta_0 \right\rangle
\right|
& =
\left|
a-\left(b+c\alpha_\lambda\right)^{-1}-\left(b^\prime + c^\prime\alpha^\prime_\lambda\right)^{-1}
\right|^{-1} \\
& \leq
\min \left( \left(\Im \left( -\left(b+ c\alpha_\lambda\right)^{-1}\right)\right)^{-1}
,
\left(\Im \left( -\left(b^\prime + c^\prime\alpha^\prime_\lambda\right)^{-1}\right)\right)^{-1}
\right) .
\end{align*}
Now if $V_{|_{{\Z_+}}}-\tau^{\nu_1} V_{|_{{\Z_+}}},W_{|_{{\Z_+}}}-\tau^{\nu_2} W_{|_{{\Z_+}}} \in\ell^1 ({\Z_+},\C^2)$ with Proposition \ref{p:born} there exist $C_1,\e >0$ and $x_1,x_2\in\R$ such that $x\in (x_1,x_2)$ and $d_\demi ( \alpha_\lambda ,\rmi )\leq C_1$ for all $\lambda\in K_{x_1 ,x_2 ,\e}$, so there exists $C_2 >0$ such that
\[
\left|
\left\langle \delta_0 ,\left(\Delta^{(\Z )}_{2,\lambda ,V,W}\right)^{-1}\delta_0 \right\rangle
\right| \leq
\left( \Im \left( -\left( b+ c\alpha_\lambda\right)^{-1}\right)\right)^{-1}\leq C_2
\]
for all $\lambda\in K_{x_1 ,x_2 ,\e}$.
Now if
$V_{|_{\Z_-}}-\tau^{\nu_1} V_{|_{\Z_-}},W_{|_{\Z_-}}-\tau^{\nu_2} W_{|_{\Z_-}} \in\ell^1 (\Z_-,\C^2)$ with Proposition \ref{p:born} there exist $C_1,\e >0$  and
$x_1,x_2\in\R$ such that $x\in (x_1,x_2)$ and $d_\demi (
\alpha^\prime_\lambda ,\rmi )\leq C_1$ for all $\lambda\in K_{x_1 ,x_2
  ,\e}$, so there exists $C_2 >0$ such that 
\[
\left|
\left\langle \delta_0 ,\left(\Delta^{(\Z )}_{2,\lambda ,V,W}\right)^{-1}\delta_0 \right\rangle
\right| \leq
\left( \Im \left( -\left( b^\prime+ c^\prime\alpha^\prime_\lambda\right)^{-1}\right)\right)^{-1}\leq C_2
\]
for all $\lambda\in K_{x_1 ,x_2 ,\e}$.
\end{proof}

Finally we conclude with the help of the symmetry of charge. 

\begin{proof}[Proof of Theorem \ref{t:spz}]
Since $D_m^{(\Z)}-H^{(\Z )}_{m,V,W}$ is compact, then Weyl Theorem
gives the first point. We turn to the second one. 
Let $n\in\Z$, let $x\in\left(-\sqrt{m^2+4} , -m\right)\cup \left(m ,
  \sqrt{m^2+4}\right)$.  
Proposition \ref{p:schur} and Corollary
\ref{c:born} ensure that there exist $\e_1 ,C_1>0$ and $x_1,x_2\in\R$ such that
$x\in (x_1,x_2)$ and  
\begin{align*}
 \left| \left\langle \left(\begin{array}{c}
0\\
1
\end{array}\right)\delta_{n}
,
\left( H^{(\Z )}_{m,V,W}-\lambda \right)^{-1}
\left(\begin{array}{c}
0\\
1
\end{array}\right)\delta_{n} \right\rangle \right|& =
 \left| \left\langle \left(\begin{array}{c}
0\\
1
\end{array}\right)\delta_{0}
,
\left( H^{(\Z )}_{m,\tau^nV , \tau^nW}-\lambda \right)^{-1}
\left(\begin{array}{c}
0\\
1
\end{array}\right)\delta_{0} \right\rangle \right| \\
&=   \left| \left\langle \delta_{0}
,
(\Delta^{(\Z )}_{2,\lambda  ,\tau^nV , \tau^nW})^{-1}
\delta_{0} \right\rangle \right|\leq C_1,
\end{align*}
for all $\lambda\in K_{x_1,x_2,\e_1}$. Then, Theorem \ref{t:inac}
yields that the measure 
\[
\left\langle \left(\begin{array}{c}
0\\
1
\end{array}\right)\delta_{n}, \bone_{(\cdot) }(H^{({\Z_+} )}_{m,V,W}) \left(\begin{array}{c}
0\\
1
\end{array}\right)\delta_{n} \right\rangle
\]
is purely absolutely continuous on $(x_1, x_2)$. 
Now we use $U$, see \eqref{e:sym}. Let
\[
V^\prime:=\left( -S\tau^nV_2 , -S\tau^nV_1 \right)^t
\quad\text{ and }\quad
W^\prime:=\left( S\tau^n\overline{W_1} ,  S\tau^{n-1}\overline{W_2}\right)^t ,
\]
there exist $\e_2 ,C_2>0$ and $x_3,x_4\in\R$ such that $x\in (x_3,x_4)$ and
\begin{align*}
 \left| \left\langle \left(\begin{array}{c}
1\\
0
\end{array}\right)\delta_{n}
,
\left( H^{(\Z )}_{m,V,W}-\lambda \right)^{-1}
\left(\begin{array}{c}
1\\
0
\end{array}\right)\delta_{n} \right\rangle \right| &=
 \left| \left\langle \left(\begin{array}{c}
1\\
0
\end{array}\right)\delta_{0}
,
\left( H^{(\Z )}_{m,\tau^nV , \tau^nW}-\overline{\lambda} \right)^{-1}
\left(\begin{array}{c}
1\\
0
\end{array}\right)\delta_{0} \right\rangle \right| 
\\
&= 
\left| \left\langle \left(\begin{array}{c}
0\\
\delta_{0}
\end{array}\right)
,
\left( H^{(\Z )}_{m,V^\prime , W^\prime } -\left(-\overline{\lambda}\right) \right)^{-1}
\left(\begin{array}{c}
0\\
\delta_{0}
\end{array}\right) \right\rangle \right| 
\\
&=   \left| \left\langle \delta_{0}
,
(\Delta^{(\Z )}_{2,-\overline{\lambda} ,V^\prime , W^\prime })^{-1}
\delta_{0} \right\rangle \right|\leq C_2
\end{align*}
for all $\lambda\in K_{x_3,x_4,\e_2}$. Again Theorem \ref{t:inac}
gives that the measure 
\[
\left\langle \left(\begin{array}{c}
1\\
0
\end{array}\right)\delta_{n}, \bone_{(\cdot) }(H^{({\Z_+} )}_{m,V,W}) \left(\begin{array}{c}
1\\
0
\end{array}\right)\delta_{n} \right\rangle
\]
is purely absolutely continuous  $(x_3, x_4)$.
Finally, remembering that $x$ is arbitrary and by an argument of density, we
infer that $H^{({\Z_+} )}_{m,V,W}$ has pure ac spectrum
$\left(-\sqrt{m^2+4} , -m\right)\cup \left(m , \sqrt{m^2+4}\right)$.  
\end{proof}

\section{The case of the Laplacian}\label{s:Lap}
In this section we explain briefly how to adapt our proofs in order to
prove Theorem \ref{t:main0}. For the sake of clarity, we stick to the
case $\Delta +V $ and compare our proof directly to \cite{FHS}. 

We start by the case of ${\Z_+}$. 
\begin{theorem}\label{t:main0N}
Take $V\in \ell^\infty ({\Z_+},\R)$ and $\tau\in {\Z_+}$ such that:
\begin{align}\label{e:main0N}
\left.\begin{array}{c}
\lim_{n\to + \infty}V(n)=0,
\\
\\
V-\tau^\nu V\in\ell^1 ({\Z_+} ,\R), 
\end{array}\right. 
\end{align}
then the spectrum of $\Delta + V $ is purely absolutely continuous on
$(0, 4)$. 
\end{theorem}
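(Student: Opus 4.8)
The plan is to mirror the Dirac argument from Section \ref{s:abso}, but in the scalar setting where everything is genuinely self-adjoint, so no Schur-type reduction (Proposition \ref{p:schur}) is needed. First I would set up, for $\lambda\in\demi$ and $n\in{\Z_+}$, the truncated Laplacian $\Delta^{(n)}_V$ on $\ell^2(\Z_n,\C)$ obtained by restricting $\Delta+V$ to $\Z_n$ with a Dirichlet-type boundary modification at the left endpoint (exactly as in \cite{FHS}), and the Green function
\[
\alpha_n:=\left\langle \delta_n, (\Delta^{(n)}_V-\lambda)^{-1}\delta_n\right\rangle.
\]
Invertibility of $\Delta^{(n)}_V-\lambda$ and $\alpha_n\in\demi$ follow just as in Propositions \ref{p:inv2} and \ref{p:dem} (the numerical range lies in the closed lower half-plane, so $\lambda\in\demi$ is in the resolvent set, and $\Im\alpha_n=\Im(\lambda)\|(\Delta^{(n)}_V-\lambda)^{-1}\delta_n\|^2>0$).

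Next I would establish the recursion $\alpha_n=\psi_n(\alpha_{n+1})$ for an explicit Möbius map $\psi_n$ of $\demi$, the scalar analogue of Proposition \ref{p:rec}: expanding $(\Delta^{(n)}_V-\lambda)f=\delta_n$ and peeling off the first coordinate gives
\[
\alpha_n=\frac{1}{V(n)-\lambda+1-\alpha_{n+1}},
\]
i.e. $\psi_n=\varphi_{a_n,b_n,c_n}$ with $a_n:=\lambda-V(n)$, $b_n:=1$, $c_n:=1$ in the notation of Lemma \ref{l:cont} (up to signs, matching the conventions of the paper). Lemma \ref{l:cont} then shows each $\psi_n$ is a strict contraction of $(\demi,d_\demi)$ with contraction factor $\leq (1+\Im\lambda)^{-1}<1$ uniformly in $n$, and with a uniform diameter bound on $K_{x_1,x_2,\e}$; this is the scalar version of Proposition \ref{p:cont}. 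Consequently the analogue of Corollary \ref{c:cvg} holds: $\dhlim_{n\to\infty}\psi_0\circ\cdots\circ\psi_n(\zeta_n)=\alpha_0$ for any $\zeta_n\in\demi$.

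The heart of the argument is then the scalar analogue of Proposition \ref{p:born}: for $x\in(0,4)$ there exist $x_1<x<x_2$, $\e>0$, and $M_1>0$ with $d_\demi(\alpha_0,\rmi)\leq M_1$ for all $\lambda\in K_{x_1,x_2,\e}$. As in the Dirac case, I would use Lemma \ref{l:pol} to write $\psi_n\circ\cdots\circ\psi_{n+\nu-1}$ as a Möbius transformation with coefficients depending continuously (polynomially) on $(V(n),\dots,V(n+\nu-1))$; compute the fixed points of the limiting map $\underbrace{\varphi_{x,1,1}\circ\cdots\circ\varphi_{x,1,1}}_{\nu}$, which (being the $\nu$-fold iterate) has the same two fixed points as $\varphi_{x,1,1}$, namely $\tfrac{1}{2}(-x\pm\rmi\sqrt{x(4-x)})$, lying in $\demi$ precisely because $x\in(0,4)$; show via an explicit quadratic-formula expression $Z(\omega)$ (as in \eqref{e:Z}) that nearby maps still have a fixed point $z_n(\lambda)=Z(\omega_{n,\lambda})$ staying in a compact subset of $\demi$ with $\Im z_n(\lambda)\geq\eta_1>0$; and finally telescope
\[
d_\demi(\rmi,\alpha_0)\leq \sum_{k=0}^{n_0+\nu-2}d_\demi(\rmi,\psi_k(\rmi))+d_\demi(\rmi,\psi_{n_0+\nu-1}(z_{n_0}))+\sum_{k\geq0}d_\demi\big(z_{n_0+\nu k},z_{n_0+\nu(k+1)}\big),
\]
bounding the last sum by the Lipschitz estimate $|z_{n+\nu}-z_n|\lesssim\sum_{j=0}^{\nu-1}|V(n+j+\nu)-V(n+j)|$ together with $V-\tau^\nu V\in\ell^1$. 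This last step — pushing the fixed-point bound through the telescoping sum while controlling the $\ell^1$ tail — is the main obstacle, exactly as it is in Proposition \ref{p:born}, but it transfers verbatim. With $\alpha_0=\langle\delta_0,(\Delta+V-\lambda)^{-1}\delta_0\rangle$ bounded on $K_{x_1,x_2,\e}$, a scalar version of Lemma \ref{l:maj} (induction along the recurrence) upgrades this to a bound on $\langle\delta_k,(\Delta+V-\lambda)^{-1}\delta_k\rangle$ for every $k$, and Theorem \ref{t:inac} plus density of finitely supported vectors gives pure absolute continuity of $\Delta+V$ on $(x_1,x_2)$; since $x\in(0,4)$ was arbitrary, we conclude on all of $(0,4)$.
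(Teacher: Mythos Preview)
Your overall architecture is exactly the paper's, and the telescoping fixed-point argument transfers without change. There is, however, one genuine gap.

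You claim that Lemma \ref{l:cont} makes each $\psi_n=\varphi_{a_n,b_n,c_n}$, with $a_n=\lambda-V(n)$, $b_n=1$, $c_n=1$, a strict contraction with factor $\leq (1+\Im\lambda)^{-1}$. But the strict-contraction clause of Lemma \ref{l:cont} requires $a,b\in\demi$, and here $b_n=1\in\R$, so $\Im(b_n)=0$ and the bound \eqref{e:cont1} only gives the trivial factor $1/(1+\Im(a_n)\cdot 0)=1$. Likewise the uniform diameter bound in Lemma \ref{l:cont} degenerates. This is precisely the difference the paper flags: in the Dirac setting $b_n=(\lambda-m-V_1(n))|1+W_1(n)|^{-2}\in\demi$, so Proposition \ref{p:cont} applies directly, whereas in the Laplacian setting $\Phi_n$ is \emph{only} a contraction, and one must instead use that the two-step composition $\Phi_n\circ\Phi_{n+1}$ is a strict contraction (see \cite{FHS}[Proposition 2.1]). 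Once you make that correction, your analogue of Corollary \ref{c:cvg} and the rest of the argument go through as written.

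Two minor points: your explicit recursion $\alpha_n=(V(n)-\lambda+1-\alpha_{n+1})^{-1}$ is not the paper's $\Phi_n(z)=-(\lambda-V(n)-(1+z)^{-1})^{-1}$ (you have $1-z$ where it should be $(1+z)^{-1}$), and accordingly your fixed points $\tfrac12(-x\pm\rmi\sqrt{x(4-x)})$ do not match the paper's $-\tfrac12\pm\tfrac{\rmi}{2}\sqrt{4/x-1}$. These are harmless once you align conventions, but they are symptoms of the same slip: the scalar case is not a literal specialization of the Dirac formulas with $b_n\in\demi$.
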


Apart from Proposition \ref{p:change},
our presentation is very close to the one of
\cite{FHS}.  We start with the truncated case. Set: 
\[
\alpha_n:=\left\langle \delta_n , \left( \Delta^{(n)}
    +V_{|_{\Z_n}} -\lambda\right)^{-1}\delta_n\right\rangle \in\demi
, 
\] 
where $\Delta^{(n)}$ is the Laplacian on $\Z_n$, see \eqref{e:DeltaNZ}.
As in Proposition \ref{p:rec} we have $\alpha_n=\Phi_n(\alpha_{n+1})$
with 
\[
\Phi_n(
z):=\varphi_{\lambda-V(n),1,1}(z)=-\left(\lambda-V(n)-\left(1+z\right)^{-1}\right)^{-1}
. 
\]
Note that $\Phi_n$ is a contraction of $\demi$. However, unlike in
Proposition \ref{p:cont}, this is not a strict contraction. However,  
$\Phi_n\circ\Phi_{n+1}$ is a strict contraction, see also
\cite{FHS}[Proposition 2.1]. This infers:

\begin{proposition}
Take  $V\in \ell^\infty ({\Z_+},\R)$, then for
all $\lambda\in \demi$ and $(\zeta_n)_n\in\demi^{\Z_+}$ we have 
\[
\dhlim_{n\to\infty}\Phi_0\circ\cdots\circ \Phi_n(\zeta_n)=\alpha_0 .
\]
\end{proposition}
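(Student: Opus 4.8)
The plan is to follow the proof of Corollary \ref{c:cvg}, adapted to the fact that each $\Phi_n$ is only a contraction of $(\demi,d_\demi)$ (Lemma \ref{l:cont} with $a=\lambda-V(n)\in\demi$, $b=c=1$) and not a strict one, and that a single $\Phi_n$ does not even bring two points of $\demi$ within bounded $d_\demi$-distance. As in Propositions \ref{p:dem} and \ref{p:rec} one has $\alpha_n\in\demi$ and $\alpha_n=\Phi_n(\alpha_{n+1})$ for all $n$; iterating the latter gives $\alpha_0=\Phi_0\circ\cdots\circ\Phi_n(\alpha_{n+1})$ with $\alpha_{n+1}\in\demi$, so
\begin{align*}
d_\demi\bigl(\Phi_0\circ\cdots\circ\Phi_n(\zeta_n),\alpha_0\bigr)
&=d_\demi\bigl(\Phi_0\circ\cdots\circ\Phi_n(\zeta_n),\Phi_0\circ\cdots\circ\Phi_n(\alpha_{n+1})\bigr),
\end{align*}
and it suffices to show the right-hand side tends to $0$ as $n\to\infty$, uniformly in $(\zeta_n)_n\in\demi^{\Z_+}$.

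The key observation: since $-(1+z)^{-1}\in\demi$ for every $z\in\demi$ and $\Im(\lambda-V(n+1))=\Im\lambda$, equation \eqref{e:im2} (with $h=\lambda-V(n+1)$) yields $\Phi_{n+1}(\demi)\subset B_{|.|}\bigl(\tfrac{\rmi}{2\Im\lambda},\tfrac{1}{2\Im\lambda}\bigr)=:B_\lambda$, a fixed bounded Euclidean disc whose closure meets $\R$ only at $0$. From this I would draw two consequences, both uniform in $n$. (i) Writing $\Phi_n$ as the rational map $\zeta\mapsto-(1+\zeta)\bigl((\lambda-V(n)-1)+(\lambda-V(n))\zeta\bigr)^{-1}$, its only pole is at $-1+(\lambda-V(n))^{-1}$, which has strictly negative imaginary part, hence stays at positive distance from $\overline{B_\lambda}\subset\overline{\demi}$; since $\Phi_n$ maps $\demi$ into $\demi$ and $\Phi_n(0)=-(\lambda-V(n)-1)^{-1}\in\demi$, the set $\Phi_n(\overline{B_\lambda})$ is a compact subset of $\demi$, and ($V$ being bounded) all of them lie in a single fixed compact $K_\lambda\subset\demi$, namely the continuous image of $[-\|V\|_\infty,\|V\|_\infty]\times\overline{B_\lambda}$ under $(t,w)\mapsto\varphi_{\lambda-t,1,1}(w)$. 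Hence $\Phi_n\circ\Phi_{n+1}(\demi)\subset K_\lambda$, so $d_\demi(\Phi_n\circ\Phi_{n+1}(z_1),\Phi_n\circ\Phi_{n+1}(z_2))\le\eta(\lambda)$ for all $n$ and all $z_1,z_2\in\demi$, where $\eta(\lambda)<\infty$ is the $d_\demi$-diameter of $K_\lambda$. (ii) On $B_\lambda$ itself $\Phi_n$ is a \emph{strict} contraction: running the estimate from the proof of Lemma \ref{l:cont} ($d_\demi(z_1+a,z_2+a)\le\tfrac{C}{C+\Im a}\,d_\demi(z_1,z_2)$ when $\min(|z_1|,|z_2|)\le C$) and using that $|1+w|$ is bounded below on $B_\lambda$ (because $-1\notin\overline{B_\lambda}$), one gets $d_\demi(\Phi_n(w_1),\Phi_n(w_2))\le\delta(\lambda)\,d_\demi(w_1,w_2)$ for $w_1,w_2\in B_\lambda$, with $\delta(\lambda)\in(0,1)$ independent of $n$; consequently each $\Phi_n\circ\Phi_{n+1}$ is a strict contraction of $\demi$ with constant $\delta(\lambda)$ — this is the fact quoted above from \cite{FHS}[Proposition 2.1].

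To conclude, for $n\ge 1$ split $\Phi_0\circ\cdots\circ\Phi_n=(\Phi_0\circ\cdots\circ\Phi_{n-2})\circ(\Phi_{n-1}\circ\Phi_n)$. By (i) the images of $\zeta_n$ and of $\alpha_{n+1}$ under $\Phi_{n-1}\circ\Phi_n$ are at $d_\demi$-distance at most $\eta(\lambda)$, while $\Phi_0\circ\cdots\circ\Phi_{n-2}$, being a composition of $n-1$ contractions that contains $\lfloor(n-1)/2\rfloor$ disjoint consecutive blocks $\Phi_{2j}\circ\Phi_{2j+1}$ (each a $\delta(\lambda)$-contraction by (ii)), shrinks $d_\demi$ by at least $\delta(\lambda)^{\lfloor(n-1)/2\rfloor}$. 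Therefore
\begin{align*}
d_\demi\bigl(\Phi_0\circ\cdots\circ\Phi_n(\zeta_n),\alpha_0\bigr)
&=d_\demi\bigl(\Phi_0\circ\cdots\circ\Phi_n(\zeta_n),\Phi_0\circ\cdots\circ\Phi_n(\alpha_{n+1})\bigr)\\
&\le\delta(\lambda)^{\lfloor(n-1)/2\rfloor}\,d_\demi\bigl(\Phi_{n-1}\circ\Phi_n(\zeta_n),\Phi_{n-1}\circ\Phi_n(\alpha_{n+1})\bigr)
\le\delta(\lambda)^{\lfloor(n-1)/2\rfloor}\,\eta(\lambda),
\end{align*}
which tends to $0$ since $\delta(\lambda)<1$, and this is the assertion. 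The one step requiring real care — and the only genuine departure from Corollary \ref{c:cvg}, where Proposition \ref{p:cont} supplied both needed properties simultaneously — is consequence (i): a single $\Phi_n$ maps $\demi$ onto a region abutting the real axis, so one truly needs the \emph{second} application (of $\Phi_{n-1}$) in order to land in a compact subset of $\demi$, and this is where the explicit image computation via \eqref{e:im2} is used.
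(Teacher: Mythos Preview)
Your proof is correct and follows essentially the same approach the paper indicates: the paper's own proof is just the one-line reference ``See the proof of Corollary \ref{c:cvg} and \cite{FHS}[Theorem 2.3]'', having noted just above that $\Phi_n\circ\Phi_{n+1}$ is a strict contraction by \cite{FHS}[Proposition 2.1]. You have spelled out precisely what those references amount to --- your point (ii) is a derivation of the strict-contraction property of the two-step map, and your point (i) supplies the uniform diameter bound that in the Dirac case came for free from the second estimate of Proposition \ref{p:cont} --- and then run the argument of Corollary \ref{c:cvg} with blocks of length two, exactly as intended.
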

\begin{proof}
See the proof of Corollary \ref{c:cvg} and \cite{FHS}[Theorem 2.3].
\end{proof}
Now unlike in \cite{FHS}[Lemma 4.5] or in \cite{FHS5}[Proposition 3.4]
but as in Proposition \ref{p:born} we use the fixed point of
$\Phi_n\circ\cdots\circ\Phi_{n+\nu-1}$. We obtain: 

\begin{proposition}\label{p:change}
Take $x\in\left( 0,4 \right)$,  $V\in\ell^\infty ({\Z_+},\R )$, and
$\nu\in\Z_+\setminus\{0\}$ with $\lim_{n\to +\infty}V(n)=0$ and $V-\tau^\nu
V\in\ell^1 ({\Z_+},\R )$. Then there exist $x_1,x_2\in\R$ such that 
$x\in (x_1 ,x_2 )$ and $M_1,\e >0$  so that
\[
d_\demi \left( \alpha_0 ,\rmi\right) \leq M_1 
\]
for all $\lambda\in K_{x_1 ,x_2 ,\e}$.
\end{proposition}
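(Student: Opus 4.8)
The plan is to run the argument of Proposition~\ref{p:born} in the simpler setting where, in the notation of that proof, all the $b_n$ and $c_n$ equal $1$, so that only $a_n := \lambda - V(n)$ varies and the relevant homographies are $\Phi_n = \varphi_{\lambda-V(n),1,1}$. Since $\alpha_n = \Phi_n(\alpha_{n+1})$ and, by the Proposition stated just before, $\dhlim_{n\to\infty}\Phi_0\circ\cdots\circ\Phi_n(\zeta_n) = \alpha_0$ for any $(\zeta_n)_n \in \demi^{\Z_+}$, it suffices to produce, for $\lambda$ in a suitable box $K_{x_1,x_2,\e}$ around $x$, points $z_n(\lambda) \in \demi$ that are fixed points of $\Phi_n\circ\cdots\circ\Phi_{n+\nu-1}$, with $\Im z_n(\lambda)$ bounded below and $|z_n(\lambda)|$ bounded above uniformly in $n$ and $\lambda$, and with $|z_{n+\nu}(\lambda) - z_n(\lambda)|$ summable in $n$.

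First I would invoke Lemma~\ref{l:pol} to write
\[
\Phi_n\circ\cdots\circ\Phi_{n+\nu-1}(z) = -\frac{C(\omega_{n,\lambda}) + B_2(\omega_{n,\lambda})z}{B_1(\omega_{n,\lambda}) + A(\omega_{n,\lambda})z},
\]
with $A,B_1,B_2,C$ real polynomials evaluated at $\omega_{n,\lambda} := (a_n,1,1,\ldots,a_{n+\nu-1},1,1)$ and with the denominator never vanishing on $\demi$. As $V(n)\to 0$, one has $\omega_{n,\lambda} \to \omega_{\infty,x} := (x,1,1,\ldots,x,1,1)$ when $\lambda\to x$ and $n\to\infty$. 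A short computation shows that the fixed points of $\varphi_{x,1,1}$, hence of its $\nu$-fold composition, are the roots of $xz^2 + xz + 1 = 0$, namely $-\tfrac12 \pm \tfrac{\rmi}{2}\sqrt{(4-x)/x}$, which are distinct and non-real for $x\in(0,4)$. Consequently $A(\omega_{\infty,x})\neq 0$, and the quantity $4C(\omega)/A(\omega) - \bigl((B_1(\omega)+B_2(\omega))/A(\omega)\bigr)^2$ equals $(4-x)/x>0$ at $\omega_{\infty,x}$. Exactly as in the proof of Proposition~\ref{p:born}, this lets me define on a compact neighbourhood $\Omega_3$ of $\omega_{\infty,x}$ the $\Cc^\infty$ function
\[
Z(\omega) := -\frac{B_1(\omega)+B_2(\omega)}{2A(\omega)} + \frac12\rmi\sqrt{4\frac{C(\omega)}{A(\omega)} - \Bigl(\frac{B_1(\omega)+B_2(\omega)}{A(\omega)}\Bigr)^2},
\]
taking the principal branch of the square root; then $Z(\omega)$ is a fixed point of the above homography, and by compactness there are $\eta_1, M_1>0$ with $\Im Z > \eta_1$ and $|Z|\leq M_1$ on $\Omega_3$, while $Z$ is Lipschitz there.

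Next I would pick $x_1,x_2,\e$ and $n_0\in{\Z_+}$ so that $x\in(x_1,x_2)$ and $\omega_{n,\lambda}\in\Omega_3$ for all $\lambda\in K_{x_1,x_2,\e}$ and $n\geq n_0$, and set $z_n(\lambda) := Z(\omega_{n,\lambda})$, a fixed point of $\Phi_n\circ\cdots\circ\Phi_{n+\nu-1}$ with $\Im z_n(\lambda)>\eta_1$ and $|z_n(\lambda)|\leq M_1$. The Lipschitz bound gives
\[
|z_{n+\nu}(\lambda)-z_n(\lambda)| \leq M_3\|\omega_{n+\nu,\lambda}-\omega_{n,\lambda}\| \leq M_3\sum_{k=0}^{\nu-1}|V(n+k+\nu)-V(n+k)|,
\]
which is summable in $n$ because $V-\tau^\nu V\in\ell^1({\Z_+},\R)$. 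Finally, running the telescoping estimate of Proposition~\ref{p:born} verbatim — starting from $d_\demi(\rmi,\alpha_0) = \lim_n d_\demi(\rmi,\Phi_0\circ\cdots\circ\Phi_{n_0+\nu(n+1)-1}(z_{n_0+\nu n}(\lambda)))$, using that each $\Phi_k$ is a contraction of $\demi$, the fixed-point identity for $z_n(\lambda)$, the bound \eqref{inh}, and Lemma~\ref{l:phiim} (with $b=c=1$ and $|a_k|\leq|\lambda|+\|V\|_\infty$) to control the finitely many initial terms and $d_\demi(\rmi,\Phi_{n_0+\nu-1}(z_{n_0}))$ — yields $d_\demi(\alpha_0,\rmi)\leq M_1$ for all $\lambda\in K_{x_1,x_2,\e}$ after renaming the constant. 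The only substantial step, exactly as in Proposition~\ref{p:born}, is the construction of the smooth fixed-point branch $Z$ near $\omega_{\infty,x}$; here it is essentially routine, since the limiting homography is explicit and the positivity of the discriminant reduces transparently to the condition $x\in(0,4)$.
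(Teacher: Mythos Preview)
Your proposal is correct and follows essentially the same approach as the paper: run the proof of Proposition~\ref{p:born} with $a_n=\lambda-V(n)$, $b_n=c_n=1$, and use the explicit fixed points $-\tfrac12\pm\tfrac{\rmi}{2}\sqrt{(4-x)/x}$ of $\varphi_{x,1,1}$ to locate the smooth branch $Z$ near $\omega_{\infty,x}=(x,1,1,\ldots,x,1,1)$. The paper's own proof is just a two-line sketch saying exactly this, so your write-up is in fact more detailed than what appears there.
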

Recall that $K_{x_1 ,x_2 ,\e}$ is defined in \eqref{e:K}. 

\begin{proof}
This is the same proof as in Proposition \ref{p:born}. We study the
fixed points of $\Phi_n\circ\cdots\circ\Phi_{n+\nu}$ in a
neighbourhood of 
\[
\omega_{\infty ,x}:=(x,1,1,\ldots ,x,1,1) .
\]
The fixed points of $\varphi_{x,1,1}$ are
\[
-\frac{1}{2}\pm\frac{1}{2}\rmi\sqrt{\frac{4}{x}-1},
\]
The rest remains the same.
\end{proof}
Finally Theorem \ref{t:inac} concludes the proof of Theorem
\ref{t:main0N}. 

We turn to the case of the line. As in Lemma \ref{l:rec}, we reduce
the problem to the case of ${\Z_+}$ because 
\begin{align*}
\left|\left\langle \delta_0 , \left( \Delta^{( \Z )}
    +V -\lambda\right)^{-1}\delta_0\right\rangle\right|&=\left|\left(\lambda
  -V(0)-(1+\alpha_\lambda )^{-1}- (1+\alpha^\prime_\lambda
  )^{-1}\right)^{-1}\right| 
\\
&\leq \frac{1}{\Im(-(1+\alpha_\lambda )^{-1})},
\end{align*}
where
\begin{align*}
& \alpha_\lambda := \left\langle \delta_1 , \left( \Delta^{( \Z_1 )}
    +V_{|_{\Z_1}} -\lambda  \right)^{-1}\delta_1\right\rangle
\in\demi \\ 
& \alpha^\prime_\lambda := \left\langle \delta_{-1} , \left( \Delta^{(
      - \Z_1 )} +V_{|_{-\Z_{-1}}} -\lambda
  \right)^{-1}\delta_{-1}\right\rangle \in\demi , 
\end{align*} 
with $\Delta^{( \Z_1 )}$ and $\Delta^{( -\Z_1 )}$ the Laplacian on
$\Z_1$ and $-\Z_1$ respectively. This gives Theorem \ref{t:main0}.

\end{document}